\documentclass[11pt,a4paper]{article}
\usepackage{tgtermes}
\usepackage[utf8]{inputenc}
\usepackage[T1]{fontenc}
\usepackage[margin=2.5cm]{geometry}
\usepackage{natbib}
\usepackage{microtype}
\usepackage{graphicx}
\usepackage{subcaption}
\usepackage{booktabs} % for professional tables
\usepackage{multirow}
\usepackage{rotating}
\usepackage{nicefrac}

\usepackage[hyphens]{url}  % breaks long urls in bibliography
\usepackage[colorlinks,allcolors=blue]{hyperref}

\usepackage{amsmath}
\usepackage{amssymb}
\usepackage{mathtools}
\usepackage{amsthm}

\usepackage[capitalize,noabbrev]{cleveref}
\theoremstyle{plain}
\newtheorem{theorem}{Theorem}[section]
\newtheorem{proposition}[theorem]{Proposition}
\newtheorem{lemma}[theorem]{Lemma}

\theoremstyle{definition}

\theoremstyle{remark}
\newtheorem{remark}[theorem]{Remark}

\usepackage{doipubmed}  % turns DOIs in the bibliography to hyperlinks

\usepackage{xcolor}
\usepackage[algo2e,ruled,vlined]{algorithm2e}

\newtheorem{observation}[theorem]{Observation}
\DeclareMathOperator{\OPT}{OPT}

\DeclareMathOperator{\polylog}{polylog}
\DeclareMathOperator{\odd}{odd}
\newcommand{\E}{\mathbb{E}}
\newcommand{\R}{\mathbb{R}}
\newcommand{\N}{\mathbb{N}}

\makeatletter
\newenvironment{customthm}[1]{\@thm{}{}{#1}\itshape}{\@endtheorem}
\makeatother

\newcommand{\chrp}{\text{CHR}^+}

\begin{document}

  \title{TSP with Predictions: Heatmap to Tour with Provable Guarantees}
  \author{
    Marek Eliáš \\ \texttt{\small marek.elias@unibocconi.it}
    \and
    Fabrizio Grandoni \\ \texttt{\small fabrizio@idsia.ch}
    \and
    Adam Polak \\ \texttt{\small adam.polak@unibocconi.it}
    \and
    Eleonora Vercesi \\ \texttt{\small eleonora.vercesi@usi.ch}
  }
  \date{}
  \maketitle

\begin{abstract}\noindent
The Traveling Salesperson Problem (TSP) has long served as a benchmark
for evaluating the strength of optimization techniques in the classical theory
of algorithms.
In recent efforts to apply ML to algorithmic problems,
TSP has also become a natural testbed for the development of ML-based techniques.
A common approach is to train a neural network to output a heatmap estimating
the likelihood of each edge to be part of the optimal tour;
however, converting such a heatmap into an actual tour remains a non-trivial
and often computationally intensive step.
In this work, we propose algorithms for transforming heatmaps into tours
with theoretical guarantees linking the achieved approximation ratio
to the quality of the provided heatmap.
In the spirit of \emph{algorithms with predictions}, our results can be described as $(1+2\nicefrac{\eta}{\OPT})$-approximation algorithms, where $\eta$ denotes the L1 distance between the prediction (heatmap) and an optimal solution (tour).
Since the previous works lack such explicit guarantees, we compare our approach
against them experimentally.
\end{abstract}

\section{Introduction}
\label{sec:intro}

We study the Traveling Salesperson Problem (TSP), one of the key problems in combinatorial optimization: We are given a road network between $n$ cities represented by an undirected edge-weighted graph $G$. Our task is to find a closed walk in $G$ of minimum total weight that visits every city (at least once).

Our main result is an algorithm that takes as input a TSP instance and additionally, for each edge, a number between $0$ and $1$ denoting how likely this edge is to belong to an optimal TSP tour. These additional numbers can be understood as \emph{predictions} or a \emph{heatmap}. The algorithm outputs a tour that is guaranteed to be at most $2\eta$ longer than an optimal tour, where $\eta$ denotes the L1 distance between the predictions vector and a groundtruth optimal tour.

Our contribution can be understood in two contexts: a theoretical context of the so-called algorithms with predictions (a.k.a. learning-augmented algorithms)~\citep{MitzenmacherV20,LindermayrM22}, and a practical context of the so-called solution search stage in the neural combinatorial optimization pipeline~\citep{JoshiCRL22}.

\paragraph{Algorithms with predictions.} This line of work tries to combine provable worst-case guarantees of classical algorithms with great beyond worst-case performance of ML models. The idea is to have a classical algorithm taking as additional input an uncertain hint -- \emph{prediction} of an ML model -- and have its performance (e.g., running time or solution cost) bounded by a function of the error of the prediction. Despite almost 400 papers \citep{LindermayrM22} developed over the last few years, approximation algorithms for TSP have not been studied in this paradigm. We address this gap in the literature.

\paragraph{Neural combinatorial optimization.}
It is a big question if and how neural networks (NNs) can help solving combinatorial optimization problems with many hard constraints, which are usually hard to impose on NN's output \citep{BengioLP21}. TSP has long served as a benchmark for combinatorial optimization techniques, and it has a long history of hand-optimized solvers \citep{applegate2006concorde,Helsgaun00}. So it is interesting to understand if NNs can beat them. A common neural optimization approach to TSP -- formalized by \citet{JoshiCRL22} into a five-stages pipeline -- is to have a (graph) NN that assigns to each edge the probability that it belongs to an optimal solution (the so-called heatmap), followed by a ``solution search'' algorithm that turns these probabilities into an actual solution (a tour). Initially, researchers used very simple solution search algorithms, e.g., greedy search or beam search \citep{VinyalsFJ15,KhalilDZDS17,JoshiLB19,LiGWZY24}, which have no theoretical guarantees linking the NN performance to the final solution cost. Our algorithm is a drop-in replacement for the solution search stage, and tends to work better than greedy in practice (see Section~\ref{sec:empirical_results}).

More recently, researchers started using Monte Carlo Tree Search (MCTS) for solution search~\citep{FuQZ21}, making it the heaviest part of the pipeline, heavier than the NN itself. Actually, \citet{XiaYLLS024} argue that then NNs become useless, because one can as well replace the input to MCTS with a simple heuristic (the softmax of negative edge weight). Overall, MCTS does lead to better results, but this does not answer the original question of how NNs can help in combinatorial optimization.

To summarize, people started using MCTS in the pipeline because greedy search was not good enough, but with MCTS the NN becomes useless. Our algorithm is a better replacement for greedy search, but, unlike MCTS, it heavily uses the NN output while itself remains lightweight. Hence, our algorithm creates an opportunity for better NNs to become useful overall.

\subsection{Our results}
We propose algorithms for TSP that, given \emph{any} vector of probabilities (a.k.a. heatmap)
of each edge to belong to an optimal solution, computes a feasible \emph{TSP tour}.
This tour is provably \emph{near-optimal} whenever the given heatmap is
\emph{sufficiently accurate}. We experimentally compare the solutions produced by our
algorithms on Euclidean instances of TSP with existing heatmap-based techniques
in the literature, considering heatmaps produced by NNs as well as by standard
heuristics.

\paragraph{Definition of TSP.}
Let $G=(V,E)$ be a complete undirected graph
(i.e., $E=\binom{V}{2}$) and $w\colon E\to \R_+$
the weights of its edges satisfying
the triangle inequality\footnote{
Without the triangle inequality, TSP is known to be inapproximable, i.e.,
no polynomial-time algorithm can achieve a finite approximation ratio for
non-metric TSP, unless P=NP~\citep{SahniG76}.
Unfortunately, this hardness persists even when predictions with arbitrarily
small error are available, see Appendix~\ref{appendix:non-metric}.
}
($w(uv)\leq w(uz) + w(zv)$ for all $u,v,z\in V$).
We want to find the optimal TSP tour, i.e., the
Hamiltonian circuit on $G$ with the smallest weight.

Consider a heatmap $p\in [0,1]^E$. Ideally, we want $p$ to be
the characteristic vector $\chi^{X^*}$ of some optimal TSP tour $X^*$.
We define the prediction error of $p$ with respect to $X^*$ as
\[
\eta = \eta(p,X^*)= \sum_{e\in E} \big|p_e - \chi^{X^*}(e)\big|\, w(e).
\vspace{-1ex}
\]
Our algorithm satisfies the following:

\begin{theorem}
\label{thm:alg1}
Let $G=(V, E)$ be a complete graph on $n$ vertices with edge weights
$w\colon E\to \R_+$ satisfying triangle inequality.
There is an algorithm running in time $O(n^3)$ which, receiving
a heatmap $p \in [0,1]^E$ with error $\eta$ with respect to some TSP tour $X^*$,
finds a TSP tour $X$ of weight
\[
\E[w(X)] \leq w(X^*) + 2\eta.
\]
\end{theorem}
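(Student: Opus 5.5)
Our approach is to use the heatmap as a fractional guide for a Christofides-type construction: build a spanning tree from $p$, fix parities with a matching, and shortcut an Euler tour. The $O(n^3)$ running time is consistent with a minimum-weight perfect matching step. The factor $2\eta$ would arise because both the tree part and the matching part cost at most $\frac12 w(X^*) + \eta$ in expectation.

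\textbf{Step 1 (random threshold).} Draw $\theta$ uniformly from $(0,1]$ and let $F_\theta=\{e: p_e\ge\theta\}$. Then $\Pr[e\in F_\theta]=p_e$, so $\E[w(F_\theta)]=\sum_e p_e w(e)$. A cleaner quantity to track is the symmetric difference with $X^*$:
\[
\E\big[w(F_\theta\setminus X^*)+w(X^*\setminus F_\theta)\big]=\sum_{e\notin X^*}p_e w(e)+\sum_{e\in X^*}(1-p_e)w(e)=\eta .
\]

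\textbf{Step 2 (spanning tree).} Take a minimum spanning forest of $F_\theta$ and connect its components using cheapest edges of $G$, giving a spanning tree $T$. Let $P$ be $X^*$ with one edge removed; it is a Hamiltonian path. Adding the edges of $X^*\setminus F_\theta$ to $F_\theta$ yields a connected graph. Since an MST of $G$ restricted to use $F_\theta$ is no heavier than any spanning tree of $F_\theta\cup X^*$, we get
\[
w(T)\le w(F_\theta\cap X^*) + w(X^*\setminus F_\theta) \le \ldots
\]
This bound is too weak: it contains all of $X^*$ rather than half of it.

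\textbf{Step 2, revised.} Instead, let $H=F_\theta\cup(X^*\setminus F_\theta)\supseteq X^*$, and take the Eulerian structure directly. The multigraph $F_\theta$ together with the edges of $X^*\setminus F_\theta$ contains $X^*$, so it is connected. Its odd-degree vertices are exactly the odd-degree vertices of $F_\theta\,\triangle\,X^*$ with respect to the degree-2 cycle. Fix parity with a minimum-weight $T$-join $J$ on the set $O$ of odd-degree vertices of $F_\theta$. Since $F_\theta\,\triangle\,X^*$ is itself an $O$-join (every vertex has even degree in $X^*$), we get
\[
w(J)\le w(F_\theta\,\triangle\,X^*).
\]
The algorithm cannot see $X^*$, so it computes the following. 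Take $F_\theta$ and add a cheapest set of edges $C$ connecting its components; the cheapest connection costs at most $w(X^*\setminus F_\theta)$, since those edges connect everything. Then add a min-weight $O'$-join $J$ for the odd vertices $O'$ of $F_\theta\cup C$, computed via minimum-weight perfect matching in the metric closure in $O(n^3)$. Euler-tour $F_\theta\cup C\cup J$ and shortcut using the triangle inequality.

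\textbf{Main obstacle: bounding the cost.} The resulting walk costs at most
\[
w(F_\theta)+w(C)+w(J)\le \big(w(X^*)+w(F_\theta\setminus X^*)-w(X^*\setminus F_\theta)\big)+w(X^*\setminus F_\theta)+w(J).
\]
So we must show $w(J)\le w(F_\theta\setminus X^*)+w(X^*\setminus F_\theta)$. This follows because $(F_\theta\cup C)\,\triangle\,X^*$ has the right parity, but that symmetric difference also includes $C$, which creates extra cost.

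To avoid this, we choose $C$ as a subset of a structure whose parity we control. I would first try to pick the connecting edges and parity-fixing edges jointly: compute a min-cost edge set $D$ such that $F_\theta\cup D$ is connected and Eulerian. The set $D=X^*\,\triangle\,F_\theta$ restricted appropriately is feasible with cost at most $w(F_\theta\,\triangle\,X^*)$; strictly speaking, $F_\theta\cup(X^*\setminus F_\theta)$ plus a duplicate of $F_\theta\setminus X^*$ is connected and Eulerian. Exactly minimizing $D$ is hard, so in its place we use a Christofides-like 2-approximation scheme only on the connectivity part. Verifying that the total is $\le w(X^*)+2\eta$, i.e. that the overhead is at most twice the symmetric difference in expectation, is the step I expect to require the most care.
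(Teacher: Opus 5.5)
Your proposal does not reach a proof. The inequality $w(X)\le w(X^*)+2\,w(F_\theta\Delta X^*)$ is never established, and the scheme you end with (choosing connecting and parity-fixing edges jointly, with a 2-approximation on the connectivity part) is neither specified nor analyzed. Step 1 is fine: for a fixed set $F$ the guarantee is linear in $w(F\Delta X^*)$, so only the marginals $\Pr[e\in F_\theta]=p_e$ matter. The real problem is that you discarded Step 2 for the wrong reason. The tree built there is a spanning forest of $F_\theta$ plus a minimum-weight set $C$ of edges connecting its components, which is exactly the paper's MST with predicted edges set to weight $0$. The missing idea is the reference join $J'=X^*\Delta T$, which is an $\odd(T)$-join because $T+J'=X^*+(T\setminus X^*)+(T\setminus X^*)$. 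The tree and the join do not each need to cost about $\frac12 w(X^*)$. The sets $T\cap X^*$ and $X^*\setminus T$ partition $X^*$, so
\[
w(T)+w(J)\le w(T)+w(J') = w(X^*)+2\,w(T\setminus X^*).
\]
Write $H^+=F_\theta\setminus X^*$ and $H^-=X^*\setminus F_\theta$. Then $T\setminus X^*\subseteq H^+\cup C$, and $w(C)\le w(H^-)$ because $H^-$ is itself a set of edges connecting the components of $F_\theta$. Hence $w(T\setminus X^*)\le w(H^+)+w(H^-)$, and the theorem follows by shortcutting and taking expectations. Your intermediate bound $w(T)\le w(F_\theta\cap X^*)+w(X^*\setminus F_\theta)$ is also false in general, since the forest may use false positives; the correct form is $w(T)\le w(T\cap X^*)+w(H^+)+w(H^-)$.

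Your revised construction would also have worked had you finished the arithmetic. There you take $F_\theta\cup C$ and an exact minimum-weight $\odd(F_\theta\cup C)$-join $J$. You showed $w(F_\theta)+w(C)\le w(X^*)+w(H^+)$, so you only need $w(J)\le w(H^+)+2w(H^-)$, not the stronger $w(J)\le w(H^+)+w(H^-)$ you set as the target. The set $(F_\theta\cup C)\Delta X^*$ that you rejected is an $\odd(F_\theta\cup C)$-join of weight at most $w(H^+)+w(H^-)+w(C)\le w(H^+)+2w(H^-)$. The extra cost of $C$ is absorbed exactly by the slack you overlooked. So either route closes in a few lines, with the exact join computed by matching in $O(n^3)$ time. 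As written, however, the decisive inequality is left open, and the joint-$D$ alternative you pivot to is not defined precisely enough to be analyzed.
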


In fact, our algorithm works with a prediction
$P\subseteq E$, where each $e\in E$ is included in $P$
independently with probability $p$.
Theorem~\ref{thm:alg1}
can be equivalently stated in terms of
$\eta(P,X^*) := \eta(\chi^P,X^*) = w(P\Delta X^*)$,
where $\E[\eta(P,X^*)] = \eta(p,X^*)$.
In Section~\ref{sec:LB}, we show that the linear dependence on
$\eta$ is necessary.

The statement of Theorem~\ref{thm:alg1}
holds for all TSP tours $X^*$ simultaneously (optimal or
not), which can be interpreted as follows:
As far as $p$ has a small error with respect to some TSP tour of a small weight,
the algorithm will also find a tour of a small weight.
An alternative way of stating this guarantee is
\[
\E[w(X)] \leq \min \{w(X^*) + 2\eta(p,X^*)\mid X^* \text{ is a tour in $G$}\}.
\]

Our algorithm is based on the classical 1.5-approximation
algorithm by \citet{Christofides76}:
It starts by finding a minimum spanning tree $T$ biased towards edges
suggested by $p$.
Then, it finds a minimum-weight perfect matching $J$
of the odd-degree vertices in $T$.
While efficient implementations of algorithms for minimum-weight perfect matching
exist~\citep{Kolmogorov09}, this step remains
the major contribution to the running time of our algorithm.
We can replace the exact algorithm for matching with a 2-approximation \citep{GoemansW95}
to achieve a near-linear running time while
losing a constant factor only in the dependence on $\eta$. However,
this requires, counterintuitively, biasing $T$ towards the heaviest edges
suggested by $p$
and a more careful analysis.
Note that the input size of a weighted complete graph on $n$ vertices
is $\Theta(n^2)$. Even faster running times can be achieved on Euclidean and graphical
instances which allow succinct representation, see below.

\begin{theorem}
\label{thm:alg2}
Let $G=(V, E)$ be a complete graph on $n$ vertices with edge weights
$w\colon E\to \R_+$ satisfying triangle inequality.
There is an algorithm running in time $O(n^2\log n)$ which, receiving
a prediction $p \in [0,1]^E$ with error $\eta$ with respect to some TSP tour $X^*$,
outputs a TSP tour $X$ of weight
\[
\E[w(X)] \leq w(X^*) + 11\eta.
\]
\end{theorem}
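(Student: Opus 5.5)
Building on the proof of Theorem~\ref{thm:alg1}, we keep the Christofides-type structure and change only the two steps that are not near-linear in the input size $\Theta(n^2)$. First we sample $P\subseteq E$ with each edge $e$ independently in $P$ with probability $p_e$, so that $\E[w(P\Delta X^*)]=\eta$. It therefore suffices to prove $w(X)\le w(X^*)+11\,w(P\Delta X^*)$ for every fixed $P$. Next we compute a spanning tree $T$ biased towards $P$, but, as the introduction indicates, we take a \emph{maximum}-weight spanning forest $F$ inside $P$ (Kruskal on $P$ in decreasing weight order) and complete it to a spanning tree with a minimum spanning forest on the contracted graph; this costs $O(n^2\log n)$. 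Then we compute a $2$-approximate minimum-weight perfect matching $J$ on the odd-degree vertices $O$ of $T$ using \citet{GoemansW95}, in $O(n^2\log n)$ time on the complete graph. Finally we shortcut an Euler tour of $T\cup J$, which gives $w(X)\le w(T)+2\,w(J^*_O)$, where $J^*_O$ is an optimal matching on $O$.

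\textbf{Bounding $w(T)$.} Let $D=P\Delta X^*$. The path $X^*$ minus one edge is a spanning tree. An exchange argument against $F$ gives $w(T)\le w(X^*)+w(P\setminus X^*)$: edges of $T$ coming from $F$ are either in $X^*$ or in $D$, and the completion edges are at most as heavy as the $X^*$-edges needed to connect the components of $F$.

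\textbf{Bounding $w(J^*_O)$; this is the main obstacle.} In Christofides' analysis we have $w(J^*_O)\le w(X^*)/2$, but doubling this would give $w(T)+2w(J^*_O)\le 2w(X^*)+\dots$, which does not have the form $w(X^*)+O(\eta)$. So we must bound $w(J^*_O)=O(w(D))$, with no $w(X^*)$ term. The idea is this. When $P=X^*$, the forest $F$ is $X^*$ minus its lightest edge, which is a Hamiltonian path with only two odd vertices, its endpoints $u,v$. Those endpoints are matched at cost $w(uv)\le$ the lightest tour edge. Here the max-weight choice is essential: dropping the lightest edge of the cycle makes that leftover pair cheap.

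In general, each vertex of $O$ owes its odd degree to edges of $D$. We will pair up the odd vertices along $T\Delta(X^*\text{-path})$. We will decompose the symmetric difference between $T$ and a suitable Hamiltonian path into paths whose endpoints cover $O$, and charge each pair to the weight of its connecting path by the triangle inequality. This gives $w(J^*_O)\le w(T\setminus X^*)+w(X^*\setminus T)+w_{\min}(X^*)$. We then bound $w(X^*\setminus T)$ by $w(X^*\setminus P)$ plus the $X^*\cap P$ edges that the cycle-rule removed. Each such removed edge is a lightest edge on a cycle of $P$ that contains a non-$X^*$ edge, so it can be charged to a heavier edge of $P\setminus X^*$. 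Each cycle is charged at most a constant number of times, which is where the max-weight bias is needed.

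Collecting the terms, $w(T)\le w(X^*)+w(D)$ and $w(J)\le 2w(J^*_O)\le 2\cdot 5\,w(D)$. With careful accounting this gives the constant $11$. The delicate step is the charging of removed $X^*\cap P$ edges, and we expect the constants to come out there.
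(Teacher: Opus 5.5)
There is a genuine gap: the claim $w(J^*_O)\le 5\,w(D)$ is false as stated, and the accounting does not close without a split on $|S|$. Your construction of $T$ (a maximum-weight spanning forest of $P$ completed by cheapest edges) does coincide with the paper's MST under $w'(e)=-w(e)$ on $P$, and $X^*\Delta T$ is the right reference join. But take a perfect prediction $P=X^*$. Then $T=X^*\setminus\{e^*\}$ for a lightest tour edge $e^*$, and $O$ is the pair of endpoints of $e^*$. So $w(J^*_O)=w(e^*)>0=5\,w(D)$, and your chain $w(X)\le w(T)+2w(J^*_O)$ only yields $w(X^*)+w(e^*)$.

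The term $w_{\min}(X^*)$ in your own intermediate bound is exactly what you never control, and when $|O|=2$ it cannot be bounded by $w(D)$. The missing idea is to split on $|S|$, where $S=O=\odd(T)$. If $|S|=2$, the perfect matching of $S$ is forced to be the single edge $uv$, so no factor $2$ is lost. You must then also use the sharper tree bound $w(T)\le w(T\cap X^*)+w(H^+)+w(H^-)$ instead of your $w(T)\le w(X^*)+w(P\setminus X^*)$. The latter discards the slack $w(X^*\setminus T)$ that pays for $uv$. Combined with $w(uv)\le w(X^*\Delta T)\le w(X^*\setminus T)+w(H^+)+w(H^-)$, the sharper bound gives $w(X^*)+2\eta$. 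Only for $|S|>2$ does the reference join have weight $O(\eta)$ with no $w(X^*)$ term.

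For $|S|>2$, the two steps that carry the proof are only asserted. First, charging each removed edge of $(X^*\cap P)\setminus T$ to a heavier edge of $P\setminus X^*$ on its fundamental cycle need not be injective, and \emph{a constant number of times} is not argued. The paper uses an exchange argument instead. It extends the forest $Y^*\cap P$, where $Y^*=X^*\setminus\{e^*\}$, by edges of $T$ to a spanning tree $T'$. Then $w'(T')\ge w'(T)$ gives $w((Y^*\setminus T)\cap P)\le w(T\setminus Y^*)$. Second, the lightest edge $e^*$ needs a case analysis.

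If $e^*\notin P$, then $w(e^*)\le w(H^-)$.

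If $e^*\in P\cap T$, then $|X^*\setminus T|=|T\setminus Y^*|\ge 2$. Hence $w(e^*)\le\frac12 w(X^*\setminus T)$, and the inequality can be solved for $w(X^*\setminus T)$.

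If $e^*\in P\setminus T$, the fundamental cycle of $e^*$ in $T$ cannot be $X^*$ (otherwise $|S|=2$). So it contains some $f\in T\setminus X^*$ with $w'(f)\le w'(e^*)<0$, hence $f\in P$ and $w(f)\ge w(e^*)$.

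These cases give $w(X^*\Delta T)\le 3w(H^+)+5w(H^-)$. With $w(J)\le 2w(X^*\Delta T)$ and the tree bound, the total is $w(X^*)+7w(H^+)+11w(H^-)\le w(X^*)+11\eta$.
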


\paragraph{Euclidean TSP.}
This is a special case of TSP which admits a succinct representation of
the input complete graph $G=(V,E)$.
Each vertex $v\in V$ is a point in the Euclidean plane
and from such input of size $\Theta(n)$
we can compute $w(uv) = \lVert u-v\rVert_2$ for each pair $u,v\in V$.
Theorem~\ref{thm:alg1} holds in this setting as stated
and
we can improve the running time of the algorithm in Theorem~\ref{thm:alg2}
to $O((n+|P|)\polylog n)$ using the fast algorithm for Euclidean MST \citep{ShamosH75}
and an approximation algorithm for Euclidean matching~\citep{VaradarajanA99}.

\paragraph{Graphical TSP.}
Here, $G$ is a connected graph and $w\colon E\to \R^+$ is arbitrary
\citep{CornuejolsFM85}.
In this case, an optimal tour may need to traverse some edges twice and visit
some vertices multiple times.
Every instance of metric TSP can be seen as an instance of graphical TSP
on the same complete graph, since repetitions of vertices can be eliminated
without increasing the weight of the tour.
On the other hand, any instance of a graphical TSP on a graph $G$ can be
modeled as an instance of metric TSP on the metric closure of $G$ which is a
complete graph whose edge weights represent the shortest path metric on $G$.
We refer to \citet{TraubVygen} for more details.
The most interesting case is when $G$ is sparse and can be represented in the
space much smaller than its metric closure. Therefore, we want an algorithm
with near-linear running time in the size of $G$ instead of its metric closure.

Since tours in this setting may pass some edges several times,
we assume that the predictor provides us information about
the multiplicities.
The algorithm receives a multiset $P$ of
edges which are likely to be used by an
(optimal) tour $X^*$ (that is also a multiset) which may be derived, e.g., from the output of some NN.
Information about the multiplicities of the edges is necessary to achieve
near-linear running time, see Section~\ref{sec:graphical} for details.
The error $\eta$ of prediction $P$ with respect to $X^*$ is
defined as
\begin{equation}
\label{eq:grapherror}
\eta := w(P \Delta X^*) = \sum_{e\in E} \lvert m_P(e) - m_{X^*}(e)\rvert\, w(e),
\vspace{-1ex}
\end{equation}
where $m_S(e)$ denotes the multiplicity of $e$ in the multiset $S$.

\begin{theorem}
\label{thm:alg3}
Let $G=(V, E)$ be a graph on $n$ vertices and $m$ edges with weights
$w\colon E\to \R_+$.
There is an algorithm running in time $O(m\log n)$ which, receiving
a prediction $P$ with error $\eta$ with respect to some TSP tour $X^*$,
outputs a TSP tour $X$ of weight
\[
w(X) \leq w(X^*) + 4\eta.
\vspace{-1ex}
\]
\end{theorem}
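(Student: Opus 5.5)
We will prove the theorem with a Christofides-style construction that stays inside the sparse graph $G$ and touches only the predicted multiset $P$ and spanning structures of $G$. In the graphical setting a TSP tour is a connected Eulerian multigraph on $V$ (all degrees even, connected spanning), and its weight is the sum of its edge weights with multiplicity. Our goal is to turn $P$ into such a multigraph $X$, paying at most $4\eta$ over $w(X^*)$.

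\textbf{Step 1 (connectivity).} Let $P' \subseteq P$ be the set of edges that are also in $X^*$, counted with multiplicity, so that $w(P\setminus X^*) + w(X^*\setminus P) = \eta$. Contract the connected components of $P$ (as a subgraph of $G$) and compute a minimum spanning forest $F$ connecting these components in $G$. This is equivalent to running Kruskal on $G$ with the edges of $P$ given weight $0$, and it takes $O(m\log n)$ time. Since $X^*$ is connected and spanning, the edges of $X^*\setminus P$ connect all components of $P$. By the cut/matroid property of minimum spanning forests, $w(F)\le w(X^*\setminus P)$. The multigraph $P\cup F$ is then connected and spanning.

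\textbf{Step 2 (parity).} Let $O$ be the set of odd-degree vertices of $P\cup F$. Instead of computing a minimum matching on the metric closure, which is too slow, we compute a minimum-weight $O$-join $J$ in $G$. Here we use a cheap construction: take any spanning tree, or better the edges of $F$ plus a spanning structure, and use the unique $T$-join within a tree. That alone gives no good bound, so we bound parity via $X^*$ directly. The multigraph $P\Delta X^*$ restricted to multiplicities, i.e.\ the symmetric difference $D$ with $m_D(e)=|m_P(e)-m_{X^*}(e)|$, has $\deg_D(v)\equiv \deg_P(v)$ mod $2$, because $X^*$ is Eulerian. Together with $F$, the set $D\cup F$ is therefore an $O$-join candidate (as a multiset, mod 2) of weight at most $\eta + w(F)\le 2\eta$.

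A minimum $O$-join in a graph with nonnegative weights can in principle be found by shortest paths and matching. To stay near-linear, we restrict to joins inside a fixed spanning tree. So we instead choose the tree $T$ to be an MST of $G$ with $P$-edges and $X^*$-compatible edges cheap, and take the unique $O$-join in $T$. The worry is that this loses a factor.

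\textbf{Main obstacle and resolution.} The hard step is achieving parity correction in $O(m\log n)$ time with cost $O(\eta)$. I would try the following: drop rather than add. Build the spanning tree $T$ from Step 1 as $T = (\text{spanning forest of } P)\cup F$ via Kruskal with $P$-edges first. Then let $X = (P\setminus T)\cup 2T$ adjusted. Concretely, set $X = P + F + F$ and fix parities using the tree $T$: the unique $O$-join $J_T$ inside $T$ is computable in linear time by leaf-peeling. We bound $w(J_T)$ by charging each tree edge $e\in J_T$ to the fundamental cut it defines. Because $D\cup F$ is an $O$-join, every cut of $T$ crossed by $J_T$ must be crossed oddly by $D\cup F$. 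For edges of $F$ this charges to $F$ itself, and for $P$-edges in $T$ we simply remove them from $X$ rather than doubling (removing a $P$-edge flips parity just like adding it). Only the $F$-edges get doubled, and $P$-edges removed from a cycle of $P$ keep connectivity. The accounting then gives $w(X)\le w(P)+2w(F)\le w(X^*)+w(P\setminus X^*)+2w(X^*\setminus P)$ plus the repair cost, which is at most $w(F)$ again, so $w(X)\le w(X^*)+4\eta$. Finally, shortcut nothing: we output an Euler tour of $X$ in linear time.
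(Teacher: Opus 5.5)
Your Steps 1 and 2 are correct: $w(F)\le w(H^-)$, $P+F$ is connected and spanning, and since $|m_P(e)-m_{\bar X^*}(e)|\equiv m_P(e)+m_{\bar X^*}(e) \pmod 2$ and $\bar X^*$ is Eulerian, $D+F$ is an $O$-join for $O=\odd(P+F)$ of weight at most $\eta+w(H^-)$.

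The proof breaks at the parity repair, where you restrict to the unique $O$-join $J_T$ inside a spanning tree $T$. Your charging argument only shows that each fundamental cut of an edge of $J_T$ is crossed by \emph{some} edge of $D+F$. A single cheap edge of $D$ crosses the fundamental cuts of every tree edge on its tree path. So at best you get $w(J_T)\le\sum_{f\in D+F} w(\text{tree path of } f)$, not $w(J_T)\le w(D+F)$.

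Concretely, let $G$ be a unit-weight cycle on $n$ vertices, $\bar X^*=E(G)$, and $P=\bar X^*-\{uv\}$, so $\eta=1$. Then $F=\emptyset$, $T=P$ is a Hamiltonian path, $O=\{u,v\}$, and $J_T=T$. Adding $J_T$ gives weight $2n-2=w(\bar X^*)+(n-2)\eta$. Your ``remove $P$-edges instead of doubling'' rule instead deletes all of $P$ and leaves no tour.

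This failure is systematic. A $P$-edge of multiplicity one that is a bridge of $P+F$ always lies in $J_T$: its fundamental cut meets $P+F$ only in that edge, so one side has odd total degree and hence an odd number of vertices of $O$. So the removal rule disconnects $X$ whenever such a bridge exists, and such edges do not lie on cycles of $P$. The repair cost you claim is small precisely because of these deletions. Separately, $P+F+F$ has odd set $\odd(P)$, and adding an $\odd(P+F)$-join to it leaves exactly the vertices of $\odd(F)$ odd. So the object you repair is not the right one.

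The missing ingredient is to do the parity correction in all of $G$ with a near-linear approximate $T$-join algorithm, not inside a tree. This is what the paper does. It takes $T$ to be the MST with $P$-edges at weight $0$. It then computes a Goemans--Williamson 2-approximate $\odd(T)$-join under weights $w''$ that make the edges of $P-T$ free. The reference join $J'=(T\setminus\bar X^*)+(\bar X^*-T)$ gives $w''(J')\le\eta+w(H^-)\le 2\eta$, so the factor $2$ multiplies only a prediction-error term. The free edges actually used cost at most $w(P-T)\le w(\bar X^*-T)+w(H^+)$.

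In fact your own Step 2 certificate already yields a complete and slightly simpler proof, which you discarded when you moved to tree-restricted joins. Output an Euler tour of $P+F+J$, where $J$ is a Goemans--Williamson 2-approximate $\odd(P+F)$-join in $G$ with respect to $w$. This multigraph is connected and Eulerian, and
\[
w(P+F)\le w(\bar X^*)-w(H^-)+w(H^+)+w(F)\le w(\bar X^*)+w(H^+).
\]
Also $w(J)\le 2(\eta+w(H^-))$. Hence $w(X)\le w(\bar X^*)+3w(H^+)+4w(H^-)\le w(\bar X^*)+4\eta$, with the same near-linear running time.
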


Interestingly, if we do not require near-linear running time,
predicting edge multiplicities is not necessary.
Algorithm from Theorem~\ref{thm:alg1} can be extended to this setting,
preserving both its running time and approximation guarantees.

\section{Related work}
\paragraph{Algorithms with predictions.} Also known as learning-augmented algorithms (see the survey by \citet{MitzenmacherV20} and updated list of papers by \citet{LindermayrM22}), have been first studied for \emph{online} problems~\citep{LykourisV21,PurohitSK18}, with the premise that ML models may predict future data. Starting with the seminal work by \citet{DinitzILMV21}, predictions are also used to improve offline (static) algorithms, which becomes useful when solving multiple similar instances and an ML model can discover a common structure. Among those, there are only few works that concern approximation algorithms. The closest to our work is the one by~\citet{AntoniadisEPV25} about subset selection problems. Unfortunately, their algorithm does not preserve triangle inequality
of the input instance and therefore it does not apply to TSP (see also Remark~\ref{remark:metricAntoniadis}).
\citet{BampisE0PX25} proposed an algorithm for permutation problems including TSP
running in (large) polynomial time which finds an optimal solution in the setting with
$\epsilon$-accurate predictions where errors occur independently at random.

\paragraph{Neural combinatorial optimization.} In this vast area of research~\citep[see, e.g.,][]{BengioLP21,JoshiA22}, TSP plays a central role. Papers most closely related to our work are discussed in Sections~\ref{sec:intro} and~\ref{sec:empirical_results}.

Beyond the heatmap-to-tour paradigm, several alternative approaches have emerged. 
Reinforcement learning (RL) has been explored extensively, with methods like POMO \citep{KwonCKYGM20} 
and other policy-based approaches \citep{KimPK21} showing promising results. 
Transformer-based architectures have also gained attention, including MatNet \citep{KwonCYPPG21} and BQ-NCO \citep{DrakulicMMSA23}. 
Additional architectural innovations are explored in Sym-NCO \citep{KimPP22} and DIMES \citep{QiuSY22}. 
 
More recently, \emph{Preference Optimization} (a technique that transforms quantitative reward signals into qualitative 
preference comparisons) has gained attention as an alternative to RL-based approaches \citep{PanLLZDSY25,FangWCWZ26,LiaoCWZW25}. 
Concurrently, other works have explored hybrid architectures that blend global predictions with local solution construction 
\citep{YeWLCLL24,MaPLY25}, while further methods divide the search space into sub-problems solved after by neural solvers 
in sequence \citep{ZhouDZCJ25}. 
 
Another emerging paradigm aims to train general-purpose neural solvers capable of addressing diverse combinatorial problems 
with a single model \citep{DrakulicMA25,PanXMZLY25}. For a comprehensive analysis of these various architectural choices 
and their roles in neural combinatorial optimization, particularly for TSP, we refer the reader to \citet{LiMPWGYY25}.

\paragraph{TSP.}
The problem is known to be NP-hard, and even hard to approximate if distances do not satisfy the triangle inequality and nodes cannot be visited twice~\citep{Karp72}. In the metric case, there are simple 2-approximation algorithms~\citep{RosenkrantzSL77}, and \citet{Christofides76} and \citet{Serdyukov78} proposed
a 1.5-approximation algorithm, which was only improved recently to $1.5-10^{-36}$~\citep{KarlinKG21}. We refer to the textbooks by~\citet{TSPBook-comp} and \citet{TraubVygen}.

\paragraph{Fast algorithms for Euclidean graphs.}
Euclidean setting often allows algorithms with improved performance.
In particular,
\citet{Arora98} proposed an approximation scheme for
Euclidean TSP,
Euclidean minimum spanning tree can be computed
in time $O(n\log n)$ using Delaunay triangulations \citep{ShamosH75},
and \citet{VaradarajanA99} proposed an
$(1+\epsilon)$-algorithm for Euclidean min-weight perfect matching
running in time $O(\epsilon^{-3}n\log^6 n)$.

\section{Notation and preliminaries}

In some parts of the paper, we work with multisets, extending classical set operations
as follows. Let $F$ be a multiset from a universe $U$, and $m_F(e)\in \N$ denote the multiplicity of $e\in U$ in $F$. Then the following holds for any $3$ multisets $S$, $A$, and $B$ of $U$ and any $e\in U$.
If $S=A\cap B$, then $m_S(e) = \min\{m_A(e),m_B(e)\}$.
If $S = A + B$, then $m_S(e) = m_A(e)+m_B(e)$.
If $S = A - B$, then $m_S(e) = \max\{m_A(e)-m_B(e), 0\}$.
We also define the symmetric difference $A\Delta B = (A-B)+(B-A)$.

Given an input graph $G=(V,E)$, we denote by $n=|V|$ the number of its vertices
and by $m=|E|$ the number of its edges.
We say that edge weights $w\colon E\to \R^+$ are metric if they
satisfy triangle inequality, i.e.,
if we have $w(uv)\leq w(uz) + w(zv)$ for each $u,v,z\in V$.
For a set $F\subseteq E$, we denote by $\odd(F)$ the set of vertices in $(V,F)$
with an odd degree.
We say that $G = (V,E)$ is Eulerian, if $\odd(E) = \emptyset$.
Given a set $S\subseteq V$ of even size, we call $J \subseteq E$ an $S$-join
if $\odd(J) = S$. We call $J\subseteq E$ a perfect matching of $S$ if,
in $(V,J)$, the degree of each $v\in S$ is equal to 1 and the degree of each
$v\in V\setminus S$ is equal to 0. Note that a perfect matching of $S$
is a special case of an $S$-join.
In graphs with metric weights, we can assume that all $S$-joins of
minimum weight are perfect matchings.

A closed walk in $G$ passing through each vertex at least once is called
a \emph{tour}, let $X$ denote the multiset containing each edge $e\in E$
as many times as it is traversed by the tour.
It is well-known and easy to prove that $(V,X)$ is Eulerian.
If $G$ is a complete graph with metric edge weights and $F$ is a multiset of edges
such that $(V,F)$ is Eulerian, then there is a Hamiltonian cycle $C\subseteq F$
of weight $w(C)\leq w(F)$ which can be found using the standard short-cutting
procedure~\citep[e.g.,][Lemma~1.7]{TraubVygen}.

\section{Algorithms for metric TSP}

Our algorithms receive in input a \emph{prediction} $P\subseteq E$, which is ideally the set of edges of an optimal tour $X^*$. Notice that in general we allow $P$ to be any set of edges, in particular $P$ might induce a disconnected graph, a graph with multiple cycles etc. In the settings where we are given a heatmap $p\in [0,1]^E$ in input, we simply sample each edge $e\in E$ independently with probability $p_e$, and let $P$ be the set of sampled edges. We remark that $\E[\eta(P,X^*)]=\eta(p,X^*)$, therefore it is sufficient to upper bound the cost of the produced solution in terms of $\eta=\eta(P,X^*)$.

\subsection{Algorithm \ref{alg:alg1} and proof of Theorem~\ref{thm:alg1}}
We describe our algorithm proving Theorem~\ref{thm:alg1}.
Our algorithm is based on the classical 1.5-approximation algorithm
by \citet{Christofides76}.
First, it constructs a minimum spanning tree (MST) $T$ on $G$
which is biased towards edges in the prediction $P$
in the sense that each edge in $P$ is treated as if its weight was 0.
This can be easily done in near-linear time.
Having the spanning tree $T$, it computes the minimum-weight $\odd(T)$-join $J$
which, in the metric case, is the minimum weight perfect matching of $\odd(T)$.
We compute $J$ using the classical algorithm \citep{Gabow73,EdmondsJ73}
with running time\footnote{One can also solve the problem in time $O(n^{2.5}\log W)$ where $W$ is the largest integer weight \citep{DuanPS18}.} $O(n^3)$
treating predicted and not predicted edges equally,
i.e., we always count with their original weights $w$.

The multiset $T+J$ where edges contained in both $T$ and $J$ are taken twice
is connected (because $T$ is a spanning tree)
and Eulerian (each $v\in V$ is contained in an even number of edges in $T+J$),
since $J$ is an $\odd(T)$-join, i.e., $\odd(J) = \odd(T)$.
Using the classical short-cutting procedure~\citep[e.g.,][Lemma~1.7]{TraubVygen},
we can construct a Hamiltonian circuit of weight
no greater than the weight of $T+J$ in time $O(|T+J|)$.
See Algorithm~\ref{alg:alg1} for the summary.

\begin{figure}
\centering
\includegraphics[width=0.35\linewidth]{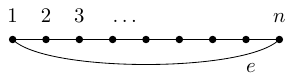}
\caption{Example instance of a TSP where the vertices are points on a line.
Only the edges of the optimal tour $X^*$ are drawn.
In the case with perfect prediction, i.e., $P=X^*$, Algorithm~\ref{alg:alg1}
may choose the path from $1$ to $n$ as $T$ (there are multiple possibilities),
$S = \{1, n\}$ and $J=\{e\}$ contains the edge $e$ connecting $n$ and $1$.
Then $T+J = X^*$ and no short-cutting is needed.}
\label{fig:TSPexample}
\end{figure}

\begin{algorithm2e}
\DontPrintSemicolon
\caption{Algorithm for metric TSP}
\label{alg:alg1}
\textbf{Input:}  $G=(V, E)$ s.t. $E=\binom{V}{2}$,
$w\colon E\to \R_+$, \hbox{$P \subseteq E$\hspace{-1em}}\;
\textbf{Output:} TSP tour over $G$\\\;
\For{$e\in E$}{
    $w'(e) := 0$ if $e \in P$\;
    $w'(e) := w(e)$ otherwise.\;
}
$T :=$ MST on $G$ w.r.t. the modified weights $w'$\;
$S:= $ set of odd vertices in $T$\;
$J:= $ min-weight $S$-join w.r.t. the original weights $w$\;
Output: short-cutting of the Eulerian multiset $T+J$\;
\end{algorithm2e}

\begin{remark}
\label{remark:metricAntoniadis}
We have to disregard the prediction $P$ when computing $J$.
If we have computed $J$ w.r.t. the modified weights $w'$, i.e., each $e\in P$
having weight 0 in the style of \citet{AntoniadisEPV25},
the approximation ratio of our algorithm would be no better than 2 even
with prediction error arbitrarily small, see Appendix~\ref{sec:LBAntoniadis}.
\end{remark}

\begin{remark}
\label{remark:metricDoubletree}
There is a simple 2-approximation algorithm called double-tree \citep{RosenkrantzSL77},
which does not require the costly computation of an $\odd(T)$-join $J$.
Instead, it outputs the short-cutting of $T+T$ which is clearly Eulerian.
Unfortunately, if our algorithm used $T+T$ instead of $T+J$,
its approximation ratio would be no better than 2
even with $\eta$ approaching 0, see Appendix~\ref{sec:LBdoubletree}.
\end{remark}

Theorem~\ref{thm:alg1} is implied by the following statement
and the fact that each step of Algorithm~\ref{alg:alg1}
can be performed in near-linear time except for the construction of $J$
which requires time $O(n^3)$.

\begin{theorem}\label{thm:alg1_P}
Let $G=(V, E)$ be a complete graph on $n$ vertices with edge weights
$w\colon E\to \R_+$ satisfying triangle inequality.
Let $X^*$ be a TSP tour on $G$ and $P \subseteq E$ a prediction
with error $\eta$ with respect to $X^*$.
Then the cost of the tour $X$ produced by
Algorithm~\ref{alg:alg1} on $G$ with prediction $P$
is at most $w(X^*) + 2\eta$.
\end{theorem}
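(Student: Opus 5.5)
The plan is to bound $w(T)+w(J)$ by $w(X^*)+2\eta$; short-cutting does not increase weight, so this suffices. The idea is to charge $J$ against the symmetric difference $T\,\Delta\, X^*$, which reduces everything to bounding $w(T\setminus X^*)$ by $\eta$. That last bound I would get from the optimality of $T$ with respect to the modified weights $w'$.

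\emph{Step 1 (bounding $J$).} Since $(V,X^*)$ is Eulerian, every vertex has even degree in $X^*$. Hence $\odd(T\,\Delta\, X^*)=\odd(T)=S$. (If $X^*$ is a multiset, reduce multiplicities mod 2 first; this changes no parities and does not increase weight.) So $T\,\Delta\, X^*$ is an $S$-join, and the minimality of $J$ with respect to the original weights $w$ gives
\[
w(J)\le w(T\setminus X^*)+w(X^*\setminus T).
\]
Therefore
\[
w(T)+w(J)\le w(T\cap X^*)+2w(T\setminus X^*)+w(X^*\setminus T)\le w(X^*)+2\,w(T\setminus X^*).
\]
It remains to show $w(T\setminus X^*)\le \eta = w(P\setminus X^*)+w(X^*\setminus P)$. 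This step uses that $J$ is computed with $w$, not $w'$, as Remark~\ref{remark:metricAntoniadis} stresses.

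\emph{Step 2 (splitting $T\setminus X^*$).} Write $T\setminus X^* = (T\cap P\setminus X^*)\cup(T\setminus P\setminus X^*)$. The first part has weight at most $w(P\setminus X^*)$. For the second part, I would show $w(T\setminus P\setminus X^*)\le w(X^*\setminus P)$.

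Let $Q\subseteq X^*$ be a spanning tree contained in the tour, for instance a Hamiltonian path obtained by deleting one edge of the cycle. Every edge of $T\setminus P\setminus X^*$ lies in $T\setminus Q$ and has $w'=w$. Since $w'\ge 0$, this gives
\[
w(T\setminus P\setminus X^*)\le w'(T\setminus Q).
\]

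\emph{Step 3 (the main obstacle: an exchange argument).} The delicate part is the inequality $w'(T\setminus Q)\le w'(Q\setminus T)$. The plain bound $w'(T)\le w'(Q)$ is not enough, because common edges could be heavy in $w'$ and would not cancel correctly. I would instead use the standard bijective exchange property of spanning trees (the graphic matroid). There is a bijection $\sigma\colon T\setminus Q\to Q\setminus T$ such that $T-f+\sigma(f)$ is a spanning tree for every $f$. Since $T$ is a minimum spanning tree for $w'$, we get $w'(f)\le w'(\sigma(f))$. Summing over $f$ gives
\[
w'(T\setminus Q)\le w'(Q\setminus T)\le w'(Q)=w(Q\setminus P)\le w(X^*\setminus P).
\]

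Combining Steps 1–3 yields $w(T\setminus X^*)\le w(P\setminus X^*)+w(X^*\setminus P)=\eta$. Hence $w(X)\le w(T)+w(J)\le w(X^*)+2\eta$.
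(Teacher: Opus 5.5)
Your proof is correct and follows the paper's argument. Like the paper, you charge $J$ to the reference join $X^*\Delta T$, reduce to bounding $w(T\setminus X^*)$, and split that set into a part inside $P$ (at most $w(P\setminus X^*)$) and a part outside $P$ (at most $w(X^*\setminus P)$, by optimality of $T$ under $w'$). Your Step~3 is an unnecessary detour, and your claim that the plain bound fails is mistaken: since $w'\ge 0$, you directly get $w(T\setminus P\setminus X^*)\le w(T\setminus P)=w'(T)\le w'(Q)\le w(X^*\setminus P)$, which is the paper's one-line argument in Lemma~\ref{lem:alg1_T}. Even $w'(T\setminus Q)\le w'(Q\setminus T)$ follows from $w'(T)\le w'(Q)$ by cancelling $w'(T\cap Q)$, which appears identically on both sides, so the bijective exchange property is not needed.
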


Figure~\ref{fig:TSPexample} describes the steps of Algorithm~\ref{alg:alg1}
on a simple example with a perfect prediction, showing that ideally we have
$T \subseteq X^*$, i.e., $T = X^*\cap T$, and $J = X^*\setminus T$.
In our analysis, we show that prediction errors can make
both $T$ and $J$ deviate
from $X^*\cap T$ and $X^*\setminus T$ respectively,
but the increase of their weight is proportional to the weight
of the mispredicted edges as formulated in the lemmas below.
We use the following notation.
Given a prediction $P\subseteq E$ and a tour $X^*$, we denote
$H^+ = P \setminus X^*$ the set of false positives
and $H^- = X^*\setminus P$ the set of false negatives.
We have $\eta = w(H^-) + w(H^+)$ and $P = (X^* \setminus H^-) \cup H^+$.

\begin{lemma}
\label{lem:alg1_T} 
One has:
\begin{gather}
\label{eq:alg1E_TminP}
w(T\setminus P) \leq w(H^-)\text{ and}\\
\label{eq:alg1E_T}
w(T) \leq w(X^* \cap T) + w(H^+) + w(H^-).
\end{gather}
\end{lemma}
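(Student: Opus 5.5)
We argue via the modified weights $w'$, with respect to which $T$ is a minimum spanning tree, and compare $T$ against a spanning tree built from $X^*$. Since $X^*$ is a Hamiltonian cycle, removing any single edge yields a Hamiltonian path $Y\subseteq X^*$, which is a spanning tree. By minimality of $T$ under $w'$, $w'(T)\le w'(Y)\le w'(X^*)$. Every edge of $P$ has $w'$-weight zero, so $w'(T)=w(T\setminus P)$. Moreover $w'(X^*)=w(X^*\setminus P)=w(H^-)$, because $X^*\setminus P=H^-$ by definition. Together these give $w(T\setminus P)\le w(H^-)$, which is \eqref{eq:alg1E_TminP}. This step is immediate.

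For \eqref{eq:alg1E_T}, we split $T$ into $T\cap P$ and $T\setminus P$, and further split $T\cap P$ into $T\cap P\cap X^*$ and $T\cap P\setminus X^*$. The first piece is contained in $X^*\cap T$. The second is contained in $P\setminus X^*=H^+$. The remaining part $T\setminus P$ has weight at most $w(H^-)$ by the first inequality. Summing the three bounds gives
\[
w(T)\le w(X^*\cap T)+w(H^+)+w(H^-).
\]
In fact this yields a slightly stronger bound, since only $w(T\cap H^+)$ is needed in place of $w(H^+)$.

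No real obstacle is expected. The only care needed is to use a spanning tree contained in $X^*$ (obtained by dropping one edge of the cycle), so that the $w'$-weight of the comparison tree is at most $w(H^-)$. We also need that $P$ and $X^*$ are ordinary sets here, which holds in the metric setting, so that $X^*\setminus P=H^-$ exactly.
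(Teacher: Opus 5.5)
Your proposal is correct and is essentially the paper's proof. For the first inequality, the paper bounds $w'(T)$ by $w'(P\cup H^-)=w(H^-)$ using the connected set $P\cup H^-\supseteq X^*$, whereas you use a spanning path inside $X^*$ with $w'(X^*)=w(H^-)$, which gives the same bound; the second inequality then follows from the identical decomposition $T\cap P\subseteq (T\cap X^*)\cup H^+$.
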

\begin{proof}
Note that
$P\cup H^- \supseteq X^*$ is a connected graph
and $w'(e) =0$ for any $e\in P$.
Since $T$ is an MST with respect to $w'$, we have
\[w(T\setminus P) = w'(T) \leq w'(P\cup H^-) = w(H^-).\]
This already implies \eqref{eq:alg1E_TminP}.
To prove \eqref{eq:alg1E_T}, we write
\begin{equation*}
\begin{split}
w(T) &= w(T\cap P) + w(T\setminus P)\\
    &\leq w(X^*\cap T) + w(H^+) + w(H^-),
\end{split}
\end{equation*}
where the inequality follows from
$T\cap P \subseteq (T\cap X^*) \cup (T\cap H^+)
\subseteq (T\cap X^*) \cup H^+$ and \eqref{eq:alg1E_TminP}.
\end{proof}

\begin{lemma}
\label{lem:alg1_J}
One has:
\begin{equation*}
w(J) \leq w(X^* \setminus T) + w(H^+) + w(H^-).
\end{equation*}
\end{lemma}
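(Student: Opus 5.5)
The plan is to exhibit one explicit $\odd(T)$-join whose weight is at most the right-hand side. Since $J$ is a minimum-weight $\odd(T)$-join with respect to the original weights $w$, the bound on $w(J)$ then follows immediately. The natural candidate comes from the perfect-prediction picture (Figure~\ref{fig:TSPexample}): there $J = X^*\setminus T$. In general, I would take the symmetric difference $F := X^* \,\Delta\, T = (X^*\setminus T)\cup(T\setminus X^*)$, both viewed as ordinary edge sets.

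\textbf{Step 1: $F$ is an $\odd(T)$-join.} Degree parity is additive modulo $2$ under symmetric difference. For every vertex $v$, $\deg_F(v) \equiv \deg_{X^*}(v) + \deg_T(v) \pmod 2$, because edges lying in both sets contribute twice to the right-hand side. Since $X^*$ is a Hamiltonian cycle, $\odd(X^*)=\emptyset$. Hence $\odd(F) = \odd(X^*)\,\Delta\,\odd(T) = \odd(T)$. By minimality of $J$,
\[
w(J) \le w(F) = w(X^*\setminus T) + w(T\setminus X^*).
\]

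\textbf{Step 2: bound $w(T\setminus X^*)$.} Any edge of $T$ not in $X^*$ either lies outside $P$, or lies in $P\setminus X^* = H^+$. Therefore $T\setminus X^* \subseteq (T\setminus P)\cup H^+$, which gives $w(T\setminus X^*) \le w(T\setminus P) + w(H^+)$. By \eqref{eq:alg1E_TminP} of Lemma~\ref{lem:alg1_T}, $w(T\setminus P)\le w(H^-)$. Hence $w(T\setminus X^*) \le w(H^+)+w(H^-)$, and combining this with Step 1 yields the claim.

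The only point needing care is Step 1, namely that $X^*\Delta T$ really has odd set exactly $\odd(T)$. Once the parity additivity is noted, it is immediate. We never need $F$ to be a perfect matching, since $J$ is defined as a minimum-weight join and joins include arbitrary edge sets.
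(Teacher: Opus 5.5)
Your proposal is correct and follows the paper's proof: the paper also uses the reference join $J' = X^*\Delta T$ and bounds $w(T\setminus X^*) \le w(H^+) + w(T\setminus P) \le w(H^+) + w(H^-)$ via Lemma~\ref{lem:alg1_T}. The one cosmetic difference is in verifying the join property: you use parity additivity under symmetric difference, whereas the paper writes $T + J' = X^* + \big((T\setminus X^*) + (T\setminus X^*)\big)$. The two arguments are equivalent.
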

\begin{proof}
We show that, for $S=\odd(T)$, there is an $S$-join, namely
$J' := X^*\Delta T$,
that satisfies
the desired bound. First, we prove that $J'$ is an $S$-join, i.e., equivalently that $T+J'$ has all degrees even.
We can write
\begin{align*}
T + J' &= \big((T\cap X^*) \cup (T\setminus X^*)\big)
    + \big((X^* \setminus T) \cup (T \setminus X^*)\big)\\
    &= X^* + \big( (T\setminus X^*) + (T\setminus X^*)\big),
\end{align*}
because $(T\cap X^*) \cup ( X^* \setminus T) = X^*$.
In other words, $T + J'$
is a sum of the tour $X^*$ (Eulerian) and $(T \setminus X^*) + (T\setminus X^*)$
(each edge taken twice).
Therefore, $\odd(J') = \odd(T) = S$ and $J'$ is an $S$-join.

Now, it is enough to bound the weight of $J'$.
We have $w(J') = w(X^* \setminus T) + w(T \setminus X^*)$, where
\begin{align*}
w(T \setminus X^*)
    &= w((T\setminus X^*) \cap P) + w((T \setminus X^*) \setminus P)\\
    &\leq w(H^+) + w(T \setminus P) \overset{\eqref{eq:alg1E_TminP}}{\leq} w(H^+)+w(H^-).
\end{align*}
The first inequality follows from
    $(T\setminus X^*) \cap P \subseteq P \setminus X^* = H^+$
    and from $(T\setminus X^*) \setminus P \subseteq T\setminus P$.
Thus
\begin{equation*}
%\label{eq:alg1E_J}
w(J) \leq w(J') = w(X^* \setminus T) + w(H^+) + w(H^-).\qedhere
\end{equation*}
\end{proof}

\begin{proof}[Proof of Theorem~\ref{thm:alg1_P}]
Lemmas \ref{lem:alg1_T} and \ref{lem:alg1_J} imply
\[ w(T) + w(J) \leq w(X^*) + 2w(H^+) + 2w(H^-).\]
Short-cutting of $T+J$ can only decrease the weight of the resulting tour.
\end{proof}

\subsection{Near-linear time algorithm for metric TSP}
\label{sec:alg2}

Finding the min-weight perfect matching $J$
is the most computationally intensive step in Algorithm~\ref{alg:alg1}.
An obvious way to speed-up the algorithm is to replace this
step with a 2-approximation algorithm by \citet{GoemansW95},
which runs in near-linear time.
Unfortunately, this
does not lead to approximation
ratio smaller than 1.5 even with very small $\eta$.
An example similar to Remark~\ref{remark:metricDoubletree}
can be found in Appendix~\ref{sec:LBTjoin-apx}. Here, we focus on the points
in the analysis of the previous section which would fail,
in order to motivate the steps performed by our next algorithm.

Consider the example in Figure~\ref{fig:TSPexample}. We have $n$ points on the line
and a perfect prediction $P$ of the optimal tour $X^*$ and a spanning
tree $T$ that omits the heaviest edge $e\in X^*$ such that $w(e) = w(X^*)/2$.
Then, losing a factor of $2$ in the bound of Lemma~\ref{lem:alg1_J}
due to the approximation ratio of \citet{GoemansW95},
we would have $w(T)+w(J)\leq w(X^*\cap T) + 2w(X^*\setminus T) = \frac32 w(X^*)$,
i.e., an 1.5-approximation even with perfect predictions.

\begin{algorithm2e}
\DontPrintSemicolon
\caption{Near-linear time algorithm for metric TSP}
\label{alg:alg2}
\textbf{Input:}  $G=(V, E)$ s.t. $E=\binom{V}{2}$,
$w\colon E\to \R_+$, $P \subseteq E$\;
\textbf{Output:} TSP tour over $G$\\\;
\For{$e\in E$}{
    $w'(e) := -w(e)$ if $e \in P$\;
    $w'(e) := w(e)$ otherwise.\;
}
$T :=$ MST of $G$ w.r.t. the modified weights $w'$\;
$S:= $ set of odd vertices in $T$\;
\eIf{$|S|=2$}{
    choose $J := \{uv\}$, where $\{u,v\}=S$\;
}{
$J:= $ 2-apx min-weight perfect matching of $S$ \hbox{w.r.t. $w$\hspace{-1em}}\;
}
Output: short-cutting of the Eulerian multiset $T+J$\;
\end{algorithm2e}

In order to avoid the previous example,
we bias $T$ towards the heaviest edges in $P$.
In particular, we look for the MST with the signs of the weights of
the edges in $P$ flipped.
Intuitively, we are trying to make sure that
the heaviest edges in $P$ are contained in $T$ and we pay for them only once,
while the weight of $X^*\setminus T$ where we suffer the multiplicative
factor of 2 will be small.

Unfortunately, $X^*\setminus T$ is always non-empty
and contains a part of the optimal solution $X^*$ whose weight we do not
want to pay twice.
Somewhat surprisingly, we can show that whenever $|S|>2$ ($\eta>0$ in such a case),
$w(J)$ can be bounded purely in terms of $\eta$, see the lemma below.
This means that the approximation ratio of \citet{GoemansW95}
will affect only the constant in front of $\eta$.
Fortunately, this is already enough, since
finding the exact min-weight perfect matching is trivial with $|S|=2$
(it is just the edge connecting the two vertices in $S$).
Our algorithm is summarized in Algorithm~\ref{alg:alg2}.
All the steps can be performed in near-linear time.

\begin{lemma}
\label{lem:alg2_J}
If $|S|>2$, then $w(J) \leq O(\eta)$.
\end{lemma}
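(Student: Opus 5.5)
The plan is to exhibit a cheap $S$-join and use the guarantee of \citet{GoemansW95}. $J$ is a $2$-approximate minimum-weight perfect matching of $S$, and in the metric case a minimum $S$-join can be taken to be a perfect matching. So $w(J)\le 2\,w(J')$ for \emph{every} $S$-join $J'$, and it suffices to find an $S$-join of weight $O(\eta)$. As in the proof of Lemma~\ref{lem:alg1_J}, I take $J'=X^*\Delta T$, which is an $S$-join by the same parity argument. Its weight is
\[
w(J')=w(T\setminus X^*)+w(X^*\setminus T),
\]
so the proof reduces to bounding both terms by $O(\eta)$ whenever $|S|>2$.

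For $w(T\setminus X^*)$, note that $T\setminus X^*\subseteq H^+\cup(T\setminus P)$. To bound $w(T\setminus P)$, I compare $T$ with a spanning tree of the connected graph $P\cup H^-$ under $w'$. Every edge of $P$ has $w'\le 0$, and each edge of $T\setminus P$ can be exchanged against an edge of $H^-$ (or of $P$) lying on its fundamental cycle. The analogue of \eqref{eq:alg1E_TminP} should then read $w(T\setminus P)\le w(H^-)+O(w(H^+))$; the $H^+$ term pays for the negative $w'$-weights of $P$-edges that are traded away.

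For $w(X^*\setminus T)$, edges of $X^*\setminus P$ lie in $H^-$. For $X^*\cap P\setminus T$, I use that the greedy (Kruskal) MST for $w'$ processes $P$-edges in decreasing order of $w$. For each threshold $t$, let $A_t=\{e\in P: w(e)\ge t\}$. Then $T\cap A_t$ spans $A_t$, so $|T\cap A_t|=\operatorname{rank}(A_t)\ge\operatorname{rank}(A_t\cap X^*)$. The set $A_t\cap X^*$ is acyclic unless $X^*\subseteq A_t$, so
\[
|(X^*\cap A_t)\setminus T|\le |(T\cap A_t)\setminus X^*|+[X^*\subseteq A_t].
\]
Integrating over $t$ gives
\[
w(X^*\cap P\setminus T)\le w(H^+)+[X^*\subseteq P]\cdot\min_{e\in X^*}w(e).
\]

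The main obstacle is the residual term in the case $X^*\subseteq P$, that is, $H^-=\emptyset$. This is exactly where $|S|>2$ must be used. If $P=X^*$, the greedy tree is $X^*$ minus one edge, a Hamiltonian path with $|S|=2$. Hence $|S|>2$ forces $H^+\neq\emptyset$ and, more precisely, forces $T\not\subseteq X^*$, so $T$ contains some $h=uv\in H^+$. I would try to show $\min_{e\in X^*}w(e)=O(w(h))$, or more plausibly that the one edge $f\in X^*\setminus T$ charged to the indicator has $w(f)\le w(h)$. The greedy order should give this, since $h$ was accepted before the tour closed. Failing that, I would use the triangle inequality on the $X^*$-path between $u$ and $v$. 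Combining everything yields $w(J')\le c\,\eta$ for a small constant $c$, and hence $w(J)\le 2c\,\eta$.
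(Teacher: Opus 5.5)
Your layer-cake bound
$w((X^*\cap P)\setminus T)\le w((T\cap P)\setminus X^*)+[X^*\subseteq P]\,w(e^*)$ is correct. However, the proposal stops at exactly the step the paper calls the crux: the lightest-edge residual when $X^*\subseteq P$, i.e.\ $H^-=\emptyset$, which is the only place $|S|>2$ enters. You only say you would try to show $w(e^*)=O(w(h))$, and your fallback points the wrong way. The triangle inequality along the $X^*$-path from $u$ to $v$ gives an \emph{upper} bound on $w(uv)$, whereas you need the \emph{lower} bound $w(h)\ge w(e^*)$. No metric argument can supply it: if the tour alternates between two tight, far-apart pairs of points, the chord joining a pair is far shorter than every tour edge. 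The bound must therefore come from how $T$ is chosen, and as written this case is unproved. Your bound on $w(T\setminus P)$ is also only sketched. An edge-by-edge exchange must be injective before it can be summed, and the $O(w(H^+))$ term is spurious, since no $P$-edge is ever traded for a non-$P$ edge.

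Both holes close with your own spanning property. For $t\in(0,\min_{e\in P}w(e)]$ you have $A_t=P$, so $T\cap P$ spans $(V,P)$. Hence $(T\cap P)\cup H^-$ is connected, because $P\cup H^-\supseteq X^*$. Comparing $T$ with a spanning tree of this graph that contains $T\cap P$ gives $w(T\setminus P)\le w(H^-)$, as in Lemma~\ref{lem:alg2_T}. If $X^*\subseteq P$ and $w(e^*)>0$, take $t=w(e^*)$. Then $A_t\supseteq X^*$ is connected, so $T\cap A_t$ is a spanning tree and equals $T$. Thus $T\subseteq P$ and every edge of $T$ weighs at least $w(e^*)$, and $|S|>2$ yields some $h\in T\setminus X^*\subseteq H^+$ with $w(e^*)\le w(h)$.

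Completed this way, your route differs from the paper's and is sharper. The paper builds a tree containing $Y^*\cap P$ and charges $(Y^*\setminus T)\cap P$ to all of $T\setminus Y^*$. It then splits into three cases on $e^*$, using a halving trick via $|Y^*\setminus T|\ge 2$ and the fundamental cycle of $e^*$, and reaches $3w(H^+)+5w(H^-)$. Your threshold integration charges only to $(T\cap P)\setminus X^*\subseteq H^+$ and confines $e^*$ to the case $H^-=\emptyset$, where also $T\setminus P=\emptyset$. This gives $w(J')\le 2w(H^+)+2w(H^-)$ if $H^-\neq\emptyset$ and $w(J')\le 3w(H^+)$ otherwise. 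That would lower the constant $11$ in Theorem~\ref{thm:alg2} to $7$.
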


Lemma~\ref{lem:alg2_J}
is the crucial part of our analysis.
We bound the weight of $J$ using an explicit $\odd(T)$-join $J' = X^*\Delta T$.
The proof is most difficult when $\eta$ is relatively small
and the most tricky part is in fact accounting for the weight of the
lightest edge.
We define $e^*$ as the edge in $X^*$ with the smallest weight.
First, we ignore $e^*$ and consider $Y^*=X^*\setminus \{e^*\}$.
We partition $J'$ into parts as in Figure~\ref{fig:alg2_TXYPm},
bounding the weight of each part separately, as shown in the figure.
\begin{figure}
\centering
\includegraphics{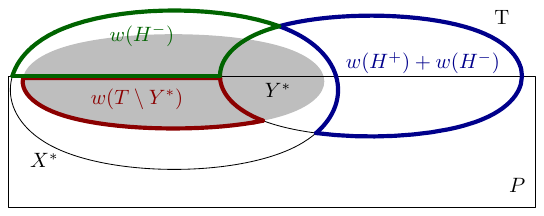}
\caption{Partition of $J'=X^*\Delta T$ together with the bound on each part.
In particular, we have $w(T\setminus X^*)\leq w(H^+)+w(H^-)$,
$w((X^*\setminus T)\setminus P)\leq w(H^-)$, and $w((Y^*\setminus T)\cap P)\leq w(T\setminus Y^*)$.}
\label{fig:alg2_TXYPm}
\label{fig:alg2_TXYP}
\end{figure}
The flipped sings of the weights of the edges in $P$ are used
in the bound $w((Y^*\setminus T)\cap P)\leq w(T\setminus Y^*)$.
However, this bound
is not yet sufficient to prove Lemma~\ref{lem:alg2_J}.
We need to bound $w((X^*\setminus T)\cap P)$ instead,
and there might be $e^*$ missing on the left- or right-hand side.
We need to consider several cases depending on whether $e^*$ belongs to $P$ or not
and, if yes, whether it belongs to $T$.
%The details and the full proof of Theorem~\ref{thm:alg2} can be found
%in Section~\ref{sec:alg2-app}.

\subsubsection{Analysis of Algorithm~\ref{alg:alg2}}

%We use the following notation.
Let $X^* \subseteq E$ be the edges of an optimal tour of $G$.
Since $G$ is a complete graph with metric weights, $X^*$ is a Hamiltonian cycle.
Given prediction $P\subseteq E$,
we denote $P = (X^* \setminus H^-) \cup H^+$,
where $H^-$ and $H^+$ are the sets of false negatives and false positives.
We denote by $J'= X^*\Delta T$ the reference $S$-join, with $S=\odd(T)$, which we use to bound the weight of $J$.

\begin{lemma}
\label{lem:Jprime}
$J'=X^*\Delta T=(X^* \setminus T) \cup (T\setminus X^*)$ is an $S$-join, i.e., its degrees are odd on $S$ and even on $V\setminus S$.
\end{lemma}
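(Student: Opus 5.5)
The argument is a parity computation that reuses the decomposition already carried out in the proof of Lemma~\ref{lem:alg1_J}. The only difference here is that $T$ is now the MST with respect to the sign-flipped weights of Algorithm~\ref{alg:alg2}. That earlier computation never used how $T$ was obtained, only that $T$ is a set of edges and $X^*$ is a tour, so it transfers verbatim.

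First, I will note that $X^*$ and $T$ are both sets of edges, not multisets. Here $X^*$ is a Hamiltonian cycle and $T$ is a spanning tree. Hence $X^*\Delta T$ is the disjoint union $(X^*\setminus T)\cup(T\setminus X^*)$, which gives the second equality in the statement.

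Next, I will write the multiset sum $T+J'$, splitting $T=(T\cap X^*)\cup(T\setminus X^*)$:
\[
T+J' = \big((T\cap X^*)\cup(X^*\setminus T)\big) + (T\setminus X^*)+(T\setminus X^*) = X^* + 2\,(T\setminus X^*).
\]
In $(V,X^*)$ every vertex has degree $2$, since $X^*$ is a Hamiltonian cycle (more generally it is Eulerian). The term $2(T\setminus X^*)$ contributes an even amount to every degree. Therefore every vertex has even degree in $T+J'$.

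Since $\deg_{T+J'}(v)=\deg_T(v)+\deg_{J'}(v)$, the parity of $\deg_{J'}(v)$ equals the parity of $\deg_T(v)$. So $\deg_{J'}(v)$ is odd exactly when $v\in\odd(T)=S$. This shows $J'$ is an $S$-join.

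There is no real obstacle here. The only point requiring care is the multiset bookkeeping when adding $T$ and $J'$, because edges of $T\setminus X^*$ appear twice in the sum; this is harmless for parity.
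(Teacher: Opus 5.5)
Your proof is correct and matches the paper's argument: the paper also decomposes $T+J'$ as $X^* + \big((T\setminus X^*)+(T\setminus X^*)\big)$ and concludes that every degree in $T+J'$ is even, hence $\odd(J')=\odd(T)=S$. Your explicit use of $\deg_{T+J'}(v)=\deg_T(v)+\deg_{J'}(v)$ simply spells out the equivalence that the paper states in one line.
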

\begin{proof}
The claim is equivalent to $T + J'$ having all degrees even. This holds since:
\[ T+J' = ((T\cap X^*) \cup (T\setminus X^*)) + ((X^*\setminus T) \cup (T\setminus X^*))
    = X^* + \big( (T\setminus X^*) + (T\setminus X^*)\big),
\]
i.e., $T+J'$ is the union of the tour $X^*$, where all degrees are $2$, and $(T \setminus X^*) + (T\setminus X^*)$, where each edge is taken twice.
\end{proof}

%\begin{theorem}
%\label{thm:alg2-app}
%Algorithm 2 on a complete graph with triangle inequality
%has cost at most
%\[\OPT + 7w(H^+) + 11w(H^-).\]
%\end{theorem}

\begin{lemma}
\label{lem:alg2_T}
One has
\begin{enumerate}
\item\label{lem:alg2_T:1} $w(T\setminus P) \leq w(H^-)$
\item\label{lem:alg2_T:2} $w(T) \leq w(T\cap X^*) + w(H^+) + w(H^-)$.
\end{enumerate}
\end{lemma}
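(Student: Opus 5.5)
The plan mirrors the proof of Lemma~\ref{lem:alg1_T}. The one change is that in Algorithm~\ref{alg:alg2} the modified weights are $w'(e)=-w(e)$ for $e\in P$ rather than $0$. For every edge set $F$ we have
\[
w'(F) = w(F\setminus P) - w(F\cap P).
\]

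For part~\ref{lem:alg2_T:1}, the first step is to exhibit a spanning tree that is cheap under $w'$. Since $X^*$ is a Hamiltonian cycle, $P\cup H^- \supseteq X^*$ is connected. I would take a spanning tree $T'$ of $(V,P\cup H^-)$ chosen so that it contains a maximum-weight spanning forest of $(V,P)$. Concretely, run Kruskal on $P\cup H^-$ with respect to $w'$, which processes the edges of $P$ first, heaviest first. Then $T'\setminus P\subseteq H^-$. Because $T$ is an MST with respect to $w'$,
\[
w(T\setminus P) - w(T\cap P) = w'(T) \le w'(T') \le w(H^-) - w(T'\cap P).
\]
It remains to show $w(T'\cap P)\ge w(T\cap P)$. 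This holds because $T\cap P$ is a forest inside $(V,P)$, and $T'\cap P$ is a maximum-weight spanning forest of $(V,P)$, since Kruskal on the negated weights of $P$ produces exactly such a forest. A maximum-weight spanning forest has weight at least that of any forest in $(V,P)$, so the inequality follows. Rearranging then gives $w(T\setminus P)\le w(H^-)$.

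The main obstacle is this matroid step. I would justify it by noting that the forests of $(V,P)$ form a graphic matroid, so the greedy algorithm yields a maximum-weight independent set. Equivalently, the MST $T$ with respect to $w'$ restricted to $P$ is itself a maximum-weight forest of $(V,P)$, again by Kruskal order, and in that case the inequality holds with equality.

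Part~\ref{lem:alg2_T:2} then follows exactly as in Lemma~\ref{lem:alg1_T}. We write $w(T)=w(T\cap P)+w(T\setminus P)$ and use $T\cap P\subseteq (T\cap X^*)\cup H^+$ to bound $w(T\cap P)\le w(T\cap X^*)+w(H^+)$. Part~\ref{lem:alg2_T:1} supplies the bound $w(T\setminus P)\le w(H^-)$, and adding the two gives the claim.
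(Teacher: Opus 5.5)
Your proof is correct, but your part~1 argument differs slightly from the paper's.

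The paper works with $T$'s own $P$-part. Every edge of $P$ has negative $w'$ and every other edge has positive $w'$, so Kruskal never adds a non-$P$ edge while a $P$-edge can still be added. Hence $T\cap P$ is already a spanning forest of $(V,P)$, and $(T\cap P)\cup H^-$ is connected. Comparing $T$ with a spanning tree inside $(T\cap P)\cup H^-$ gives
\[
-w(T\cap P)+w(T\setminus P)=w'(T)\le -w(T\cap P)+w(H^-),
\]
and the $-w(T\cap P)$ terms cancel at once.

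You instead compare $T$ with a separately built tree $T'$ whose $P$-part is a maximum-weight spanning forest of $(V,P)$. You then need greedy optimality on the graphic matroid to get $w(T'\cap P)\ge w(T\cap P)$. This holds because $w\ge 0$: any forest of $(V,P)$ extends to a spanning forest of no smaller weight.

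What your route gives: it uses nothing about $T$ beyond its $w'$-minimality, not how Kruskal ran, and the matroid fact it adds is standard. What the paper's route gives: it needs only that $T\cap P$ is a maximal forest of $(V,P)$, not a maximum-weight one, so no weight comparison between two forests is needed.

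Your closing remark, that $T\cap P$ is itself a maximum-weight spanning forest of $(V,P)$, is essentially the paper's observation. Once you know $T\cap P$ spans every component of $(V,P)$, you can choose $T'\supseteq T\cap P$ and skip the comparison step entirely.

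Part~2 is the same as in the paper.
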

\begin{proof}
All the edges $e\in P$ have negative $w'(e)$ while all other edges positive $w'(e)$.
Therefore, while there is an edge in $P$ which can still be added to $T$,
i.e, some connected component of the graph $(V,P)$ is still disconnected,
Kruskal's algorithm will never add an edge $e\notin P$ to $T$.
Therefore, $T\cap P$ is a spanning forest of $(V,P)$ and,
since $X^*\subseteq P\cup H^-$,
we know that $(T\cap P) \cup H^-$ is connected.
By the minimality of $T$, we have
\[ - w(T\cap P) + w(T\setminus P) = w'(T) \leq w'(T\cap P) + w'(H^-)
	= -w(T\cap P) + w(H^-) \]
which implies $w(T\setminus P) \leq w(H^-)$ -- the first statement of the lemma.

As for $T\cap P$, we have
\[ T\cap P
	\subseteq (T\cap (P\cap X^*)) \cup (T\cap (P\setminus X^*))
	\subseteq (T\cap X^*) \cup (H^+),
\]
see Figure~\ref{fig:alg2_TPX}.
\begin{figure}
\hfill\includegraphics{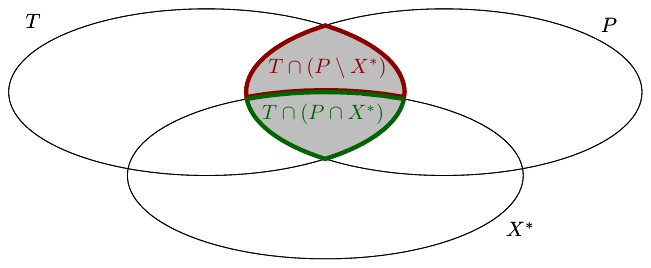}\hfill\ %
\caption{Illustration of Lemma~\ref{lem:alg2_T}}
\label{fig:alg2_TPX}
\end{figure}
Therefore, the total weight of $T$ is
\begin{equation*}
\label{eq:alg2_T}
w(T) = w(T\cap P) + w(T\setminus P) \leq w(T\cap X^*) + w(H^+) + w(H^-).
\qedhere
\end{equation*}
\end{proof}

The following lemma holds for an arbitrary $S$ but we will need it only for $|S|=2$.
\begin{lemma}
\label{lem:alg2_S2}
One has
\[ w(J') \leq w(X^* \setminus T) + w(H^+) + w(H^-). \]
\end{lemma}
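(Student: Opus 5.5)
The plan is to mirror the argument of Lemma~\ref{lem:alg1_J}. Since $J'=X^*\Delta T$ is the disjoint union of $X^*\setminus T$ and $T\setminus X^*$ (Lemma~\ref{lem:Jprime}), we have
\[ w(J') = w(X^*\setminus T) + w(T\setminus X^*). \]
It therefore suffices to show that $w(T\setminus X^*) \leq w(H^+) + w(H^-)$.

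To bound $w(T\setminus X^*)$, we split it according to membership in $P$:
\[ w(T\setminus X^*) = w((T\setminus X^*)\cap P) + w((T\setminus X^*)\setminus P). \]
For the first part, note that $(T\setminus X^*)\cap P \subseteq P\setminus X^* = H^+$, so its weight is at most $w(H^+)$. For the second part, $(T\setminus X^*)\setminus P \subseteq T\setminus P$. By item~\ref{lem:alg2_T:1} of Lemma~\ref{lem:alg2_T}, its weight is at most $w(H^-)$. Adding the two bounds gives the claim.

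No step here is a real obstacle, since the needed bound $w(T\setminus P)\leq w(H^-)$ for the flipped-sign MST is already established in Lemma~\ref{lem:alg2_T}. The only point to watch is that we use that lemma rather than the analogous one for Algorithm~\ref{alg:alg1}. The latter was proven for the zero-weight modification $w'$, whereas here $T$ is the MST with negated weights on $P$. Once that is in place, the set inclusions are identical to those in Lemma~\ref{lem:alg1_J}.
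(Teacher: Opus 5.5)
Your proof is correct and is essentially the paper's own argument: write $w(J') = w(X^*\setminus T) + w(T\setminus X^*)$, use $T\setminus X^* \subseteq H^+ \cup (T\setminus P)$, and apply item~\ref{lem:alg2_T:1} of Lemma~\ref{lem:alg2_T}. You are also right that the bound must come from Lemma~\ref{lem:alg2_T} (the negated-weight MST) rather than from the analogous lemma for Algorithm~\ref{alg:alg1}.
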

\begin{proof}
We have
\[ w(J') = w(X^* \setminus T) + w(T\setminus X^*).\]
Since
$T\setminus X^* = ((T\setminus X^*) \cap P) \cup ((T\setminus X^*) \setminus P) \subseteq H^+ \cup (T\setminus P)$,
we get
\[
w(J') \leq w(X^*\setminus T) + w(H^+) + w(T\setminus P)
\overset{Lem.\,\ref{lem:alg2_T}.\eqref{lem:alg2_T:1}}\leq w(X^*\setminus T) + w(H^+) + w(H^-).
\qedhere
\]
\end{proof}

Consider next the case $|S|>2$. In this case, $J$ is only a 2-approximation of the optimal $S$-join
and therefore we can claim only $w(J)\leq 2w(J')$.
Therefore, to prevent loosing factor of 2 on the weight of the edges in $X^*$,
we have to bound $w(J')$ only in terms of $H^+$ and
$H^-$. This is proved in the next lemma.
\begin{customthm}{Lemma~\ref{lem:alg2_J} restated}
%\label{lem:alg2_S4}
If $|S| > 2$, we have
\[
w(J') \leq 3w(H^+) + 5w(H^-).
\]
\end{customthm}
Note that, when $|S|>2$, we have $T \setminus X^* \neq \emptyset$
and at least one of $H^+$ and $H^-$ is non-empty.
We use the following notation.
Fix $e^* \in X^*$ such that $w(e^*)$ is minimal over the edges in $X^*$
and denote $Y^* = X^* \setminus \{e^*\}$.
We will bound the weight of the $S$-join $J'= X^*\Delta T$
piece by piece, as visualized in Figure~\ref{fig:alg2_TXYP}.
The weight of $e^*$ will be bounded by case analysis.

%\begin{figure}
%\hfill\includegraphics[width=.7\linewidth]{TXYP}\hfill\ %
%\caption{Three regions of $J'$ with upper bounds on their weights.
%The remaining edge $e^*$ is considered separately.}
%\label{fig:alg2_TXYP}
%\end{figure}

\begin{observation}
\label{lem:alg2_YminT2}
If $|S| > 2$ and
$e^*\in T$, we have
$|Y^* \setminus T| \geq 2$.
\end{observation}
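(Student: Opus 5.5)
The plan is a counting argument. It compares the $n$ edges of the Hamiltonian cycle $X^*$ with the $n-1$ edges of the spanning tree $T$. The hypothesis $|S|>2$ then rules out the extremal case in which $T$ contains all but one edge of $X^*$.

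First, $|X^*\setminus T|\geq 1$ always holds: $X^*$ is a cycle, while $T$ is acyclic and so cannot contain all of $X^*$.

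Next, suppose that $|X^*\setminus T| = 1$, say $X^*\setminus T=\{f\}$. Then $T\supseteq X^*\setminus\{f\}$. The set $X^*\setminus\{f\}$ is a Hamiltonian path, so it has $n-1$ edges. Since $|T|=n-1$, we get $T = X^*\setminus\{f\}$, i.e., $T$ is a Hamiltonian path. In a path exactly the two endpoints have odd degree, so $\odd(T)$ consists of exactly two vertices. This gives $|S|=2$, contradicting the assumption $|S|>2$. Hence $|X^*\setminus T|\geq 2$.

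Finally, we use $e^*\in T$. Since $Y^* = X^*\setminus\{e^*\}$ and $e^*\in T$, we have
\[
Y^*\setminus T = X^*\setminus T .
\]
Therefore $|Y^*\setminus T| = |X^*\setminus T| \geq 2$.

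No step is a real obstacle. The only point needing care is the edge-count identification $T = X^*\setminus\{f\}$: it uses that $X^*$ is a Hamiltonian cycle with exactly $n$ edges, which holds for a complete metric graph as noted at the start of this section.
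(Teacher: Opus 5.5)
Your proof is correct and is essentially the paper's argument: both rest on the fact that a spanning tree contained in the Hamiltonian cycle $X^*$ must be a path with exactly two odd vertices, combined with the edge count $|T|=n-1$ versus $|X^*|=n$. The only difference is cosmetic: the paper counts $T\setminus Y^*$ (it exhibits $e\in T\setminus X^*$ together with $e^*$, then uses $|Y^*|=|T|$), whereas you count $X^*\setminus T$ directly and use $e^*\in T$ to identify $Y^*\setminus T$ with $X^*\setminus T$.
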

\begin{proof}
The only spanning trees contained in the tour $X^*$ are paths, i.e., they have
only two odd vertices.
Therefore, if $|S|>2$, there is $e\in T\setminus X^*$.
Since $e^*\in T$, we have
$\{e, e^*\} \subseteq T\setminus Y^*$.
Finally, observe that $|Y^*|=|T|$, hence $|Y^* \setminus T| = |T\setminus Y^*|\geq 2$.
\end{proof}

\begin{observation}
\label{lem:alg2_TminX}
We have
\begin{enumerate}
\item\label{lem:alg2_TminX:1} $w(T\setminus X^*) \leq w(H^+) + w(H^-)$
\item\label{lem:alg2_TminX:2} $w((X^*\setminus T)\setminus P) \leq w(H^-)$.
\end{enumerate}
\end{observation}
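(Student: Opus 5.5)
Both parts follow from Lemma~\ref{lem:alg2_T} and a splitting of $T\setminus X^*$ and $X^*\setminus T$ according to membership in $P$; no new construction is needed.

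For part~\ref{lem:alg2_TminX:1}, I partition $T\setminus X^*$ into $(T\setminus X^*)\cap P$ and $(T\setminus X^*)\setminus P$.
\begin{itemize}
\item The first set lies in $P\setminus X^* = H^+$, so its weight is at most $w(H^+)$.
\item The second set lies in $T\setminus P$, so Lemma~\ref{lem:alg2_T}.\eqref{lem:alg2_T:1} bounds its weight by $w(H^-)$.
\end{itemize}
Adding the two bounds gives $w(T\setminus X^*)\le w(H^+)+w(H^-)$. This is exactly the computation already used in the proof of Lemma~\ref{lem:alg2_S2}, and I would simply cite it.

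For part~\ref{lem:alg2_TminX:2}, I note that $(X^*\setminus T)\setminus P \subseteq X^*\setminus P = H^-$. Monotonicity of $w$ under inclusion, with $w\ge 0$, then gives $w((X^*\setminus T)\setminus P)\le w(H^-)$.

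There is no real obstacle here. The one point that needs care is that the bound $w(T\setminus P)\le w(H^-)$ from Lemma~\ref{lem:alg2_T} holds for the sign-flipped weights $w'$ of Algorithm~\ref{alg:alg2}. That lemma was proved for exactly those weights, via the Kruskal argument, so it applies as stated.
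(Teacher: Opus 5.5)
Your proof is correct and follows the paper's argument: splitting $T\setminus X^*$ by membership in $P$, bounding the part in $P$ by $H^+=P\setminus X^*$ and the rest by $w(T\setminus P)\le w(H^-)$ from Lemma~\ref{lem:alg2_T}, and using $(X^*\setminus T)\setminus P\subseteq X^*\setminus P=H^-$ for the second part. You are also right that Lemma~\ref{lem:alg2_T} was proved for the sign-flipped weights of Algorithm~\ref{alg:alg2}, so it applies here as stated.
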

\begin{proof}
The first statement follows from
$
T\setminus X^* = ((T\setminus X^*) \cap P) \cup ((T \setminus X^*) \setminus P)
\subseteq (P \setminus X^*) \cup (T\setminus P),
$
the definition of $H^+$ and Lemma~\ref{lem:alg2_T}.

The second statement follows from
$(X^* \setminus T)\setminus P \subseteq X^* \setminus P = H^-$.
\end{proof}

\begin{observation}
\label{lem:alg2_YTP_copy}
\label{lem:alg2_YTP}
We have
\[
w((Y^*\setminus T)\cap P) \leq w(T\setminus Y^*).
\]
\end{observation}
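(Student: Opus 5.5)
The plan is to use a bijective exchange argument between the two spanning trees $T$ and $Y^*$. Note that $Y^* = X^*\setminus\{e^*\}$ is a Hamiltonian path, hence a spanning tree, so $|Y^*| = |T|$ and $|Y^*\setminus T| = |T\setminus Y^*|$.

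\textbf{Step 1 (exchange bijection).} I will construct a bijection $\sigma\colon Y^*\setminus T\to T\setminus Y^*$ such that, for every $f\in Y^*\setminus T$, the edge $\sigma(f)$ lies on the fundamental cycle $C_T(f)$ of $f$ with respect to $T$. Equivalently, $T-\sigma(f)+f$ is again a spanning tree; this is Brualdi's symmetric exchange for graphic matroids.

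I will prove its existence via Hall's theorem. For $F\subseteq Y^*\setminus T$, let $N(F)$ denote the set of edges of $T\setminus Y^*$ lying on the cycles $C_T(f)$, $f\in F$. Every such cycle is contained in $(T\cap Y^*)\cup N(F)\cup\{f\}$. Hence each $f\in F$ lies in the span (in the graphic-matroid sense) of $(T\cap Y^*)\cup N(F)$. That span therefore has rank at least $|(T\cap Y^*)\cup F|=|T\cap Y^*|+|F|$, because this set is a subset of the tree $Y^*$ and so is independent. Its rank is also at most $|T\cap Y^*|+|N(F)|$, which gives $|N(F)|\ge |F|$. Hall's condition holds, so a perfect matching $\sigma$ exists.

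\textbf{Step 2 (using minimality of $T$ w.r.t.\ $w'$).} Since $T$ is an MST with respect to $w'$ and $T-\sigma(f)+f$ is a spanning tree, we get $w'(\sigma(f))\le w'(f)$ for each $f$.

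Now take $f\in (Y^*\setminus T)\cap P$. Then $w'(f)=-w(f)\le 0$, so $w'(\sigma(f))\le -w(f)$. If $w(f)>0$ this forces $w'(\sigma(f))<0$, hence $\sigma(f)\in P$ and $w'(\sigma(f))=-w(\sigma(f))$. In either case $w(\sigma(f))\ge w(f)$; the case $w(f)=0$ is trivial since weights are nonnegative. This is exactly where the flipped signs matter: with the weights $w'$ of Algorithm~\ref{alg:alg1} the inequality would go the wrong way for predicted edges.

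\textbf{Step 3 (summing).} Since $\sigma$ is injective,
\[
w((Y^*\setminus T)\cap P)=\sum_{f\in (Y^*\setminus T)\cap P} w(f)\le \sum_{f} w(\sigma(f))\le w(T\setminus Y^*).
\]

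The main obstacle is Step 1. A naive exchange only guarantees that each $f$ can be swapped with some edge of $T\setminus Y^*$, but different $f$'s could be charged to the same edge. Injectivity is therefore essential, and the Hall/rank argument above is the point to check carefully. In particular, it must be the version where $f$ is \emph{added} to $T$, not the version where an edge is removed from $Y^*$.
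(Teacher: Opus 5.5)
Your proof is correct, but it is organized differently from the paper's. You exchange edge by edge. The Hall/rank argument yields Brualdi's injection $\sigma$ with $T-\sigma(f)+f$ a spanning tree; it is sound because $(T\cap Y^*)\cup F$ is a forest contained in the span of $(T\cap Y^*)\cup N(F)$. Minimality of $T$ under the flipped weights then gives $w(\sigma(f))\ge w(f)$ for each predicted $f$.

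The paper instead performs a single aggregate exchange. It extends the forest $Y^*\cap P$ by edges of $T$ to a spanning tree $T'$, so that $T'\setminus T=(Y^*\setminus T)\cap P\subseteq P$. Comparing $w'(T')\ge w'(T)$ on the symmetric difference gives $w(T'\setminus T)\le w((T\setminus T')\cap P)-w((T\setminus T')\setminus P)$. The inclusion $(T\setminus T')\cap P\subseteq T\setminus Y^*$ holds because every edge of $Y^*\cap P$ lies in $T'$.

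Because the paper compares only total weights, the injectivity issue you identify as the main obstacle never arises. Its proof needs neither Hall's theorem nor the bijective exchange theorem, so it is shorter and fully elementary.

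What your route buys is a finer charging scheme: each predicted edge of $Y^*\setminus T$ of positive weight is matched to a distinct, at least as heavy, predicted edge of $T\setminus Y^*$. That could be useful if one wanted to account for individual edges, but for the inequality as stated it is more machinery than necessary. Both arguments in fact give the slightly stronger bound with $w((T\setminus Y^*)\cap P)$ on the right-hand side.
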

\begin{proof}
Consider the spanning tree $T'$ created by adding some edges of $T$
to $Y^*\cap P$ (which is a forest). 
This way, $T'\setminus T=(Y^*\cap P)\setminus T=(((Y^*\cap T)\cap P) \cup ((Y^*\setminus T)\cap P))\setminus T=(Y^*\setminus T)\cap P$, 
see also Figure~\ref{fig:alg2_Tprimem}.

\begin{figure}
\hfill\includegraphics{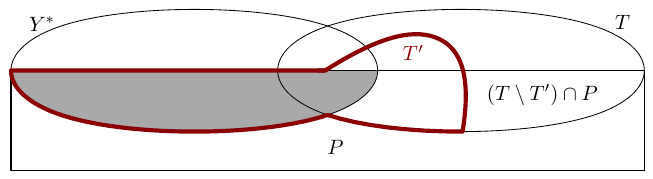}\hfill\ %
\caption{$T$ and the new spanning tree $T'$ containing $Y^*\cap P$.}
\label{fig:alg2_Tprimem}
\end{figure}

Therefore, it is enough to bound the weight of $T'\setminus T$.
By the minimality of $T$, we have $w'(T') \geq w'(T)$ and therefore
$w'(T'\setminus T) \geq w'(T\setminus T')$ by subtracting $w'(T'\cap T)$ on both sides.
Since $T'\setminus T \subseteq P$, 
\begin{align*}
w(T'\setminus T) = -w'(T'\setminus T)
	\leq -w'(T\setminus T') = w((T\setminus T')\cap P) - w((T\setminus T')\setminus P)
\leq w(T\setminus Y^*).
\end{align*}
The last inequality follows from
$w$ being non-negative
and $(T\setminus T')\cap P \subseteq T\setminus Y^*$,
see Figure~\ref{fig:alg2_Tprimem}.
This concludes the proof, since
$T'\setminus T=(Y^*\setminus T)\cap P$.
\end{proof}

\begin{proof}[Proof of Lemma~\ref{lem:alg2_J}]
We consider the following cases:

\paragraph{Case $e^*\notin P$.}
Here, $e^*\in H^-$ and therefore $w(e^*)\leq w(H^-)$.
Therefore, we have
\begin{equation}\label{eqn:lem:alg2_S4}
w((X^*\setminus T)\cap P)
= w((Y^*\setminus T)\cap P)
\overset{Obs.\,\ref{lem:alg2_YTP_copy}}{\leq} w(T\setminus Y^*)
\leq w(T\setminus X^*) + w(e^*)
\leq w(T\setminus X^*) + w(H^-).
\end{equation}
We get
\begin{align*}
w(J') & = w((X^*\setminus T)\setminus P)
	+ w((X^*\setminus T)\cap P)
	+ w(T\setminus X^*)\\
	& \overset{Obs.\,\ref{lem:alg2_TminX}.\eqref{lem:alg2_TminX:2}}{\leq} w(H^-) + w((X^*\setminus T)\cap P)
	+ w(T\setminus X^*)\\
    & \overset{\eqref{eqn:lem:alg2_S4}}{\leq} w(H^-) + \big(w(T\setminus X^*) + w(H^-)\big) + w(T\setminus X^*)\\
	& \overset{Obs.\,\ref{lem:alg2_TminX}.\eqref{lem:alg2_TminX:1}}{\leq} 2w(H^+) + 4w(H^-).
\end{align*}
which implies the statement of the lemma.

\paragraph{Case $e^*\in P\cap T$.}
We have $X^*\setminus T = Y^*\setminus T$ and $T\setminus Y^* = (T\setminus X^*)\cup\{e^*\}$.
We have
\[
w(X^*\setminus T) = w((X^*\setminus T)\setminus P) + w((Y^*\setminus T)\cap P)
	\overset{Obs.\,\ref{lem:alg2_YTP_copy}}{\leq} w(H^-) + w(T\setminus Y^*)
	= w(H^-) + w(T\setminus X^*) + w(e^*).
\]
Since $e^*$ is the lightest edge in $X^*$ and
$|X^*\setminus T|=|Y^*\setminus T|\geq 2$ (Observation~\ref{lem:alg2_YminT2}),
we have
$w(e^*)\leq \frac12 w(X^*\setminus T)$.
Therefore the previous equation implies
\[
w(X^*\setminus T) \leq 2(w(X^*\setminus T) - w(e^*)) \leq 2(w(H^-) + w(T\setminus X^*)).
\]
We get
\[
w(J') = w(X^*\setminus T) + w(T\setminus X^*)
	\leq 2w(H^-) + 3w(T\setminus X^*)
	\overset{Obs.\,\ref{lem:alg2_TminX}.\eqref{lem:alg2_TminX:1}}{\leq} 3w(H^+) + 5w(H^-)
\]
which implies the statement of the lemma.

\paragraph{Case $e^*\in P\setminus T$.}
First, we show that $w(e^*)\leq w(T\setminus X^*)$.
Since $e^*\notin T$, there is a unique cycle $C \subseteq T\cup\{e^*\}$
and, since $T$ is an MST w.r.t. weights $w'$, all the edges $e\in C$ have weight $w'(e)\leq w'(e^*)$.
Note that $C\setminus X^* \neq \emptyset$. Indeed, both $C$ and $X^*$ are cycles and $|X^*|=n\geq |C|$, hence the only possibility for $C\setminus X^* = \emptyset$ is that $C=X^*$. However in the latter case $|S|=|\odd(T)|=2$,
a contradiction. Thus, we can choose $f\in C\setminus X^*$ and we have
$w'(f) \leq w'(e^*) < 0$ (because $e^*\in P$), hence $w(T\setminus X^*)\geq w(f)\geq w(e^*)$. 
Therefore, we have
\begin{align}
w((X^*\setminus T)\cap P)
	& = w((Y^*\setminus T)\cap P) + w(e^*)
	\overset{Obs.\,\ref{lem:alg2_YTP_copy}}{\leq} w(T\setminus Y^*) + w(e^*)\nonumber \\
    & \leq 2w(T\setminus X^*)
	\overset{Obs.\,\ref{lem:alg2_TminX}.\eqref{lem:alg2_TminX:1}}{\leq} 2w(H^+) + 2w(H^-).\label{eqn:einPminusT}
\end{align}
The second last inequality follows from $T\setminus Y^*=T\setminus X^*$
(since $e^*\notin T$) and $w(e^*)\leq w(T\setminus X^*)$. 
Thus
\begin{align*}
w(J') &= w((X^*\setminus T)\setminus P)
	+ w((X^*\setminus T)\cap P)
	+ w(T\setminus X^*)\\
& \overset{Obs.\,\ref{lem:alg2_TminX}.\eqref{lem:alg2_TminX:2}}{\leq} w(H^-)+ w((X^*\setminus T)\cap P)
	+ w(T\setminus X^*)\\    
&\overset{\eqref{eqn:einPminusT}}{\leq} w(H^-) + 2w(H^+) + 2w(H^-) + w(T\setminus X^*)\\
	& \overset{Obs.\,\ref{lem:alg2_TminX}.\eqref{lem:alg2_TminX:1}}{\leq} 3w(H^+) + 4w(H^-).
\end{align*}
This concludes the proof of the lemma.
\end{proof}

\begin{customthm}{Theorem~\ref{thm:alg2} restated}
Let $G=(V, E)$ be a complete graph on $n$ vertices with edge weights
$w\colon E\to \R_+$ satisfying triangle inequality.
There is an algorithm running in time $O(n^2\log n)$ which, receiving
a prediction $P \subseteq E$ with error $\eta$ with respect to some TSP tour $X^*$,
outputs a TSP tour $X$ of weight
\[
w(X) \leq w(X^*) + 7w(H^+) + 11w(H^-).
\]
\end{customthm}

\begin{proof}%[Proof of Theorem~\ref{thm:alg2}]
Algorithm~\ref{alg:alg2} returns a shortcutting of the multiset
$T+J$ whose weight is, by triangle inequality, at most
$w(T) + w(J)$.

In case $|S|=2$, $J$ is the optimal T-join of the odd vertices in $T$,
i.e, $w(J)\leq w(J')$.
Combining Lemma~\ref{lem:alg2_T} and Lemma~\ref{lem:alg2_S2} we get
\begin{align*}
w(T) + w(J) &\leq w(T) + w(J')\\
	&\leq w(T\cap X^*) + w(H^+) + w(H^-) + w(X^*\setminus T) + w(H^+) + w(H^-)\\
	&\leq w(X^*) + 2w(H^+) + 2w(H^-).
\end{align*}

If $|S|>2$, $J$ is a 2-approximation of the optimal T-join, i.e.,
$w(J)\leq 2w(J')$.
In this case, we use Lemma~\ref{lem:alg2_T} and Lemma~\ref{lem:alg2_J}:
\begin{align*}
w(T) + w(J) &\leq w(T) + 2w(J')\\
	&\leq w(T\cap X^*) + w(H^+) + w(H^-) + 2\big(3w(H^+) + 5w(H^-)\big)\\
	&\leq w(X^*) + 7 w(H^+) + 11 w(H^-).
\qedhere
\end{align*}
\end{proof}

\section{Euclidean TSP}
In Euclidean TSP, the input is given as coordinates of the $n$ vertices
of the complete graph $G$ in the Euclidean plane
and the weight of the edge between two vertices $u,v\in V$
is computed as a Euclidean distance between them.
In particular, the size of the input is linear in $O(n + |P|)$, which is $O(n)$
for any prediction of reasonable quality.
This is much more succinct representation than the vector of edge weights
which has $\Theta(n^2)$ coordinates.

First, we argue that we can find $T$ in time $O((n+|P|)\log n)$. To this end we
recall the notion of Delaunay triangulation, which is a classical object in
Computational Geometry \citep[Chapter 9]{BergCKO08}.
It can be defined as any triangulation of the Delaunay Graph which has the following
characterization.
\begin{proposition}[see Theorem~9.6 in \citet{BergCKO08}]
\label{prop:delaunay}
Let $V$ be a set of points in the plane. Two points $u,v\in V$ form an edge
of the Delaunay graph of $V$ if and only if there is a closed disk containing
$u$ and $v$ on its boundary which does not contain any other point of $V$.
\end{proposition}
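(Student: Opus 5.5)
We take the Delaunay graph of $V$ to be the straight-line dual of the Voronoi diagram of $V$, following \citet{BergCKO08}. The Voronoi cell of $u\in V$ is $\mathcal V(u)=\{q\in\R^2 : \lVert q-u\rVert_2\le \lVert q-x\rVert_2 \text{ for all } x\in V\}$. Two sites $u,v$ are adjacent in the Delaunay graph if and only if $\mathcal V(u)$ and $\mathcal V(v)$ share a Voronoi \emph{edge}, i.e., a segment or ray of positive length, and not merely a single Voronoi vertex. The plan is to translate both directions of Proposition~\ref{prop:delaunay} into statements about the point $q$ that will serve as the center of the disk. The key fact is that a closed disk centered at $q$ with $u,v$ on its boundary contains no other point of $V$ exactly when $q$ is equidistant from $u$ and $v$ and strictly closer to both of them than to every $x\in V\setminus\{u,v\}$.

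\emph{($\Rightarrow$)} Assume $\mathcal V(u)$ and $\mathcal V(v)$ share an edge $e$, and pick $q$ in the relative interior of $e$. Then $q$ lies on the bisector of $u$ and $v$, so $\lVert q-u\rVert_2=\lVert q-v\rVert_2=:r$, and $r\le\lVert q-x\rVert_2$ for every $x\in V$. We need this inequality to be strict for $x\ne u,v$. If some third site $x$ had $\lVert q-x\rVert_2=r$, then $q$ would lie in three cells and would therefore be a Voronoi vertex (equivalently, an endpoint of $e$), contradicting the choice of $q$ in the relative interior. Consequently the closed disk of radius $r$ centered at $q$ has $u$ and $v$ on its boundary and contains no other point of $V$.

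\emph{($\Leftarrow$)} Assume such a closed disk exists, with center $q$ and radius $r$. Then $\lVert q-u\rVert_2=\lVert q-v\rVert_2=r<\lVert q-x\rVert_2$ for all $x\in V\setminus\{u,v\}$. The finitely many strict inequalities are preserved under small perturbations of $q$, so we can move $q$ a little along the bisector $\ell$ of $u,v$ in both directions without breaking them. This produces a nondegenerate segment of $\ell$ on which every point is equidistant from $u,v$ and strictly closer to them than to any other site. That segment lies in $\mathcal V(u)\cap\mathcal V(v)$, so it belongs to a common Voronoi edge, and hence $uv$ is a Delaunay edge.

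The main obstacle I expect is the first direction: justifying that a point in the relative interior of a Voronoi edge cannot be equidistant from a third site. This needs the standard structural description of the Voronoi diagram, namely that points equidistant from three or more sites are exactly the Voronoi vertices and that edges are the maximal pieces of bisectors, which I would cite from \citet{BergCKO08}. The second direction is a routine continuity argument. Degenerate configurations, such as collinear sites or four cocircular sites, need no special treatment, because the statement concerns only the Delaunay graph and not a particular triangulation of it.
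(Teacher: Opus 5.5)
Your proof is correct and follows essentially the standard argument behind the cited result. The paper gives no proof of its own and defers to Theorem~9.6 of \citet{BergCKO08}, which is derived, as you do, from the Voronoi-dual definition of the Delaunay graph and the largest-empty-circle characterization of Voronoi vertices and edges in their Chapter~7; one wording fix is that the structural fact you invoke should be about points with three or more \emph{nearest} sites (not merely three equidistant ones), which your inequality $r\le\lVert q-x\rVert_2$ already supplies.
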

We construct a Delaunay triangulation $D$ of $V$ (it has size $O(n)$ and it can be found in time $O(n\log n)$), then we find $T$ as an MST of $(V, P\cup D)$.
It is well known that an Euclidean MST uses only the edges of $D$
\citep{ShamosH75}.
It is not difficult to show that if we modify the weights of edges in
the set $P$, the MST will use only edges in $P\cup D$.
% The proof of the following proposition is a variation of the classical proof that a Euclidean MST
% of a set of points is a subgraph of its Delaunay triangulation.
\begin{proposition}
Let $V$ be a set of $n$ points in the Euclidean plane,
$G=(V,\binom{V}{2})$ and $P\subseteq \binom{V}{2}$ with weights
$w'$ such that $w'(uv) < 0$ for every $uv\in P$ and $w'(uv) = \lVert u-v\rVert_2$
for every $e\in E\setminus P$.
Let $D\subseteq \binom{V}{2}$ be the Delaunay triangulation of $V$.
Then $D\cup P$ contains an MST $T$
with respect to $w'$.
\end{proposition}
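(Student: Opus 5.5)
The plan is to run the classical cut/cycle-exchange argument: take any MST $T$ with respect to $w'$ and show that every edge of $T\setminus(D\cup P)$ can be swapped for a $D$-edge, or a $D$-path, without increasing $w'(T)$. Equivalently, we show that for every edge $uv\notin P\cup D$ there is a path from $u$ to $v$ in $D\cup P$ all of whose edges have $w'$-weight strictly smaller than $\lVert u-v\rVert_2$. By the cycle property, such an edge is then never needed: running Kruskal on $D\cup P$ yields a spanning tree that is also an MST of the complete graph.

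The key geometric step concerns an edge $uv\notin P\cup D$. Consider the closed disk $B$ with diameter $uv$. Since $uv\notin D$, Proposition~\ref{prop:delaunay} says that $B$ (whose boundary contains $u$ and $v$) contains some other point $z\in V$. Because $z$ lies in the disk with diameter $uv$, the angle $uzv$ is at least $90^\circ$, so $\lVert u-z\rVert_2<\lVert u-v\rVert_2$ and $\lVert z-v\rVert_2<\lVert u-v\rVert_2$. These inequalities are strict when $z\ne u,v$, since then $\lVert u-z\rVert^2+\lVert z-v\rVert^2\le\lVert u-v\rVert^2$ with both terms positive.

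We then induct on the Euclidean length, processing pairs in increasing order of distance. The claim is that every pair $u,v$ is connected in $D\cup P$ by a path whose edges have $w'$-weight at most $\lVert u-v\rVert_2$, and strictly less when $uv\notin D\cup P$. The base case is a pair with $uv\in D\cup P$: the edge itself has $w'(uv)\le\lVert u-v\rVert_2$, since $w'$ is negative on $P$ and Euclidean otherwise. For the inductive step, $uv\notin D\cup P$; we apply the induction hypothesis to the strictly shorter pairs $uz$ and $zv$ and concatenate the two paths. Every edge on the resulting walk has $w'$-weight at most $\max(\lVert u-z\rVert,\lVert z-v\rVert)<\lVert u-v\rVert=w'(uv)$. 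Finiteness of $V$ makes the induction well-founded.

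To conclude, let $T$ be an MST with respect to $w'$ containing the fewest edges outside $D\cup P$. If $uv\in T\setminus(D\cup P)$, removing it splits $T$ into two components. The path constructed above crosses this cut at some edge $f\in D\cup P$ with $w'(f)<w'(uv)$, so $T-uv+f$ is a spanning tree of strictly smaller $w'$-weight, contradicting minimality. Hence $T\subseteq D\cup P$. The step I expect to need the most care is the strictness and the degenerate cases, namely co-circular points and non-unique Delaunay triangulations. Proposition~\ref{prop:delaunay} characterizes the Delaunay graph, and any triangulation contains it, so using the graph suffices. Strictness also avoids ties; even if only non-strict inequalities held, the exchange argument would still produce some MST inside $D\cup P$, which is all the proposition claims.
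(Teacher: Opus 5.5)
Your proof is correct. Its geometric core is the same as the paper's: a point $z\neq u,v$ in the closed disk with diameter $uv$ makes both $uz$ and $zv$ strictly shorter than $uv$, and hence strictly $w'$-lighter. The difference lies in how this becomes the MST statement.

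The paper does a single exchange. Given any MST $T$ and an edge $uv\in T\setminus(D\cup P)$, one of $uz$ and $vz$ crosses the cut of $T-uv$, and swapping it in strictly decreases $w'(T)$. It does not matter whether that replacement edge lies in $D\cup P$, because the contradiction is with the weight-minimality of $T$. This needs no induction, and it shows directly that \emph{every} $w'$-MST lies in $D\cup P$.

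Your route adds two devices:
\begin{itemize}
\item an induction on length that upgrades the single lighter edge to a lighter path inside $D\cup P$;
\item the choice of an MST with the fewest edges outside $D\cup P$.
\end{itemize}
Both are needed only if the swap could tie in weight. With the strict inequalities available here, your final step already contradicts weight-minimality, so both are superfluous. In fact your argument, as written, also yields the stronger ``every MST'' conclusion.

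What your approach buys is robustness: it would still produce an MST inside $D\cup P$ if the weight comparisons were only non-strict. The swapped-in edge then lies in $D\cup P$ and strictly reduces the count of outside edges. The paper's shorter argument relies on strictness to avoid exactly this.
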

\begin{proof}
Consider a fixed MST $T$ and any edge $uv=e\in T$.
If $e\in P\cup D$, we are done. Otherwise, we show that $T$ is not an MST.

By Proposition~\ref{prop:delaunay}, every closed disc having $u,v$ on its boundary contains another point
of $V$, since $uv\notin D$ and therefore it is not contained in the Delaunay Graph of $V$.
Consider a closed disc $D$ whose diameter is the segment $uv$ and let $z\in
V\setminus \{u,v\}$ be another point contained in $D$.
We have
$\lVert z-v\rVert_2 < \lVert u-v\rVert_2$ and
$\lVert z-u\rVert_2 < \lVert u-v\rVert_2$.
Therefore we have $w'(uz) < w'(uv)$ and $w'(vz) < w'(uv)$ regardless whether
$uz$ and $vz$ belong to $P$, since the weights of edges in $P$ are negative.
Without loss of generality, assume that $z$ and $u$ are in the same component of $T\setminus uv$.
Since $w'(vz) < w'(uv)$, $(T\setminus\{uv\})\cup\{vz\}$
is a spanning tree of a smaller weight than $T$.
\end{proof}

Finally,
\citet{VaradarajanA99} show that a $(1+\epsilon)$-approximation of the min-weight perfect Euclidean matching
can be found in time $O(\epsilon^{-3} n\log^6 n)$.
We find $J$
using their algorithm with $\epsilon=2$.
This way, each step of Algorithm~\ref{alg:alg2} can be implemented
in time $O((n+|P|) \polylog n)$ for Euclidean TSP.

\section{Graphical TSP}
\label{sec:graphical}

In graphical TSP, the optimal TSP tour may traverse some edges twice
unlike in metric TSP.
This means that knowing the set of edges $F$ used by the optimal tour, there
is still a non-trivial step needed to identify which edges need to be traversed
twice to construct the actual tour.
Consider the example in Figure~\ref{fig:graphicalTSP}:
the underlying set $F$ of the optimal tour contains all edges drawn in the picture
regardless of their color.
In order to transform $F$ into a tour, we need to
perform a costly operation:
compute the optimal T-join
of the
odd vertices in $(V,F)$ -- edges drawn both black and orange in Figure~\ref{fig:graphicalTSP} which will be traversed twice in the resulting tour.
Turing $F$ into an optimal solution requires computing an optimal $\odd(F)$-join
which is computationally demanding.

\begin{figure}
\centering
\includegraphics[width=0.5\linewidth]{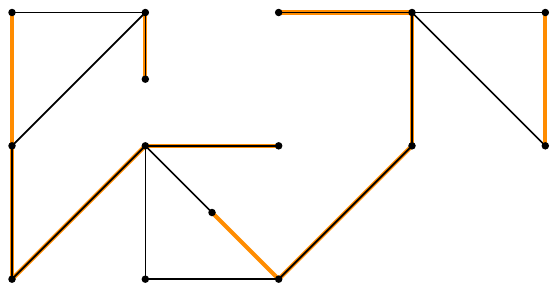}
\caption{Solution of a graphical TSP. Only the edges used by the optimal TSP
tour are drawn, you can imagine all the other edges to have a very large weight or
be missing, since triangle inequality does not necessary hold in graphical TSP.
The tour can be decomposed as $X^* = T + J$, where the edges of the spanning
tree $T$ are drawn black, the edges of the T-join $J$ are drawn orange,
and the edges drawn both black and orange are traversed twice by the tour.
}
\label{fig:graphicalTSP}
\end{figure}

In order to achieve near-linear running time, we need the help of the prediction
when searching for the T-join.
Therefore we require the prediction $P$ to be a multiset containing the information
on how many times is each edge traversed
and we define the prediction error as in Equation \eqref{eq:grapherror}.

Our algorithm for graphical TSP is summarized in Algorithm~\ref{alg:alg3}.
Similarly to Algorithm~\ref{alg:alg1}, it finds an MST $T$ biased towards
edges in $P$ (the weight of each $e\in P$ is set to 0).
Then, it finds a 2-approximate $\odd(T)$-join $J$
using the algorithm by \citet{GoemansW95} which runs in time $O(m\log n)$,
where $m=|E|$ is the number of edges in the input graph,
biased towards the edges in $P-T$, i.e., the edges in $P$ not used by $T$ or appearing
in $P$ more than once have weight 0.

Ideally, we would have a perfect prediction $P=X^*= T+J$ as in
Figure~\ref{fig:graphicalTSP} and $P-T = X^*-T$ would be used as a prediction
for finding $J$.
In our analysis, we again consider a reference $\odd(T)$-join
$J'=X^*\Delta T$. Our crucial points are that the weight of $P-T$ diverges
from the ideal $X^*-T$ by at most $\eta$, while the weight
of the unpredicted edges in $(X^*-T) - (P-T)$ is bounded by $\eta$,
see Figure~\ref{fig:graphical_XTPT} for a summary of our bounds.
Note that here we consider error $\eta$ w.r.t. to the tour $X^*$ not w.r.t.
the optimal T-join of $T$.
The key observation is that
we loose factor of 2 only for the edges in $J' - (P-T)$ whose
total weight is $O(\eta)$, while each edge in $P-T$ is taken at most once.

\begin{remark}
\citet{AntoniadisEPV25} proposed an algorithm with predictions for finding an S-join. We cannot use their analysis since it would provide bounds in terms of the prediction error w.r.t. an optimal $\odd(T)$-join,
while we need a bound in terms of the error $\eta$
w.r.t. $X^*$ which may be very different.
Consider the following situation:
$X^* = T^* + J^*$ and $P = T + J^*$, i.e., the edges in $P-T \subseteq X^*$ 
do not contribute to $\eta$. However, as a prediction of
an $\odd(T)$-join it can be completely wrong,
being an $\odd(T^*)$-join of a different tree $T^*$.
\end{remark}

\begin{figure}
\centering
\includegraphics{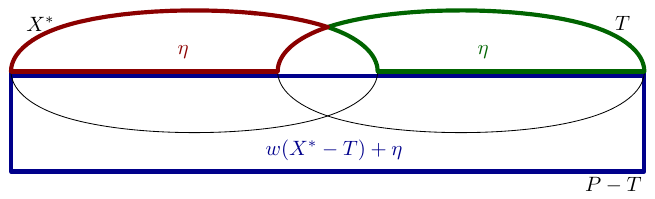}
\caption{Bounding the weight of the reference T-join $J'$.
Shows the bounds on the weight of each area highlighted in the picture,
namely $w((X^*-T) - (P-T)) \leq \eta$, $w((T-X^*) - (P-T))\leq \eta$, and
$w(P-T) \leq w(X^*-T)+\eta$.}
\label{fig:graphical_XTPT}
\end{figure}

\begin{algorithm2e}
\DontPrintSemicolon
\caption{Near-linear time algorithm for \hbox{graphical TSP\hspace{-1.5em}}}
\label{alg:alg3}
\textbf{Input:} $G=(V, E)$,
$w\colon E\to \R_+$, multiset $P$\;
\textbf{Output:} TSP tour over $G$\\\;
\For{$e\in E$}{
    $w'(e) := 0$ if $e \in P$\;
    $w'(e) := w(e)$ otherwise.\;
}
$T :=$ MST on $G$ w.r.t. $w'$\;
$S:= $ set of odd vertices in $T$\;
\For{$e\in E$}{
    $w''(e) := 0$ if $e \in (P-T)$\;
    $w''(e) := w(e)$ otherwise.\;
}
$J:= $ 2-approximate T-join of $S$ w.r.t. to $w''$\;
Output: short-cutting of the Eulerian multiset $T+J$\;
\end{algorithm2e}

Interestingly, Algorithm~\ref{alg:alg1} can be used for graphical TSP as stated,
receiving only the prediction $P$ about the ground set of the optimal tour $X^*$.
In this case, its running time will be $O(m\log n) + O(|S|^3)$, where
the number $|S|$ of the odd vertices in $T$ might be as large as $n$.
Details can be found in Section~\ref{sec:alg1-app} in the Appendix.

\subsection{Analysis of Algorithm~\ref{alg:alg3}}

We denote $\bar X^*$ the multiset of edges, where
each edge is contained in $\bar X^*$ as many times as it is traversed by the optimal tour.
On the other hand, we denote $X^*$ the ground set of $\bar X^*$ containing all edges
which are traversed by the optimal tour at least once.

\begin{lemma}
\label{lem:alg3_T}
\[ w(T) \leq w(T\cap X^*) + \eta.\]
\end{lemma}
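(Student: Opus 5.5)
We follow the proof of Lemma~\ref{lem:alg1_T}, adapted to multisets. Here $P$ is a multiset with multiplicities, and $w'(e)=0$ whenever $m_P(e)\geq 1$. The error is $\eta = w(P\Delta \bar X^*)$. The tree $T$ is a set, and $X^*$ is the ground set of $\bar X^*$.

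The first step is the bound $w(T\setminus P)\leq w(X^*\setminus P)\leq \eta$, where $T\setminus P$ means the edges of $T$ with $m_P(e)=0$. The graph $(V,X^*)$ is connected, since the optimal tour visits every vertex. Hence $P\cup (X^*\setminus P)\supseteq X^*$, taken as a set of edges, is connected and spanning. By minimality of $T$ with respect to $w'$,
\[ w(T\setminus P)=w'(T)\leq w'\big(\mathrm{supp}(P)\cup (X^*\setminus P)\big) = w(X^*\setminus P). \]
Every edge $e\in X^*\setminus P$ has $m_{\bar X^*}(e)\geq 1$ and $m_P(e)=0$. It therefore contributes at least $w(e)$ to $\eta$, which gives $w(X^*\setminus P)\leq w(\bar X^*-P)$.

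The second step handles $T\cap P$. We split it as $(T\cap P\cap X^*)\cup (T\cap (P\setminus X^*))$. The first part lies in $T\cap X^*$. Each edge of the second part has $m_P(e)\geq 1$ and $m_{\bar X^*}(e)=0$, so it contributes at least $w(e)$ to $w(P-\bar X^*)$.

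The third step combines the two. We get
\[ w(T)=w(T\cap P)+w(T\setminus P)\leq w(T\cap X^*) + w(P-\bar X^*) + w(\bar X^*-P) = w(T\cap X^*)+\eta. \]
The only care needed is in this combination: the two error contributions must come from disjoint parts of $P\Delta\bar X^*$. They do, because one part lies in $P-\bar X^*$ and the other in $\bar X^*-P$. The rest is routine bookkeeping of multiplicities.
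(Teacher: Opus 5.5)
Your proof is correct and follows the paper's argument. The paper compares the $w'$-MST $T$ against the connected spanning graph $P\cup H^-\supseteq X^*$ to get $w(T\setminus P)\le w(H^-)$, then splits $T\cap P$ into $T\cap X^*$ and false positives; your explicit charging of $X^*\setminus P$ to $\bar X^*-P$ and of $T\cap(P\setminus X^*)$ to $P-\bar X^*$ makes precise the multiset bookkeeping that the paper's $H^+$, $H^-$ notation leaves implicit.
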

\begin{proof}
Note that
$P\cup H^- \supseteq X^*$ is a connected graph and its modified
weight is equal to $w'(P) + w'(H^-) = w'(H^-) = w(H^-)$
because $w'(e)=0$ for any $e\in P$.
Since $T$ is an MST w.r.t. $w'$, we have
$w'(T)\leq w'(P\cup H^-) \leq w(H^-)$.
We can write
\begin{equation}
w(T) = w(T\cap P) + w(T\setminus P) \leq w(X^*\cap T) + w(H^+) + w(H^-),
\end{equation}
where the inequality follows from
$T\cap P \subseteq (T\cap X^*) \cup (T\cap H^+)
\subseteq (T\cap X^*) \cup H^+$ and
$w(T\setminus P) = w'(T) \leq w(H^-)$.
\end{proof}

We will also consider an auxiliary $\odd(T)$-join
$J' = (T\setminus \bar X^*) + (\bar X^* - T)$.
In the following observations,
we bound the weight of its parts as in Figure~\ref{fig:alg3_XTPT}.

\begin{figure}
\centering
\includegraphics{XTPT.pdf}
\caption{Parts of $J'$ and the bounds on their weight.}
\label{fig:alg3_XTPT}
\end{figure}

\begin{observation}
\label{lem:alg3_XTPT}
We have
\[ w''(\bar X^* - T) = w((\bar X^* - T) - (P-T)) \leq w(H^-). \]
\end{observation}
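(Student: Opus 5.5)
The plan is to prove the equality and the inequality separately, working edge by edge with multiplicities. Here $H^-$ is read in the multiset sense appropriate for graphical TSP: $H^- = \bar X^* - P$, so that $w(H^-) = \sum_e \max\{m_{\bar X^*}(e) - m_P(e), 0\}\, w(e) \le \eta$.

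\textbf{The equality.} By definition of $w''$, an edge $e$ has $w''(e)=0$ exactly when $m_{P-T}(e)\ge 1$, and $w''(e)=w(e)$ otherwise. So $w''(\bar X^*-T)$ is the $w$-weight of the copies of $\bar X^*-T$ sitting on edges $e$ with $m_{P-T}(e)=0$. For such an edge $m_{(\bar X^*-T)-(P-T)}(e)=m_{\bar X^*-T}(e)$, so these copies are counted identically on both sides. To identify the two sides, I will check that every edge with $m_{P-T}(e)\ge1$ contributes nothing to the right-hand side, i.e. $m_{(\bar X^*-T)-(P-T)}(e)=0$. If this fails, the equality should be read as ``$\le$'', which is all that is used later.

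\textbf{The inequality.} Pointwise, with $t_e = m_T(e)\in\{0,1\}$, $x_e=m_{\bar X^*}(e)$ and $p_e=m_P(e)$, the multiplicity of $e$ in $(\bar X^*-T)-(P-T)$ is
\[
\max\{\max\{x_e-t_e,0\}-\max\{p_e-t_e,0\},0\}.
\]
I will check that this is at most $\max\{x_e-p_e,0\}$. If $p_e\ge t_e$, the expression equals $\max\{\max\{x_e-t_e,0\}-(p_e-t_e),0\}$, which is at most $\max\{x_e-p_e,0\}$. If $p_e<t_e$, then $t_e=1$ and $p_e=0$, and the expression is $\max\{x_e-1,0\}\le x_e=\max\{x_e-p_e,0\}$. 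Multiplying by $w(e)$ and summing over $e$ gives
\[
w\big((\bar X^*-T)-(P-T)\big)\le w(\bar X^*-P)=w(H^-).
\]

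\textbf{Main obstacle.} The delicate point is the equality, specifically edges where $P-T$ has positive multiplicity but smaller than that of $\bar X^*-T$. There $w''$ is $0$, so the left side contributes nothing while the right side may be positive. This only helps the direction $w''(\bar X^*-T)\le w((\bar X^*-T)-(P-T))$, which is what the bound needs. So in the write-up I will state the equality as an inequality if necessary and keep the elementwise multiplicity comparison as the core argument.
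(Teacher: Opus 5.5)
Your proof is correct and is essentially the paper's argument. Your pointwise bound $(\bar X^*-T)-(P-T)\subseteq \bar X^*-P=H^-$ is equivalent to the paper's inclusion $\bar X^*-T\subseteq (P-T)+H^-$, since $\max\{a-b,0\}\le c \iff a\le b+c$ for $c\ge 0$; the paper then weighs that inclusion with $w''$ and uses $w''(P-T)=0$. You were also right to distrust the equality: in general only $w''(\bar X^*-T)\le w((\bar X^*-T)-(P-T))$ holds, and it can be strict (e.g.\ on an edge $e\notin T$ with $m_{\bar X^*}(e)=2$ and $m_P(e)=1$), but this inequality is all the proof of Theorem~\ref{thm:alg3} needs.
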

\begin{proof}
We claim that
\begin{equation}
\label{eq:alg3_XTPT1}
X^*-T \subseteq (P-T) + H^-.
\end{equation}
To see this inclusion,
we write the multiplicity of every edge $e\in X^*-T$ in the left-hand side
of \eqref{eq:alg3_XTPT1} as
\[
\min\{ m_{X^*}(e) - m_T(e), m_P(e)\}
\leq m_P(e) + m_{H^-} - m_T(e) \leq m_{P-T}(e) + m_{H^-}.
\]
Now, having proved \eqref{eq:alg3_XTPT1}, we can write
\[
w((X^*-T) - (P-T)) = w''(X^*-T) \leq w''(P-T) + w''(H^-) \leq w(H^-),
\]
because $w''(P-T)=0$.
\end{proof}

\begin{observation}
\label{lem:alg3_TXPT}
We have
\[ w''(T-X^*) = w((T-X^*) - (P-T)) \leq \eta.\]
\end{observation}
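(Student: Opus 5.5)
The plan is to prove the equality directly from the definition of $w''$, and then obtain the inequality by throwing away the subtraction of $P-T$ and reusing the two bounds from the proof of Lemma~\ref{lem:alg3_T}.

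\emph{The equality.} Since $T$ is a spanning tree, it is a set: every edge has multiplicity at most one in $T$. Hence $T-X^*$ (whether $X^*$ is read as the ground set or as $\bar X^*$) is exactly the set $T\setminus X^*$, with all multiplicities equal to $1$. For an edge $e\in T\setminus X^*$ we have $m_T(e)=1$, so $m_{P-T}(e)=\max\{m_P(e)-1,0\}$. There are two cases.
\begin{itemize}
\item If $m_P(e)\ge 2$, then $e\in P-T$. So $w''(e)=0$, and $e$ is removed in $(T-X^*)-(P-T)$.
\item If $m_P(e)\le 1$, then $e\notin P-T$. So $w''(e)=w(e)$, and $e$ survives in $(T-X^*)-(P-T)$ with multiplicity $1$.
\end{itemize}
Summing over $e$ gives $w''(T-X^*) = w((T-X^*)-(P-T))$.

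\emph{The inequality.} Since $(T-X^*)-(P-T)\subseteq T\setminus X^*$ and $w\ge 0$, it suffices to show $w(T\setminus X^*)\le \eta$. I split
\[
T\setminus X^* = \big((T\setminus X^*)\cap P\big)\cup\big((T\setminus X^*)\setminus P\big).
\]
\begin{itemize}
\item The first part lies in the ground set of $P\setminus X^*$, so it is contained in $H^+$ (each such edge has positive multiplicity in $P-\bar X^*$). Its weight is therefore at most $w(H^+)$.
\item The second part lies in $T\setminus P$. In the proof of Lemma~\ref{lem:alg3_T} we showed $w(T\setminus P)=w'(T)\le w(H^-)$, using that $P\cup H^-\supseteq X^*$ is connected and $T$ is an MST with respect to $w'$.
\end{itemize}
Together these give $w(T\setminus X^*)\le w(H^+)+w(H^-)\le \eta$.

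The only delicate point is the multiset bookkeeping. One must check that an edge of $T\setminus X^*$ lying in $P$ really counts toward $H^+$, even though $\eta$ is defined via multiplicities in \eqref{eq:grapherror}. This holds because $m_{\bar X^*}(e)=0$ for such $e$, so it contributes $m_P(e)\,w(e)\ge w(e)$ to $\eta$. Everything else is a direct reuse of Lemma~\ref{lem:alg3_T}.
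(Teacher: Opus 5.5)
Your proof is correct and takes essentially the paper's route. The paper writes $T=(T\cap X^*)+(T-X^*)$ and subtracts $w(T\cap X^*)$ from the statement of Lemma~\ref{lem:alg3_T}. You instead inline the two bounds from that lemma's proof, $(T\setminus X^*)\cap P\subseteq H^+$ and $w(T\setminus P)=w'(T)\le w(H^-)$, mirroring Observation~\ref{lem:alg2_TminX} in the metric analysis; your explicit check of $w''(T-X^*)=w((T-X^*)-(P-T))$ and of the multiset bookkeeping fills in details the paper leaves implicit.
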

\begin{proof}
We can write $T = (T\cap X^*) + (T - X^*)$.
Lemma~\ref{lem:alg3_T} then implies
$w(T - X^*) \leq \eta$.
\end{proof}

\begin{observation}
\label{lem:alg3_PT}
We have
\[
w(P-T) \leq w(X^*-T) + w(H^+).
\]
\end{observation}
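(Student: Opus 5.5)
We need to show $w(P-T)\le w(X^*-T)+w(H^+)$, working with multisets. Here $X^*$ should be read as the multiset $\bar X^*$ (the same convention as in Observation~\ref{lem:alg3_XTPT}), and $H^+ = P - \bar X^*$ is the multiset of false positives. The plan is a pointwise multiplicity comparison: prove that for every edge $e$,
\[
m_{P-T}(e) \le m_{\bar X^*-T}(e) + m_{H^+}(e),
\]
which gives the multiset inclusion $P-T \subseteq (\bar X^*-T) + H^+$. Summing with the nonnegative weights $w$ then yields the statement.

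Fix $e$ and write $p=m_P(e)$, $x=m_{\bar X^*}(e)$ and $t=m_T(e)\in\{0,1\}$. The three quantities are
\[
m_{P-T}(e)=\max\{p-t,0\}, \qquad m_{\bar X^*-T}(e)=\max\{x-t,0\}, \qquad m_{H^+}(e)=\max\{p-x,0\}.
\]
We split into two cases according to whether $p\le x$.

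\begin{itemize}
\item If $p\le x$, then $\max\{p-t,0\}\le\max\{x-t,0\}$, because $a\mapsto\max\{a-t,0\}$ is monotone.
\item If $p>x$, then $p-t = (x-t) + (p-x)$. This gives $\max\{p-t,0\}\le \max\{x-t,0\}+\max\{p-x,0\}$, since $\max\{a+b,0\}\le\max\{a,0\}+\max\{b,0\}$.
\end{itemize}

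In both cases the pointwise inequality holds, hence the inclusion and the observation follow. This is the exact analogue of Observation~\ref{lem:alg3_XTPT} with the roles of $P$ and $\bar X^*$ swapped.

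No step is genuinely hard. The only care needed is to keep the multiset convention consistent: $H^+$ must be the multiset difference $P-\bar X^*$, so that it matches the error definition \eqref{eq:grapherror}, where $\eta = w(P\Delta \bar X^*) = w(H^+)+w(H^-)$. With that reading, the truncated subtraction on multisets behaves exactly as the scalar inequalities above.
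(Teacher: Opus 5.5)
Your proof is correct and is essentially the paper's argument. Both establish the multiset inclusion $P-T\subseteq(\bar X^*-T)+H^+$ and then sum with the nonnegative weights: the paper does it by writing $P=(P\cap X^*)+(P-X^*)$ and moving the removal of $T$ onto the first part, while you do it by an explicit per-edge multiplicity check. In your check the case split is not even needed, since $p-t=(x-t)+(p-x)$ together with subadditivity of $\max\{\cdot,0\}$ covers both cases. Your reading of $X^*$ as the multiset $\bar X^*$ with $H^+=P-\bar X^*$ is the right one: with the ground set $X^*$ on the right-hand side but multiset false positives, the inequality would fail on an edge that is both predicted and traversed twice.
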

\begin{proof}
We can write
\[
P-T = ((P\cap X^*) + (P-X^*)) - T
    \subseteq ((P \cap X^*) - T) + (P-X^*)
    \subseteq (X^* - T) + H^+.
\]
Therefore, we have $w(P-T) \leq w(X^*-T) + w(H^+)$.
\end{proof}

\begin{customthm}{Theorem~\ref{thm:alg3} restated}
Let $G=(V, E)$ be a graph on $n$ vertices and $m$ edges with weights
$w\colon E\to \R_+$.
Let $X^*$ be a TSP tour on $G$ and $P$ a prediction
with error $\eta$ with respect to $X^*$.
Then the cost of the tour $X$ produced by
Algorithm~\ref{alg:alg3} on $G$ with prediction $P$
is at most $w(X^*) + 4\eta$.
\end{customthm}
\begin{proof}
We know that $J$ is a $S$-join, $S=\odd(T)$, such that
$w''(J) \leq 2w''(J')$ for any $S$-join $J'$.
We bound the weight of $J$ by exposing a $S$-join $J'$ with a small weight.
First, we show that
$J' = (T - \bar X^*) + (\bar X^* - T)$
is a $S$-join.
%In particular, we show that all vertices in $(V, T+J')$ have even degree.
We can write
\begin{align*}
T + J'
    &= \big((T - \bar X^*) + (T\cap \bar X^*)\big)
        + \big( (T- \bar X^*) + (\bar X^* - T) \big)\\
    &= \big( (T - \bar X^*) + (T - \bar X^*)\big)
        + \big( (T\cap \bar X^*) + (\bar X^* - T) \big)\\
    &= \big( (T - \bar X^*) + (T - \bar X^*)\big) + \bar X^*.
\end{align*}
Since $\bar X^*$ has all degrees even and each edge in $T - \bar X^*$
is doubled, we have that all the degrees in $T+J'$ are even, hence $J'$ is an $S$-join.

The weight of the resulting TSP tour is at most
\begin{align}
w(T) + w(J)
    &\leq w(T) + w''(J) + w(P-T)
    \leq (w(X^*\cap T) + \eta) + 2w''(J') + w(P-T),
\end{align}
where the first inequality holds because $J$ is a set and $w(J) = w''(J) + w(J\cap (P-T))$,
and the last inequality follows from Lemma~\ref{lem:alg3_T} and
the approximation guarantee on $J$.
By Observations~\ref{lem:alg3_XTPT} and \ref{lem:alg3_TXPT},
we have 
\[
w''(J') = w''(T - \bar X^*)+w''(\bar X^*-T)\leq \eta + w(H^-).
\]
By Observation~\ref{lem:alg3_PT}, we have $w(P-T)\leq w(X^*-T) + w(H^+)$.
Summarizing, we get
\[
w(T) + w(J) \leq \big(w(X^*\cap T) + \eta\big) + 2\big(\eta+w(H^-)\big) +w(X^*-T) + w(H^+)
\leq w(X^*) + 4\eta.
\qedhere
\]
\end{proof}

\section{Lower bound: linear dependence on \texorpdfstring{$\boldsymbol{\eta}$}{η}}
\label{sec:LB}

Algorithm \ref{alg:alg1} has approximation ratio $1+2\frac{\eta}{\OPT}$ where $\OPT$ denotes the cost of an optimal solution. We next show that a linear dependence on $\eta/\OPT$ is needed (though possibly with a factor smaller than the $2$ that we get). The proof idea is that otherwise one could use such a learning-augmented algorithm for TSP to develop a too good approximation algorithm for TSP (without predictions).

\begin{theorem}\label{thr:lb1}
Assuming P$\neq$NP, there exists no polynomial-time learning-augmented algorithm for TSP with approximation factor smaller than $1+\frac{1}{122}\frac{\eta}{\OPT}$.    
\end{theorem}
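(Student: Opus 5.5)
The plan is a direct reduction from the known inapproximability of metric TSP without predictions. I will use the result of Karpinski, Lampis and Schmied: unless P$=$NP, no polynomial-time algorithm approximates metric TSP within a factor smaller than $\frac{123}{122}$. The constant $\frac1{122}$ in the statement is exactly $\frac{123}{122}-1$, which suggests this is the intended source.

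Suppose, for contradiction, that there is a polynomial-time learning-augmented algorithm $\mathcal A$ with the following guarantee. On every metric instance and every prediction, it outputs a tour of cost at most $\big(1+c\,\frac{\eta}{\OPT}\big)\OPT$ for some constant $c<\frac1{122}$, where $\eta$ is measured with respect to an optimal tour $X^*$.

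From $\mathcal A$ I build an algorithm for TSP without predictions. Given a metric instance $(G,w)$, run $\mathcal A$ with the trivial prediction $P=\emptyset$ (equivalently, the heatmap $p\equiv 0$). This prediction needs no extra information and is computable in constant time.

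For every optimal tour $X^*$ we get $\eta(\emptyset,X^*)=w(\emptyset\,\Delta\,X^*)=w(X^*)=\OPT$. Hence the guarantee of $\mathcal A$ gives a tour of cost at most $(1+c)\OPT<\frac{123}{122}\OPT$, and the running time stays polynomial. This contradicts the hardness result.

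There is one point to handle carefully: what ``approximation factor smaller than $1+\frac1{122}\frac{\eta}{\OPT}$'' means. I will phrase the contradiction for any guarantee of the form $f(\eta/\OPT)\cdot\OPT$ with $f(1)<\frac{123}{122}$. This covers the linear form with constant $c<\frac1{122}$, and also the case where the bound is only required to hold for some tour $X^*$, since every tour has $\eta(\emptyset,X^*)\ge\OPT$.

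If the guarantee is only in expectation, as in Theorem~\ref{thm:alg1}, then either run $\mathcal A$ deterministically with $p\equiv0$ (the sampled $P$ is then always empty), or observe that an expected ratio below $\frac{123}{122}$ also contradicts the hardness result. The latter works because that result is a gap reduction that rules out randomized algorithms as well, under the corresponding randomized complexity assumption.

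The main obstacle is not the argument itself, which is a one-line instantiation. It is making sure the cited hardness theorem applies exactly in this setting: metric (not merely graphical or Euclidean) TSP, on complete graphs with triangle inequality, and with the error $\eta$ normalized by $\OPT$. I would double-check that the $\frac{123}{122}$ bound is stated for the metric/Hamiltonian-cycle version. Graphical instances are equivalent under metric closure, as noted in the preliminaries, so that bound transfers to the setting of Theorem~\ref{thm:alg1}.
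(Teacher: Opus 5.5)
Your proof is correct for the theorem as the paper reads it: a guarantee $w(X)\le \OPT+\alpha\eta$ with a constant $\alpha<\frac1{122}$, valid for every instance and every prediction. With $P=\emptyset$ you get $\eta=\OPT$ for an optimal tour, so the algorithm becomes a $(1+\alpha)$-approximation, which contradicts the $\frac{123}{122}$ hardness.

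The paper takes a more elaborate route. It guesses $X\in[\OPT_G,(1+\varepsilon)\OPT_G]$ and adds a new vertex $s$ with $w(s,u)=\frac Y2+w(v_0,u)$, where $Y=\frac{1-\alpha}{\alpha}X$. It then guesses the successor $v_1$ of $v_0$ on an optimal tour and predicts only the two edges $v_0s$ and $sv_1$. This gives $\OPT_{G'}=\OPT_G+Y$ and $\eta\le X-w(v_0v_1)$, and shortcutting $s$ yields a $(1+\varepsilon)(1+\alpha)$-approximation for $G$.

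The final bound does not depend on $Y$. What the padding buys is that the hard instances come with \emph{accurate} predictions: $\eta/\OPT_{G'}\lesssim\alpha$, and this ratio can be made arbitrarily small by taking $Y$ larger. Consequences:
\begin{itemize}
\item The reduction also rules out algorithms whose guarantee is only promised for $\eta\le\delta\,\OPT$.
\item By tuning $Y$, it lower-bounds the factor at essentially any error level, not just at $\eta=\OPT$. This is what supports the section's claim that a linear dependence on $\eta/\OPT$ is needed. For example, it excludes a factor like $1+(\eta/\OPT)^2/121$, which your argument cannot.
\end{itemize}
Your empty prediction carries no information, so your argument is the classical inapproximability restated at the single point $\eta/\OPT=1$. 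In exchange, it avoids guessing $X$ and $v_1$ and avoids the $\varepsilon$-loss.

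A minor point: your remark that a guarantee with respect to ``some tour'' is covered because $\eta(\emptyset,X^*)\ge\OPT$ points the wrong way, since a larger $\eta$ weakens the bound. What you actually use is that the guarantee holds for an optimal $X^*$, as in the paper.
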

\begin{proof}
Let us assume by contradiction that we are given a learning-augmented algorithm for TSP with approximation factor $1+\alpha\frac{\eta}{\OPT}$, for some constant $\alpha\in (0,\frac{1}{122})$. We show how to use it to construct a polynomial-time algorithm for TSP (without predictions) with approximation factor strictly smaller than $\frac{123}{122}$. Assuming P$\neq$NP, the latter algorithm is not possible~\citep{KarpinskiLS15}.

Let $G$ be an input instance of TSP (with weights $w$), with optimal value $\OPT_G$. 
The TSP algorithm (without predictions) works as follows. 
It guesses a value $X$ so that $\OPT_G\leq X\leq (1+\varepsilon)\OPT_G$, where $\varepsilon>0$ is a sufficiently small constant depending on $\frac{1}{122}-\alpha$. This can be done by, e.g., computing a $2$-approximation $X'$ of $\OPT_G$, and then considering as candidate values of $X$ all the powers of $(1+\varepsilon)$ between $X'/2$ and $(1+\varepsilon)X'$. At least one such value of $X$ satisfies the claim, and we run the rest of the algorithm for all the candidate $X$ (returning at the end the best solution obtained). Let $Y := \frac{1-\alpha}{\alpha} X$. Then the algorithm picks an arbitrary node $v_0 \in G$, and creates a new graph $G'$ by adding to $G$ a new node $s$ and edges between $s$ and all other nodes with weights $w(s, u) = \frac{Y}{2} + w(v_0,u)$. It is easy to check that the optimal value $\OPT_{G'}$ of the TSP instance on $G'$ (with weights $w$) satisfies $X+Y\geq\OPT_{G'} \geq\frac{1}{1+\varepsilon}(X+Y)$. Then the algorithm guesses what is the next node after $v_0$ in an optimal TSP tour in $G$ (there are $n-1$ choices to check), let's call that guess $v_1$. The algorithm creates a prediction consisting of only two edges $(v_0, s)$ and $(s, v_1)$, and calls the hypothesized learning-augmented algorithm on $G'$ and that prediction. We claim that, in the iteration in which the guess of $v_1$ was correct, the prediction error is $\eta \leq X - w(v_0, v_1) \leqslant X$. Therefore the learning-augmented algorithm returns a solution of cost $APX_{G'}\leq \OPT_{G'}+\alpha X$. By short-cutting node $s$ from that solution we obtain a solution for $G$ of cost at most 
$$
APX_{G'}-Y\leq \OPT_{G'}+\alpha X-Y\leq (1+\alpha)X\leq (1+\varepsilon)(1+\alpha)\OPT_G.
$$
Altogether we obtain a polynomial-time algorithm for TSP (without predictions) with approximation factor $(1+\varepsilon)(1+\alpha)<\frac{123}{122}$ for a sufficiently small constant $\varepsilon>0$, a contradiction.
\end{proof}
We remark that by a similar construction as in the proof of Theorem \ref{thr:lb1}, one can argue that a substantial improvement of the dependence on $\eta/\OPT$ in the approximation factor is very difficult. For example, a learning-augmented algorithm for TSP with approximation factor $1+0.49\frac{\eta}{\OPT}$ would imply a $1.49+O(\varepsilon)$ approximation for TSP. The latter result would be a breakthrough in approximation algorithms given that only recently the $1.5$ barrier was overcome and the best known algorithm has approximation ratio $1.5-10^{-36}$~\citep{KarlinKG21}.

In Appendix~\ref{appendix:tightness_counter},
we provide further examples showing the tightness of our analysis
and counterexamples motivating the design of our algorithms.

\section{Empirical results}\label{sec:empirical_results}
In this section, we present empirical results that support our theoretical approach. At a high level, we consider probability matrices (\emph{heatmaps}) obtained from different sources (neural predictors, non-neural predictors, and synthetic constructions) and use them as input to our algorithm. We then evaluate whether this allows us to improve upon standard baselines in terms of optimality gap. Here, we provide a high-level overview of our experimental results; full details are deferred to Appendix~\ref{app:extra_experiments}.

\paragraph{The dataset.} We evaluate our model on three different datasets. Dataset $\mathcal{U}$ is the test set of the well-known ML4CO benchmark~\citep{ma2025mlcobench}. The dataset consists of instances sampled uniformly from the square $[0,1]^2$ and contains instances of four different sizes: $1280$ instances for each of $n \in \{50,100\}$ and $128$ instances for each of $n \in \{500,1000\}$. Dataset $\mathcal{T}_E$ contains 53 Euclidean 2D TSPLIB~\citep{Reinelt91}
instances with up to 1300 nodes. Finally, dataset $\mathcal{T}_M$ contains 15 metric \emph{non-Euclidean} instances from TSPLIB. For all instances, we obtain ground-truth optimal tour lengths using Concorde~\citep{applegate2006concorde}.

\paragraph{Data and code availability.} 
All instances used in our experiments, together with the corresponding predictions, are available on Zenodo.\footnote{\url{https://zenodo.org/records/20283463}} The code for reproducing our results and implementing the main algorithm is available on GitHub.\footnote{\url{https://github.com/eleonoravercesi/tour_from_neural_predictions}}

\paragraph{Predictions.} We evaluate several types of predictors (Details are provided in Section~\ref{app:sub:score_comp}). Among the 
\textbf{Predictors from NNs} we use the predictions from \citet{JoshiCRL22} (\textbf{GNN4CO}), which combines a graph neural network encoder with an autoregressive decoder that sequentially builds a tour, \citet{HudsonLMP22} (\textbf{GNN-GLS}), which predicts for each edge its regret, i.e., the cost of enforcing that edge relative to an optimal solution, and \citet{SunY23} (\textbf{DIFUSCO}), a diffusion-based graph model designed to generate high-quality feasible TSP solutions.  We include \textbf{SoftDist} \citep{XiaYLLS024}, a non-neural probabilistic baseline where, given a weighted graph, edge probabilities are defined as $p_{ij} = \frac{e^{{-w(ij)}/{\tau}}}{\sum_{k \neq i} e^{{-w(ik)}/{\tau}}}$, with temperature parameter $\tau$ set with guidance from the original paper. Finally, we evaluate a \textbf{Synthetic Predictor (SP)}, constructed by applying perturbation to the optimal solution $X^*$. The noise is governed by the parameter $\varepsilon$.

\paragraph{Baselines.} 
Since our approach is largely inspired by Christofides' method, we naturally include Christofides' algorithm (frequently abbreviated as \textbf{CHR}) as a baseline. We evaluate the performance of our proposed algorithm against several baseline solution search techniques commonly employed to leverage neural heatmaps~\citep{JoshiA22,HudsonLMP22,SunY23}. 
We implement two greedy variants: \textbf{G1} builds a Hamiltonian cycle incrementally by starting from an arbitrary node and repeatedly adding the unvisited node with the maximum score $q(e)$; \textbf{G2} follows a greedy edge-selection strategy, adding edges in decreasing order of $q(e)$ provided they do not create a subcycle or a node with degree $3$, until the tour is complete. We also evaluate a beam search (\textbf{BS}), which explores multiple candidate tours in parallel by maintaining the top-$k$ partial paths at each construction step

\paragraph{Algorithm \ref{alg:alg1} vs $\chrp$.} 
To derive the edge subset $P$ for Algorithm \ref{alg:alg1}, we evaluate two strategies: sampling $k$ edges (\textbf{Alg1}, faithfully representing Algorithm \ref{alg:alg1}), selecting the $k$ edges with the highest probabilities (\textbf{Alg1Top}). Furthermore, we test a modified weighting approach, where we set $w'(e) = w(e)(1 - q(e))$, which we denote as $\chrp$. We set $k=n$, using 30 samples for Alg1 to account for stochasticity. Figure \ref{fig:compare_methods_for_generating_P_SoftDist} (Appendix \ref{app:extra_experiments}) shows that $\chrp$ consistently achieves superior performance; thus, we adopt it as our primary decoding strategy. More details can be found in Appendix \ref{app:sec:alg1_chrp}.
%%%%%%% Long
% A careful analysis of Algorithm \ref{alg:alg1} reveals that it requires a subset of edges $P$ (of cardinality $k$) whose associated weights are set to zero in $G$. In practice, given a heatmap of edge probabilities, several strategies can be employed to derive $P$. One approach is to sample $k$ edges according to the edge probability distribution (\textbf{Alg1}), while another is to select the $k$ edges with the highest probabilities (\textbf{Alg1Top}). Alternatively, rather than defining $w'(e)$ as in Algorithm \ref{alg:alg1}, one can define the modified weight for every edge $e$ as $w'(e) = w(e)(1 - w(e))$ and execute Algorithm \ref{alg:alg1}. Since this latter approach is essentially Christofides' algorithm applied to a modified graph, we denote it as \textbf{$\chrp$}. While our theoretical analysis is developed for Alg1, we evaluate experimentally the three approaches. In all cases, we set $k = n$. Under the Alg1 approach, we generate 30 distinct edge sets $P$ for each instance to account for stochasticity; in contrast, the Alg1Top strategy define $P$ deterministically based on the provided heatmap. Using SoftDist as a representative predictor, we compute the resulting optimality gaps for each strategy to assess their relative performance. Figure \ref{fig:compare_methods_for_generating_P_SoftDist} (Appendix \ref{app:extra_experiments}) demonstrates that $\chrp$ consistently achieves superior performance. Consequently, we adopt $\chrp$ as the primary decoding strategy for all subsequent experiments.

\paragraph{Our algorithm behaves smoothly as the prediction error decreases.}
To control the prediction error, we use SP as the predictor, with $\varepsilon \in [0, 1]$.
We first observe that this type of prediction, as expected, performs very poorly with greedy decoding, and hence, comparison is not made in this case (See Appendix \ref{app:sec:error_greedy}).
To study the behavior of our algorithm relative to Christofides, we run the full pipeline for $n = 500$ on the entire dataset. 
Figure~\ref{fig:smoothness_with_epsilon_500_100_5} demonstrates that our algorithm exhibits smooth behavior as the prediction error decreases, consistently improving upon CHR until the error reaches $\sim  \varepsilon = 0.25$. Details on this experiment can be found in Appendix \ref{app:sec:error_greedy}.

\begin{figure}[t]
    \centering
    \includegraphics[width=0.5\linewidth]{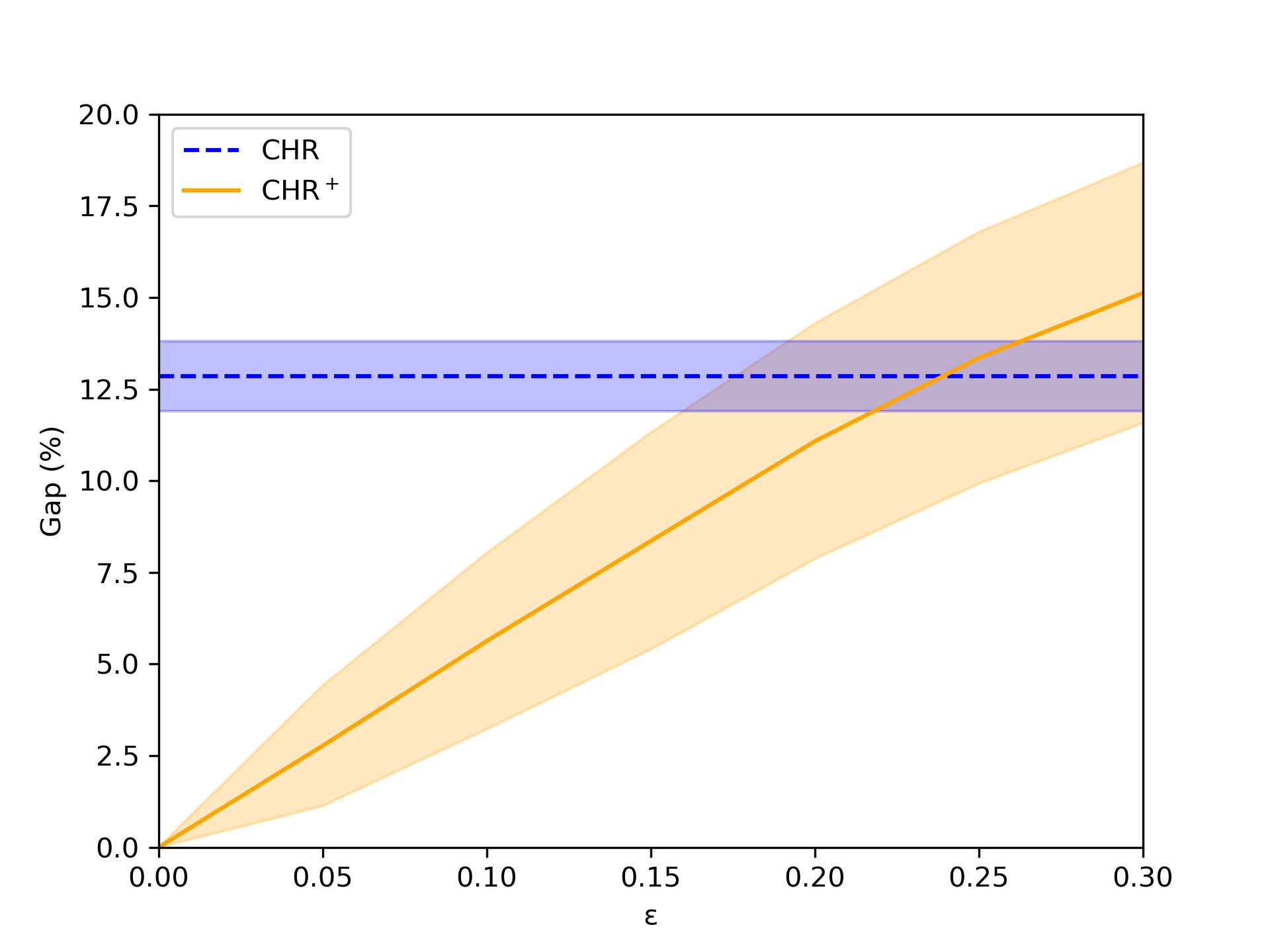}
    \caption{
    Optimality gap (the lower the better) of Christofides and our algorithm ($\chrp$) on the instances of our dataset with $n = 500$. We report the average percentage gap, with the shaded regions representing $\pm $ standard deviation.}
    \label{fig:smoothness_with_epsilon_500_100_5}
\end{figure}

\paragraph{Comparison of different solution search strategies with different predictors.} Table~\ref{tab:decoders_and_predictors} summarizes performance on dataset $\mathcal{U}$ via optimality gaps relative to Christofides. The discrepancy with the original DIFUSCO results (Figure~\ref{fig:difusco_focus}) likely stems from inference-time configurations, such as graph sparsification, which were disabled in our setup. Notably, $\chrp$ is the only strategy that effectively exploits predicted heatmaps to consistently outperform the Christofides baseline. In contrast, G1, G2, and BS fail to yield systematic improvements. These trends remain consistent across TSPLIB instances, both Euclidean and non-Euclidean (Table~\ref{tab:tsplib} and Table~\ref{tab:T_M}). See Appendix \ref{app:sec:greedy_TSPLIB} for further details. 

\begin{figure}[t]
    \centering
    \includegraphics[width=0.5\linewidth]{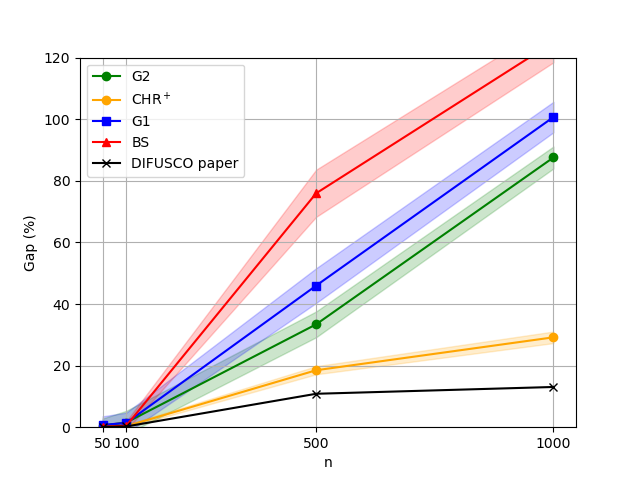}
    \caption{Visual representation of the performances of different solution search methods on DIFUSCO: Average percentage gap.}
    \label{fig:difusco_focus}
\end{figure}

\paragraph{2-opt improvement.} To further refine the solutions, we apply the 2-opt local search heuristic as a post-processing step, a standard practice for enhancing tour quality \citep{HudsonLMP22,SunY23}; All the details are provided in Appendix \ref{app:sec:2opt}. Our results demonstrate that $\chrp$ remains the superior method while introducing negligible computational overhead compared to CHR. This efficiency is evidenced by the average runtimes reported in Table \ref{tab:decoders_and_predictors} and the comparable number of 2-opt swaps required to reach local optimality, as recorded in Table \ref{tab:2opt}.

\section{Conclusions}
We introduce the \emph{first} {learning-augmented approximation algorithm} for TSP.
Our algorithm converts \emph{any} edge heatmap, i.e., a vector of probabilities of each edge to belong to an optimal solution, to a feasible \emph{TSP tour} of weight $\leq \OPT + 2\eta$,
where $\eta$ denotes the L1 distance between the heatmap and some optimal tour of weight $\OPT$.
As shown in our empirical results,
our approach utilizes heatmaps more effectively than
previous works. Moreover, its approximation performance improves smoothly with
increasing accuracy of the provided heatmap,
underscoring the potential for further advancements in high-quality heatmap generation models.

\section*{Acknowledgements}
The work of E.~Vercesi has been supported by the Swiss National Science Foundation (SNSF) project 200021-212929 / 1 ``Computational methods for integrality gaps analysis''. The work of F.~Grandoni has been partially supported by the SNSF projects 200021-20073 and 200021-236706.

\bibliographystyle{plainnat}
\bibliography{main}

\newpage

\appendix

%\section{Analysis of Algorithm \ref{alg:alg2}: near-linear time algorithm for metric TSP}
%\label{sec:alg2-app}
%We provide full analysis of Algorithm~\ref{alg:alg2} proposed in Section~\ref{sec:alg2}.
%Compared to Algorithm~\ref{alg:alg1},
%Algorithm~\ref{alg:alg2} runs in near-linear time but has slightly worse
%dependence on prediction error.
%Instead of the exact algorithm for $\odd(T)$-join, it uses a 2-approximation
%algorithm by \citet{GoemansW95} which runs in time $O(|E|\log n) = O(n^2\log n)$.
%
%%\setcounter{algocf}{\ref{alg:alg2}}
%\begin{algorithm2e}
%\DontPrintSemicolon
%\SetAlgoRefName{2}
%\caption{Almost-linear time algorithm for complete graphs (restated for convenience)}
%%\label{alg:alg2}
%\textbf{Input:}  graph $G=(V, E)$, where $E=\binom{V}{2}$,
%weights $w\colon E\to \R_+$, prediction $P \subseteq E$\;
%\textbf{Output:} TSP tour over $G$\\\;
%\lFor{$e\in E$}{
%$w'(e) := -w(e)$ if $e \in P$ and $w'(e) := w(e)$ otherwise.
%}
%find $T :=$ minimum spanning tree on $G$ w.r.t. modified weights $w'$\;
%$S:= $ set of odd vertices in $T$\;
%\eIf{$|S|=2$}{
%    choose $J := \{uv\}$, where $\{u,v\}=S$\;
%}{
%$J:= $ 2-approximate $S$-join w.r.t. to the original weights $w$\;
%}
%output a shortcutting of the Eulerian multiset $T+J$, where
%edges contained in both $T$ and $J$ are taken twice\;
%\end{algorithm2e}

\section{Analysis of Algorithm~\ref{alg:alg1} in the setting of graphical TSP}
\label{sec:alg1-app}

We analyze Algorithm~\ref{alg:alg1} in the graphical TSP setting and show that
it satisfies the same guarantees as in metric TSP setting, i.e., Theorem~\ref{alg:alg1}.
Compared to Algorithm~\ref{alg:alg3}, it does not require predicting multiplicities of
edges in the optimal tour $\bar X^*$. Algorithm~\ref{alg:alg1} receives only
a prediction $P$  of the support set $X^*$ of edges used by $\bar X^*$ at least once.
On the other hand, the running time of Algorithm~\ref{alg:alg1} is $O(n^3)$ compared
to the near-linear running time of Algorithm~\ref{alg:alg3}.

\begin{algorithm2e}
\DontPrintSemicolon
\SetAlgoRefName{1}
\caption{Restated for the graphical TSP setting}
%\label{alg:alg1b}
\textbf{Input:}  graph $G=(V, E)$,
weights $w\colon E\to \R_+$, prediction $P \subseteq E$\;
\textbf{Output:} TSP tour over $G$\\\;
\lFor{$e\in E$}{
    $w'(e) := 0$ if $e \in P$ and $w'(e) := w(e)$ otherwise.
}
find $T :=$ minimum spanning tree on $G$ w.r.t. modified weights $w'$\;
$S:= $ set of odd vertices in $T$\;
$J:= S$-join of $S$ with minimum weight w.r.t. to the original weights $w$\;
output a shortcutting of the Eulerian multiset $T+J$
\end{algorithm2e}

\paragraph{Notation:}
For an arbitrary optimal tour $\bar X^*$, we denote by $X^* \subseteq E$ the support of $\bar X^*$, i.e., the set of edges appearing in $\bar X^*$ at least once.
Notice that $X^*$ is connected and $\bar X^*$ is Eulerian.
Given a prediction $P\subseteq E$, we write
$P = (X^* \setminus H^-) \cup H^+$,
where $H^-$ and $H^+$ are the sets of false negatives and false positives. Notice that $\eta=w(H^-)+w(H^+)$.

\begin{theorem}
\label{thm:alg1_graphical}
Let $G=(V, E)$ be a graph on $n$ vertices with edge weights
$w\colon E\to \R_+$.
There is an algorithm running in time $O(n^3)$ which, receiving
a prediction $P \subseteq E$ with error $\eta$ with respect to some TSP tour $X^*$,
finds a TSP tour $X$ of weight
\[
w(X) \leq w(X^*) + 2\eta.
\]
\end{theorem}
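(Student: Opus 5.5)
We follow the proof of Theorem~\ref{thm:alg1_P}, replacing $X^*$ by the multiset $\bar X^*$ wherever Eulerianity is needed and by its support $X^*$ wherever connectivity is needed. The target inequality is $w(T)+w(J)\le w(\bar X^*)+2\eta$. Here the right-hand side weighs the tour with multiplicities, so the statement's $w(X^*)$ is read as the weight of the tour $\bar X^*$, while $\eta=w(H^-)+w(H^+)$ is measured on supports. In graphical TSP, short-cutting is not needed to obtain a tour: $T+J$ is connected and Eulerian, so an Euler tour of it is already a closed walk visiting all vertices, of weight $w(T)+w(J)$.

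\textbf{Bounding $T$.} The argument of Lemma~\ref{lem:alg1_T} goes through verbatim. The set $P\cup H^-\supseteq X^*$ is connected and spanning, since the support of a tour is connected and spanning. It has $w'$-weight at most $w(H^-)$, so minimality of $T$ gives $w(T\setminus P)\le w(H^-)$. Moreover $T\cap P\subseteq (T\cap X^*)\cup H^+$, hence $w(T)\le w(T\cap X^*)+w(H^+)+w(H^-)$. All sets here are ordinary sets.

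\textbf{Bounding $J$.} Take as reference $S$-join $J'=(T-\bar X^*)+(\bar X^*-T)$, exactly as in the proof of Theorem~\ref{thm:alg3}. The computation there shows $T+J'=\bar X^*+2(T-\bar X^*)$, which is Eulerian, so $J'$ is an $\odd(T)$-join; reducing multiplicities mod 2 yields an honest $S$-join that is no heavier, so $w(J)\le w(J')$. Since $T$ is a set, $T-\bar X^*=T\setminus X^*$, and $w(T\setminus X^*)\le w(T\cap P\setminus X^*)+w(T\setminus P)\le w(H^+)+w(H^-)$. Also $w(\bar X^*-T)=w(\bar X^*)-w(T\cap X^*)$, because $T\cap\bar X^*=T\cap X^*$. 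Summing the bounds on $T$ and $J$:
\[
w(T)+w(J)\le w(T\cap X^*)+\eta+w(\bar X^*)-w(T\cap X^*)+\eta=w(\bar X^*)+2\eta .
\]

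\textbf{Running time.} Kruskal's algorithm gives $T$ in $O(m\log n)$. The minimum $S$-join in a general graph reduces to minimum-weight perfect matching on $S$ in the shortest-path metric closure. This requires shortest paths from each vertex of $S$, costing $O(|S|(m+n\log n))$, followed by a matching on $|S|$ vertices in $O(|S|^3)$; both are within $O(n^3)$. Each matching edge is then replaced by its shortest path, which gives a multiset $S$-join of the same weight.

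\textbf{Main obstacle.} The main subtlety is this multiset bookkeeping. First, $J$ may contain an edge several times or an edge of $T$, so $T+J$ must be treated as a multiset, and $J'$ must be verified as a legitimate (multi)join. Second, one must make sure that the error $\eta$, measured on supports, still controls $T\setminus X^*$ even though $\bar X^*$ has multiplicities. The steps above handle both points.
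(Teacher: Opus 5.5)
Your proposal is correct and follows the paper's proof essentially step for step: the same MST bound via the connected set $P\cup H^-\supseteq X^*$, the same reference join $J'=(\bar X^*-T)+(T\setminus X^*)$ (your $T-\bar X^*$ equals $T\setminus X^*$ since $T$ is a set), and the same final combination via $\bar X^*=(X^*\cap T)+(\bar X^*-T)$. Your extra remarks on reducing $J'$ mod 2 and on the cost of the shortest-path computations in the running time are sound, and slightly more careful than the paper's own accounting.
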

\begin{proof}
We will show that $w(T) + w(J) \leq w(\bar X^*) + 2w(H^-) + 2w(H^+)$.
By triangle inequality,
the weight of the final shortcutting can be only smaller.

We start by bounding $w(T)$.
First, note that
$P\cup H^- \supseteq X^*$ is a connected graph and its modified
weight is equal to $w'(P) + w'(H^-) = w'(H^-) = w(H^-)$
because $w'(e)=0$ for any $e\in P$.
Since $T$ is an MST w.r.t. $w'$, we have
$w'(T)\leq w'(P\cup H^-) \leq w(H^-)$.
We can write
\begin{equation}
\label{eq:alg1_T}
w(T) = w(T\cap P) + w(T\setminus P) \leq w(X^*\cap T) + w(H^+) + w(H^-),
\end{equation}
where the inequality follows from
$T\cap P \subseteq (T\cap X^*) \cup (T\cap H^+)
\subseteq (T\cap X^*) \cup H^+$ and
\begin{equation}
\label{eq:alg1_TminP}
w(T\setminus P) = w'(T) \leq w(H^-).
\end{equation}

Now, we bound $w(J)$.
Instead of $J$, we consider the weight of
$J' := (\bar X^* - T) + (T \setminus X^*)$ which is also a $S$-join, $S=\odd(T)$, and therefore
$w(J)\leq w(J')$ by the optimality of $J$.
Notice that $\bar X^* - T$ contains all the edges in $\bar X^*$ not contained
in $T$ with their multiplicity, and all the edges in $\hat X^*$ with multiplicity
higher than 1 which are also contained in $T$ with their multiplicity decreased
by 1. In particular, we have $\bar X^* = (\bar X^* - T) + (X^* \cap T)$.
To show that $J'$ is an $S$-join, we equivalently prove that that $T+J'$ has all degrees even.
We can write
\[ T+J' = \big((T\cap X^*) + (T\setminus X^*)\big) + \big((\bar X^* - T) + (T \setminus X^*)\big)
    = \bar X^* + \big( (T\setminus X^*) + (T\setminus X^*)\big),
\]
because $(T\cap X^*) + (\bar X^* - T) = \bar X^*$.
In other words, $T + J'$
is a sum of the tour $\bar X^*$, that has all even degrees, and of $(T \setminus X^*) + (T\setminus X^*)$, where each edge is taken twice.

We have $w(J') = w(\bar X^* - T) + w(T \setminus X^*)$, where
\[ w(T \setminus X^*)
    = w((T\setminus X^*) \cap P) + w((T \setminus X^*) \setminus P)
    \leq w(H^+) + w(T \setminus P).
\]
The last inequality follows from
    $(T\setminus X^*) \cap P \subseteq P \setminus X^* = H^+$
    and $(T\setminus X^*) \setminus P \subseteq T\setminus P$.
We have
\begin{multline}
\label{eq:alg1_J}
w(J) \leq w(J') = w(\bar X^* - T)+w(T\setminus X^*) \leq
w(\bar X^* - T) + w(H^+) +w(T\setminus P) \\ \overset{\eqref{eq:alg1_TminP}}{\leq}
w(\bar X^* - T) + w(H^+) + w(H^-).
\end{multline}
Since $\bar X^* = (X^*\cap T) + (\bar X^* - T)$, equations
\eqref{eq:alg1_T} and \eqref{eq:alg1_J} imply
\[ w(T) + w(J) \leq w(\bar X^*) + 2w(H^+) + 2w(H^-).\qedhere \]
\end{proof}

\section{Tightness of our analysis and further counterexamples}
\label{appendix:tightness_counter}

Here we describe construction showing the tightness of our results and our analysis.

\subsection{Tightness of our analysis}
\label{sec:LBtight}

Here we show that our analysis of Algorithm~\ref{alg:alg1} is tight,
in the sense that the coefficient $2$ in front of the prediction error $\eta$
cannot be improved for the given algorithm.
We provide two examples: one with false positives and one with false negatives.
These two examples can be combined in a single input instance
with the desired weight of false positives and false negatives.
Note that Algorithm~\ref{alg:alg1} satisfies the same approximation guarantees
in both metric TSP and graphical TSP setting.
In what follows, we first describe an input instance of graphical TSP
which, by taking the metric closure of the proposed graph, gives an input instance of
metric TSP. We describe the final outputs of Algorithm~\ref{alg:alg1}
in both setting, as they will differ slightly in the final short-cutting step.

\begin{figure}
\centering
\includegraphics[]{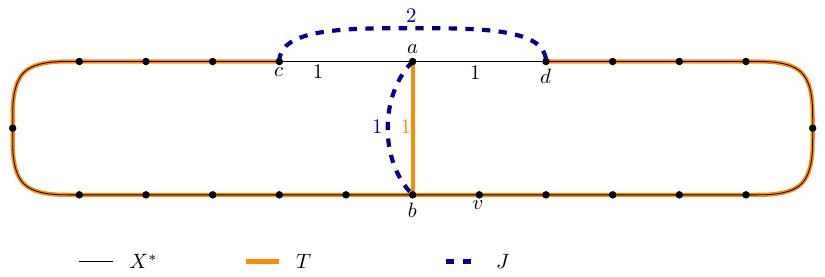}
\caption{Tightness of our analysis with respect to the dependence on false positives. Path from $b$ to $c$ as well as the path from $b$ to $d$ has weight
$M\gg 1$. The prediction $P=X^*\cup \{ab\}$, i.e., there is only one false
positive $H^+=\{ab\}$ and we have $\eta=w(ab)=1$.}
\label{fig:tightness_plus}
\end{figure}

\paragraph{Prediction with false positives.}
Figure~\ref{fig:tightness_plus} describes a bad example for the dependence on false positives.
Given prediction $P=X^*\cup\{ab\}$, the set of thick orange edges
is an MST with respect to the weights $w'$ in Algorithm~\ref{alg:alg1}.
In the following steps, Algorithm~\ref{alg:alg1} identifies $S=\{a,b,c,d\}$
as the set of odd vertices and the edges $J$ drawn with dark-blue dashed lines
denote a min-weight $S$-join.

After short-cutting, the resulting tour in the graphical TSP setting will take the edge
$ab$, then the path from $b$ to $c$ of length $M$, then taking edge $cd$,
then path from $d$ to $b$ of length $M$, and then again edge $ab$.

In the metric TSP setting, we take the edge
$ab$, then the path from $b$ to $c$ of length $M$, then taking edge $cd$,
then path from $d$ to $v$ and finally the edge $va$. This forms a Hamiltonian cycle
of the same weight as the graphical tour, since the weight of the edge $va$
in the metric closure is equal to the weight of the path $v,b,a$.

Therefore, in both cases
the weight of the tour produced by Algorithm~\ref{alg:alg1} will be
\[
w(X) = 1 + M + 2 + M + 1 = 2M + 2 + 2 = w(X^*) + 2w(H^+) = w(X^*) + 2\eta,
\]
showing that the approximation guarantee for Algorithm~\ref{alg:alg1} as stated in
theorems \ref{thm:alg1} and \ref{thm:alg1_graphical} is tight.

\begin{figure}
\centering
\includegraphics[width=\linewidth]{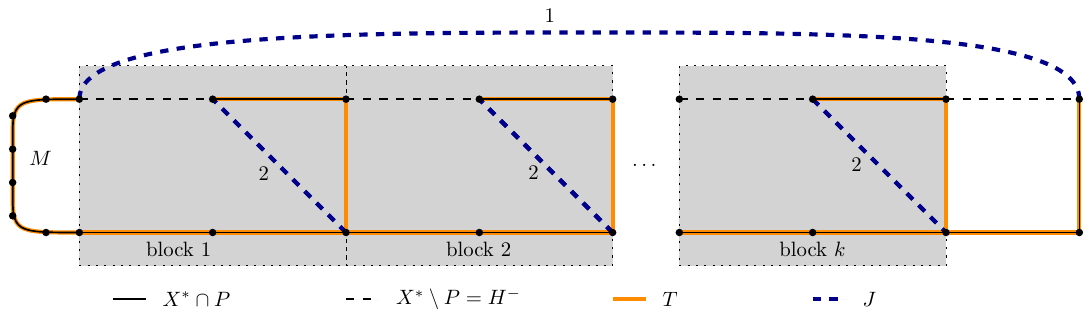}
\caption{Tightness of our analysis with respect to the dependence on false negatives.
The optimal tour $X^*$ contains solid and dashed black edges. The prediction $P$ contains
only sold black edges, there are no false positives.
Each of the $k$ blocks contains five edges of length $1$ and one edge of length $2$.
The prediction error is $\eta=k+1$.
}
\label{fig:tightness_minus}
\end{figure}

\paragraph{Prediction with false negatives.}
Figure~\ref{fig:tightness_minus} describes a tight example for the dependence on false
negatives. It shows an input where Algorithm~\ref{alg:alg1}
finds a tour with weight $w(X) \geq w(X^*) + 2\frac{k}{k+1}\eta$,
where $2\frac{k}{k+1}$ can be arbitrarily close to 2 as
$k$ can be chosen as large as roughly $n/10$.
The graph $G$ in Figure~\ref{fig:tightness_minus} contains
$k$ blocks, each
containing 5 edges of length 1 and a single edge of length 2.
The optimal tour $X^*$ uses 4 edges of unit length in each block and its
total length is $M + 4k + 3$. The prediction error is $\eta = k+1$:
one edge of unit weight is missing in each block and another such edge on the right-hand side.
On the other hand, Algorithm~\ref{alg:alg1}
finds an MST $T$ (fat orange) and an $S$-join for $S=\odd(T)$ (fat dashed blue),
using altogether 4 edges of unit length and one edge of length 2 in each block. In total,
it produces a tour of length 
\[ w(X) = M + k({\textcolor{orange}4} + {\textcolor{blue}2}) + {\textcolor{orange}2}
+ {\textcolor{blue}1} =  w(X^*) + 2k = w(X^*) + 2\frac{k}{k+1}\eta. \]
This shows that the approximation guarantee for Algorithm~\ref{alg:alg1} as stated in
theorems \ref{thm:alg1} and \ref{thm:alg1_graphical} is tight up to the factor $\frac{k}{k+1}$
which can be arbitrarily close to 1 as $k$ increases.

\subsection{Justification of the steps performed by our algorithms}

\subsubsection{Doubling \texorpdfstring{$\boldsymbol{T}$}{T} instead of \texorpdfstring{$\boldsymbol{\odd(T)}$}{odd(T)}-join}
\label{sec:LBdoubletree}
The example in Figure~\ref{fig:doubletree} shows that if, in Algorithm~\ref{alg:alg1},
we take $J=T$ instead of $J= \odd(T)$-join
which is easy to find and inspired by the classical
minimum spanning tree heuristic for TSP, see \citep[Section 1.4]{TraubVygen},
the resulting algorithm would not have smooth dependence on $\eta$.
In particular, $\eta = 2$ (there are two false negative edges of weight $1$) and the tour produced by the algorithm would have weight
$w(X) = 2M +2M -3 = 2w(X^*) -3$ can be
chosen arbitrarily small.

\begin{figure}
\centering
\includegraphics[width=0.5\linewidth]{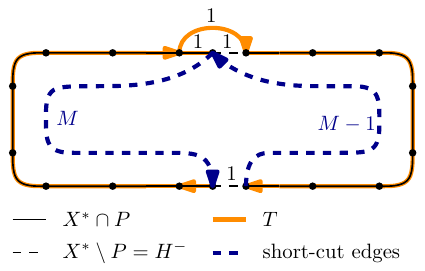}
\caption{Bad example for the doubling-tree-inspired variant. Both the left-hand side and the right-hand side half of the optimal tour $X^*$ have weight $M$.
The short-cutting edges have the same length as the shortest path between their endpoints.}
\label{fig:doubletree}
\end{figure}

\subsubsection{\texorpdfstring{$\boldsymbol{J}$}{J} has to be computed with respect to the original weights}
\label{sec:LBAntoniadis}
If we computed both $T$ and $J$ with respect to the weight $w'$,
our algorithm would be equivalent to running Christofides algorithm
with the weight of predicted edges set to zero, which is an approach
proposed by \citet{AntoniadisEPV25} in their meta-algorithm.
This, however does not work.
Consider the example in Figure~\ref{fig:doubletree}.
There, the cheapest $\odd(T)$-join with respect to the weights $w'$ would
be $T$ itself with $w'(T)=0$.
At the end, we would receive a tour $X$ with weight
$w(X) = 2w(X^*)-3$ as argued in Section~\ref{sec:LBdoubletree}.

\subsubsection{Optimal \texorpdfstring{$\boldsymbol{\odd(T)}$}{odd(T)}-join in Algorihtm~\ref{alg:alg1} cannot
be replaced by a 2-approximation}
\label{sec:LBTjoin-apx}
In Figure~\ref{fig:2apx-direct}, we show an input graph which shows that
Algorithm~\ref{alg:alg1}, if using a 2-approximate $\odd(T)$-join instead of the
optimal one, would achieve approximation ratio close to 1.5 even with arbitrary small
error.
In the picture, $T$ is an MST and has five odd vertices in the left-hand side
as well as in the right-hand side. Therefore, the optimal $\odd(T)$-join has to contain
an edge (or a path) of length $M$ connecting one vertex in the left-hand side
to another vertex in the right-hand side.
The other odd vertices are then matched within their copy of $K_5$ in an arbitrary
way. The total weight of the optimal $\odd(T)$-join is then $M+4$.
In Figure~\ref{fig:2apx-direct}, there is an $\odd(T)$-join $J$ with weight
$2M + 2$, i.e., it is a 2-approximate $\odd(T)$-join.
A tour which starts at vertex $v$ and continues over one of the dashed-blue edges in $J$,
even after short-cutting will have cost $3M + O(1)$,
while the optimal tour has cost $2M + O(1)$.

\begin{figure}
    \centering
    \includegraphics[]{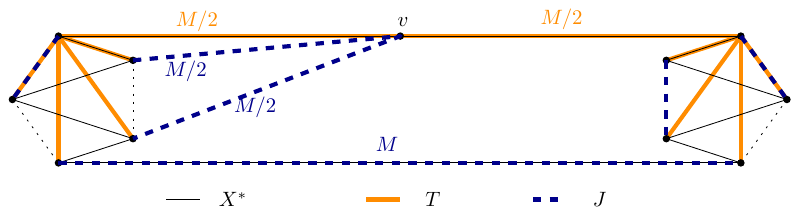}
    \caption{Example showing that Algorithm~\ref{alg:alg1} replacing the optimal
    $\odd(T)$-join with a 2-approximation does not work.
    The graph $G$ consists of all edges drawn in the picture regardless of color and solid/dashed/dotted style. The edges of the
    two copies of $K_5$ on both sides have all weight $1$.
    $P$ consists of all edges in the picture except for the two edges of length $M/2$ incident
    to $v$ marked with blue-dashed line, so $\eta = O(1)$. 
    }
    \label{fig:2apx-direct}
\end{figure}

%\subsection{Algorithms for Graphical TSP running in near-linear time require
%prediction of edge multiplicities}
%\label{sec:LBgraphMult}
%
%\fabr{This section is incomplete, probably we should just drop it at this point}
%
%We show that an algorithm with a near-linear running time which,
%having perfect information about the ground set
%of the optimal TSP tour and achieving approximation ration smaller than 1.2,
%could be use to compute $\rho$-approximate
%minimum-weight perfect matching in near-linear running time
%with $\rho<1.5$, outperforming the current best 2-approximation
%by \citet{GoemansW95}.
%
%Consider a complete input graph $K_2k$ on even number of vertices (necessary for
%the existence of a perfect matching) with weighted edges.
%
%The input graph is a complete graph on an even number of vertices with edges subdvide

\subsection{Non-metric TSP}
\label{appendix:non-metric}

It is well known that general TSP without the triangle inequality is inapproximable: unless P=NP, no polynomial-time algorithm can achieve any finite approximation ratio for non-metric TSP~\citep{SahniG76}. Unfortunately, as we show below, this hardness persists even when predictions with arbitrarily small error are available.

Specifically, for every $r>1$ and $\epsilon >0$, we show that there is no
polynomial-time $r$-approximation algorithm for non-metric TSP, even when
given predictions of error at most $\epsilon$, unless
P=NP. The proof is by reduction from Hamiltonian Cycle.

Without loss of generality, assume $\epsilon < 1/r$. Let $G=(V,E)$ be an
instance of Hamiltonian Cycle on $n=|V|$ vertices. We construct an instance of
non-metric TSP on the complete graph
$K = \left(V,\binom{V}{2}\right)$
with edge weights defined by
\[
w_e =
\begin{cases}
\epsilon/n & \text{if } e\in E,\\
1 & \text{if } e\notin E.
\end{cases}
\]
We also provide the prediction $\hat X = \emptyset$.

If $G$ contains a Hamiltonian cycle, then the corresponding tour in $K$ uses
only edges of weight $\epsilon/n$, and therefore has total length $\epsilon$.
In this case, the prediction error is $\eta=\epsilon$.

On the other hand, any tour that uses at least one edge outside $E$ has cost
strictly larger than $1$. Since $\epsilon < 1/r$, we have $1 > r\epsilon$.
Therefore, every non-optimal tour has length greater than $k$ times the optimum.
Consequently, any $r$-approximation algorithm must return the optimal tour
itself, which necessarily corresponds to a Hamiltonian cycle in $G$. Hence,
such an algorithm would solve Hamiltonian Cycle in polynomial time, implying
P=NP.

\section{More details on the experiments described in Section \ref{sec:empirical_results}}\label{app:extra_experiments}
In this section, we include additional explanations and comments on our experiments. 

\subsection{Implementation details}
The code was implemented in Python 3.13. For graph-related tasks, such as the computation of minimum spanning trees and minimum-weight $S$-joins, we used the \texttt{networkx} \citep{hagberg2008exploring} package. We note that more efficient implementations exist for what is currently the main bottleneck of our procedure, namely, minimum-weight perfect matching, such as Blossom V \citep{Kolmogorov09}. However, for this paper, we prioritized simplicity over efficiency. All experiments were performed on a cluster featuring dual Intel Xeon E5-2650 v3 CPUs with 128--512 GB of RAM. For neural network inference, we used an NVIDIA GeForce GTX 1080 GPU.

\subsection{On the score computation}\label{app:sub:score_comp}
In this section, we provide additional details on how the scores guiding the solution search methods are computed.
\begin{description}
    \item[DIFUSCO:] The paper by \citet{SunY23} provides four checkpointed models, trained on instances with $n \in \{50, 100, 500, 1000\}$. For inference, we follow a policy of selecting the smallest checkpoint dimension that is greater than or equal to the target size for each problem size under consideration. When the instances are larger than $1000$ nodes, we used the $1000$ checkpoint.
    \item[GNN-GLS:] The paper by \citet{HudsonLMP22} proposed three pretrained models, on data of different sizes: $n_{\text{model}} \in \{20, 50, 100\}$. We adopt the following policy: 
    \[
    n_{\text{model}} = \begin{cases}
        20 & n \leq 20 \\
        50 & 20 < n \leq 50 \\
        100 & \text{otherwise.}
    \end{cases}
    \]
    The paper proposes to predict \emph{regrets}, rather than probabilities, where a regret quantifies how undesirable it is to include a given edge in the candidate solution. We then convert regrets into probabilities through a nonincreasing transformation that incorporates a scaling component, rather than relying solely on normalization. The intuition is that regrets associated with ``reasonable'' edges (i.e., edges not trivially too long to be part of a good solution) should fall within the interval $\big[0, \tfrac{1}{n} \big]$. To enforce this behavior, each regret value $r_{ij}$ is mapped to a probability using
    \begin{equation}
    \label{eq:from_regret_to_prob}
        p_{ij} = \max(1 - n \cdot r_{ij},\, 0),
    \end{equation}    
    which guarantees that larger regrets lead to smaller (and eventually zero) probabilities.
    
    In cases where all predicted regrets map to zero, we default to a uniform distribution weighted by the inverse cost, namely,
    \[p_{ij} = \frac{1}{w_{ij}}, \qquad \forall\, 1 \leq i < j \leq n.\]
    \item[GNN4CO:] The paper by \citet{JoshiCRL22} proposes two pre-trained models: one trained on instances of variable size and another trained on large cases with $n = 200$ nodes. They also introduce two decoding strategies: an autoregressive (AR) decoder and a non-autoregressive (NAR) decoder, where the edge probabilities are predicted simultaneously. In our experiments, we made the following choices. We use only the architecture trained on variable-size instances, as it exhibits better generalization properties (see \citet[Section 5.1]{JoshiCRL22}). We adopt the AR decoding strategy, which yields better solution quality than the NAR decoder (see \citet[Figure 7]{JoshiCRL22}). We use the same hyperparameter configuration as in the original paper and discard any predicted probabilities below 0.1, following their approach.
    \item[SoftDist:] To obtain meaningful probability values, we first normalize the edge costs. For each graph, for each edge $e$, we define
    \[ \bar{w}(e) := \frac{w(e)}{\max_{e'\in E} w(w)},\]
    so that all costs lie within the interval $[0, 1]$.
    To select the appropriate temperature, we choose the one corresponding to the closest instance size among those validated by \citet{XiaYLLS024}. For example, for an instance with $612$ nodes, we use the temperature validated for $n = 500$. In the case of tiled size, we select the temperature associated with the largest number of nodes, as we observed a correlation between instance size and the lowest optimal temperature (e.g, for $n = 750$, we opt for the one associated to $1000$).
    \item[Synthetic predictors (SP)] Given $\varepsilon \in (0,1)$ and letting $OPT$ denote the optimal tour value, let $X^*$ be the set of edges of an optimal tour. We sample a subset $H^{-} \subseteq X^*$ such that $\sum_{e \in H^{-}} w(e) \simeq \varepsilon \cdot OPT$ and define $H^{+}$ analogously. We then set $X' := (X^* \setminus H^{-}) \cup H^{+}$, and assign predictions equal to $p = 0.9$ to the edges in $X'$ and 0 otherwise. 
\end{description}
For all predictors, we normalize the scores so that their maximum value is 1. An additional detail concerns normalization with respect to edge weights. For G1,G2, and BS we normalize the scores by the edge costs, as is standard in this line of work~\citep[see, e.g.,][]{SunY23}. For $\chrp$, we do not apply this normalization. The reason is that our algorithm does not rely on heatmaps alone to construct a feasible solution, but explicitly incorporates the edge weights. 

\subsection{Algorithm \ref{alg:alg1} vs \texorpdfstring{$\boldsymbol{\chrp}$}{CHR+}}\label{app:sec:alg1_chrp}
A careful analysis of Algorithm \ref{alg:alg1} reveals that it requires a subset of edges $P$ (of cardinality $k$) whose associated weights are set to zero. In practice, given a heatmap of edge probabilities, several strategies can be employed to derive $P$. One approach is to sample $k$ edges according to the edge probability distribution (\textbf{Alg1}), while another is to select the $k$ edges with the highest probabilities (\textbf{Alg1Top}). Alternatively, rather than defining $w'(e)$ as in Algorithm \ref{alg:alg1}, one can define the modified weight for every edge $e$ as $w'(e) = w(e)(1 - w(e))$ and execute Algorithm \ref{alg:alg1}. Since this latter approach is essentially Christofides' algorithm applied to a modified graph, we denote it as \textbf{$\chrp$}. While our theoretical analysis is developed for Alg1, we evaluate the three approaches. In all cases, we set $k = n$. Under the Alg1 approach, we generate 30 distinct edge sets $P$ for each instance to account for stochasticity; in contrast, the Alg1Top strategy defines $P$ deterministically based on the provided heatmap. Using SoftDist as a representative predictor, we compute the resulting optimality gaps for each strategy to assess their relative performance. Figure \ref{fig:compare_methods_for_generating_P_SoftDist} demonstrates that $\chrp$ consistently achieves superior performance. Consequently, we adopt $\chrp$ as the primary decoding strategy for all subsequent experiments.
Figure \ref{fig:compare_methods_for_generating_P_SoftDist} demonstrates the empirical superiority of $\chrp$ over Algorithm \ref{alg:alg1} in effectively leveraging neural predictions.
\begin{figure}
   \centering
   \includegraphics[width=0.5\linewidth]{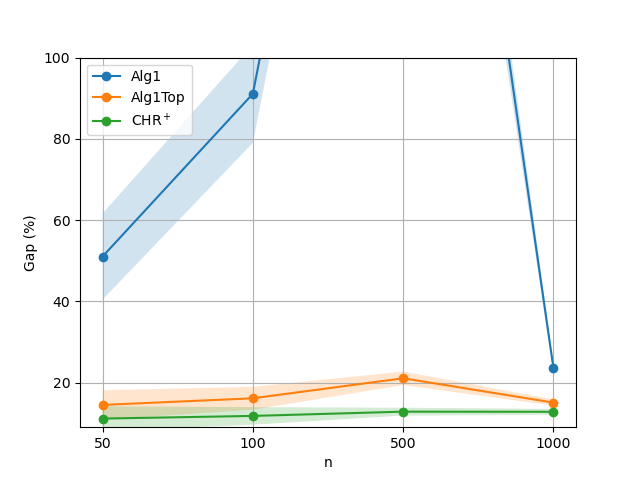}
   \caption{Comparative analysis of edge selection strategies using SoftDist predictions. We compare the optimality gaps of Alg1, Alg1Top, and $\chrp$. We report the average optimality gap (percentage) on dataset $\mathcal{U}$.}
   \label{fig:compare_methods_for_generating_P_SoftDist}
\end{figure}

\subsection{Quality of the prediction error with respect to greedy}\label{app:sec:error_greedy}
\paragraph{G2 decoding has worse error dependency then $\chrp$.}
Figure \ref{fig:smoothness_with_epsilon_50_5_30} reports the average performance of Christofides, {G2}, and $\chrp$ on the first five instances of our dataset with $n = 50$. The probabilities are generated using SP, and the sets $H^+$ and $H^-$ are sampled at random using 30 different seeds. Since the greedy approach performs significantly worse than the other methods, we exclude it from subsequent analyses. We report only {G2} because our preliminary analysis indicates that {G1} performs worse than {G2}.
\begin{figure}[h]
    \centering
    \includegraphics[width=0.5\linewidth]{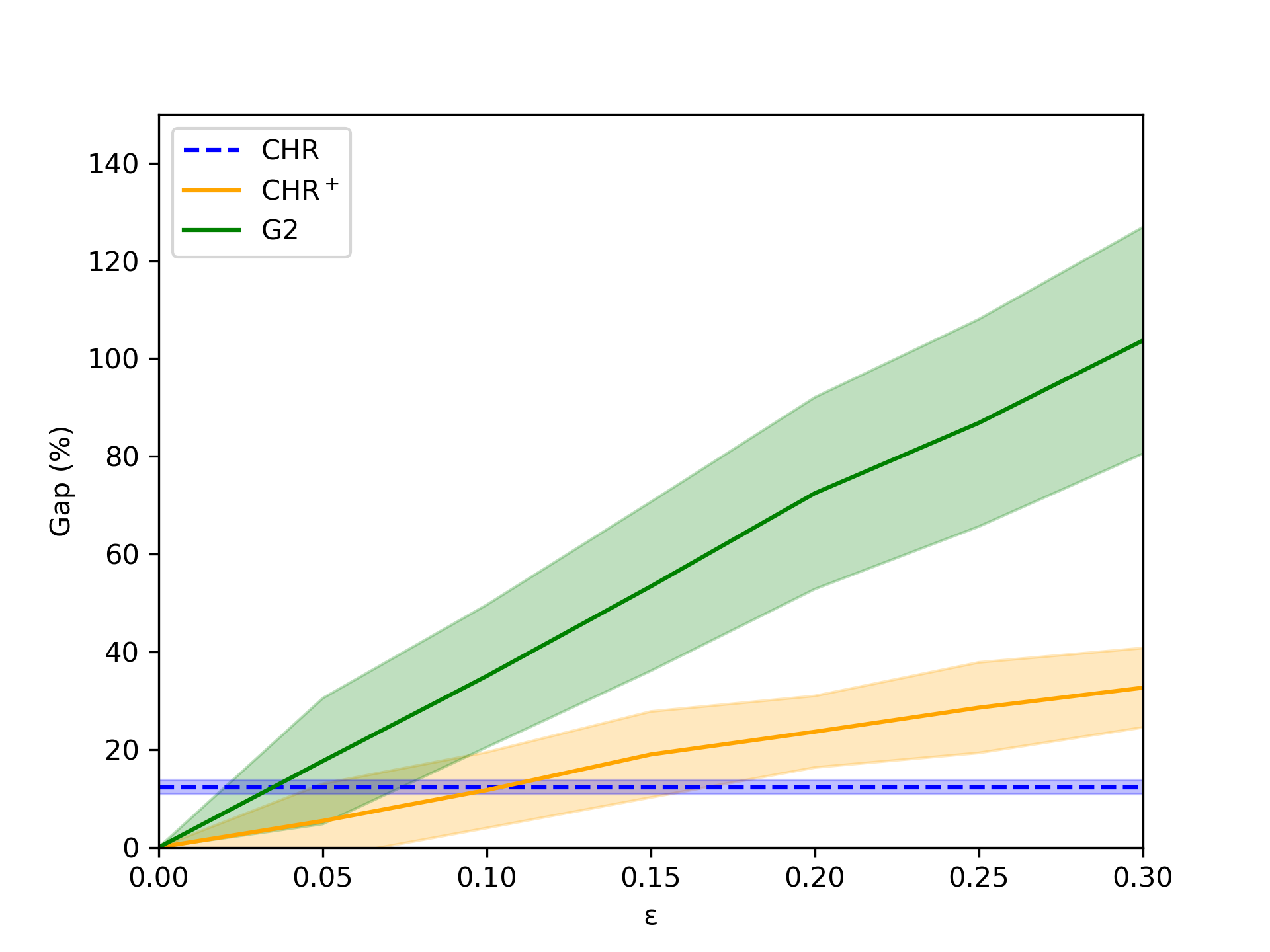}
    \caption{Average performance (gap percentage) of Christofides, {G2}, and $\chrp$ on the first five instances of our dataset with $n = 50$. The probabilities are generated using SP, and the sets $H^+$ and $H^-$ are sampled at random using 30 different seeds.}
    \label{fig:smoothness_with_epsilon_50_5_30}
\end{figure}

\paragraph{$\chrp$ exhibits graceful degradation with respect to the prediction error.} Figure \ref{fig:smoothness_with_epsilon_500_100_5} shows how $\chrp$ performances are with respect to the prediction error $\varepsilon$. To obtain this plot, we fix $n = 500$ and sample $H^+$ and $H^-$ at random using 5 different seeds.

\subsection{Comparison of different solution search methods}\label{app:sec:greedy_TSPLIB}
In this section, we report an extensive analysis of all the solution search methods on the dataset generated uniformly at random ($\mathcal{U}$) and the instances from TSPLIB (Euclidean, $\mathcal{T}_E$, metric non-Euclidean, $\mathcal{T}_M$). See Table \ref{tab:decoders_and_predictors}, Table \ref{tab:tsplib}, and Table \ref{tab:T_M}. For our experiments, the beam search width is set to 50, providing an optimal trade-off between solution quality and computational efficiency. The results on dataset $\mathcal{U}$ demonstrate that $\chrp$ is the most robust decoding strategy as the instance size $n$ scales. While traditional solution search (G1, G2, BS) often sees its performance degrade significantly compared to the Christofides baseline as $n$ increases, $\chrp$ consistently maintains or improves upon the baseline. Notably, for $n=1000$, $\chrp$ paired with SoftDist achieves a gap of $12.80\%$, outperforming plain Christofides ($12.90\%$), whereas G1 and G2 degrade to $29.00\%$ and $25.43\%$ respectively. 
When paired with the DIFUSCO predictor, $\chrp$ achieves state-of-the-art results on smaller instances, such as a $0.10\%$ gap for $n=50$. 
The empirical superiority of $\chrp$ generalizes to the Euclidean TSPLIB benchmark (dataset $\mathcal{T}$). Across nearly all instances, $\chrp$ produces the lowest integrality gaps. For example, on the pr76 instance, $\chrp$ with DIFUSCO yields a gap of $1.91\%$, a significant improvement over the $7.88\%$ gap provided by the standard Christofides algorithm.
In the metric non-Euclidean setting, we can only evaluate SoftDist and GNNGLS, since the other two predictors are restricted to the Euclidean framework. In this regime, $\chrp$ is overall less effective. Although the performance gains are less pronounced than in the Euclidean case -- primarily because the problem is more challenging -- and in some instances closer to those of the plain Christofides algorithm, our CHR$^+$ remains the most robust decoder across all instances. On average, it performs best when combined with GNNGLS, achieving an average gap of 8.49\% compared to 8.70\% for Christofides. This difference should be interpreted with some caution, as GNNGLS cannot compute heatmaps on large and difficult instances, which would likely increase the average gap.

\begin{table}[t]
\resizebox{\textwidth}{!}{
\begin{tabular}{llrrrrrrrrrrrr}
\hline
             & Dataset $\rightarrow$   &\multicolumn{3}{c}{TSP50} &\multicolumn{3}{c}{TSP100} &\multicolumn{3}{c}{TSP500} &\multicolumn{3}{c}{TSP1000} \\
Method       & Decoder & Tour length & Gap ($\%$) & Runtime (s) & Tour length & Gap ($\%$) & Runtime (s) & Tour length & Gap ($\%$) & Runtime (s) & Tour length & Gap ($\%$) & Runtime (s) \\
\hline
Concorde     & Exact     & 5.69  & 0.00      & 0.06$*$   & 7.76  & 0.00      & 0.24$*$   & 16.55 & 0.00 & 18.67$*$ & 23.12 & 0.00 & 84.41$*$ \\
\hline
Christofides & Heuristic & 6.33  & 11.30  & 0.01 &  8.68 &  11.95 & 0.03 & 18.67 & 12.86 & 3.28 & 26.10 & 12.9 & 23.28 \\
\hline
\multirow{4}{4em}{SoftDist}   & G1        & 6.89  & 21.07  & 0.00    & 9.54  & 23.00 &  0.00   & 20.55 & 24.18 & 0.18 & 29.00 & 25.43 & 1.35 \\
             & G2        & 6.57  & 15.56  & 0.00    & 9.06  & 16.78  &  0.14   & 19.38 & 17.12 & 0.03 & 27.25 & 17.18 & 0.18 \\
             & BS        & 6.94  & 22.01  & 0.03 &  9.99    & 28.78 & 0.14 & 22.68 & 37.11 & 9.46 & 30.93 & 33.77 & 71.92 \\
             & $\chrp$     & 6.32  & 11.04  & 0.01 & 8.67  & 11.75  & 0.03 &  \textbf{18.67} & \textbf{12.83} & 2.77 & \textbf{26.08} & \textbf{12.80} & 22.15  \\
\hline
\multirow{3}{4em}{DIFUSCO}      & G1        & 5.73  & 0.72   & 0.00    &  7.86 & 1.4 & 0.00   &24.15  & 45.96 & 0.20 & 46.39 & 100.69 & 1.28  \\
             & G2        & 5.72  & 0.54   & 0.00    &   7.87 &  1.51 & 0.0   & 24.15 & 45.96 & 0.2 & 43.36 & 87.57 & 0.18 \\
             & BS        & \textbf{5.70}   & \textbf{0.14}   & 0.03 & 7.79 & 0.48 & 0.14 & 29.11 & 75.94 & 10.35 & 52.01 & 125.0  & 68.85  \\
             & $\chrp$      & 5.70   & 0.18   & 0.00    & \textbf{7.78} & \textbf{0.33} &  0.01 &  19.61 & 18.49 & 3.78 & 29.86 & 29.18 & 34.9  \\
\hline
\multirow{3}{4em}{GNNGLS}        & G1        & 6.90  & 21.37  & 0.00    & 10.6 &  36.66   &    0.00 & 30.57 & 84.75 & 0.18 & - & - & -  \\
             & G2        & 6.33  & 11.18  & 0.00    &  9.27 &  19.58 &  0.00   & 24.98 & 50.96 & 0.05  & - & - & -  \\
             & BS        & 7.37  & 29.53  & 0.02 & 18.32 &136.25 & 0.12&  147.71 & 793.23 & 7.66 & - & - & -  \\
             & $\chrp$      & 6.33  & 11.23  & 0.01 & 8.68  & 11.92 & 0.04 & 18.67 & 12.87 &  2.85 & - & - & - \\
\hline
\multirow{3}{4em}{GNN4CO}         & G1        & 25.03 & 340.36 & 0.00 &  49.1  &  533.50   & 0.00  & 231.40 & 1298.68 & 0.18 & 452.02 & 1855.45 & 1.25   \\
             & G2        & 25.04 & 341.21 & 0.00    &  49.23  &  535.20   &  0.00  &  228.39 & 1280.47 & 0.05 &  452.02 & 1855.45 & 1.25 \\
             & BS        & 24.53 & 331.91 & 0.03 &  48.86  &  530.48   &  0.13 & 231.68 & 1300.40 & 8.88 & 454.08 & 1864.37 & 63.29  \\
             & $\chrp$      & 14.65 & 158.04 & 0.01 &  20.98  &  170.39   & 0.04   & 30.21 & 82.53 & 3.44 & 33.87 & 46.48 & 23.23\\
\hline
\end{tabular}
}
\captionsetup{labelformat=empty}
\caption{Table 1: Comparison of the performances of the four decoding strategies with different predictors on dataset $\mathcal{U}$ as taken from the paper by \citet{ma2025mlcobench}: results reported are the x
mean of average tour length, optimality gap (percentage), and running time (seconds). Concorde running times, marked with (*), are taken from ML4CO-Bench-101. As the instance size increased, GNN-GLS became computationally intractable and was unable to produce valid heatmaps.}\label{tab:decoders_and_predictors}
\end{table}

\begin{sidewaystable}[p]
    \centering
    \resizebox{\textwidth}{!}{%
    \begin{tabular}{rcrrrrrrrrrrrrrrrrrrrrrrrrrrrrrrrrrrrr}
    \hline 
    $n$ & Instance  ($\downarrow$) & \multicolumn{34}{c}{Algorithm} \\
        &          & \multicolumn{2}{c}{Christofides}  &   \multicolumn{8}{c}{SoftDist}    & \multicolumn{8}{c}{DIFUSCO}   & \multicolumn{8}{c}{GNNGLS}    & \multicolumn{8}{c}{GNN4CO}\\
        &  &   &  & \multicolumn{2}{c}{G1} & \multicolumn{2}{c}{G2} & \multicolumn{2}{c}{BS} & \multicolumn{2}{c}{$\chrp$} & \multicolumn{2}{c}{G1} & \multicolumn{2}{c}{G2} & \multicolumn{2}{c}{BS} & \multicolumn{2}{c}{$\chrp$} &\multicolumn{2}{c}{G1} & \multicolumn{2}{c}{G2} & \multicolumn{2}{c}{BS} & \multicolumn{2}{c}{$\chrp$} &\multicolumn{2}{c}{G1} & \multicolumn{2}{c}{G2} & \multicolumn{2}{c}{BS} & \multicolumn{2}{c}{$\chrp$} \\
         & &  Gap (\%) & Runtime (s) & Gap (\%) & Runtime (s) & Gap (\%) & Runtime (s) & Gap (\%) & Runtime (s) & Gap (\%) & Runtime (s) & Gap (\%) & Runtime (s) & Gap (\%) & Runtime (s) & Gap (\%) & Runtime (s) & Gap (\%) & Runtime (s)  & Gap (\%) & Runtime (s)  &  Gap (\%) & Runtime (s)  &  Gap (\%) & Runtime (s) \\
    \hline
    51 & eil51 & 11.98 & 0.02 & 23.30 & 0.00 & 19.08 & 0.00 & 19.84 & 0.09 & 10.58 & 0.02 & 13.96 & 0.00 & 6.07 & 0.00 & 2.93 & 0.08 & \textbf{2.78} & 0.01 & 19.69 & 0.00 & 12.21 & 0.00 & 31.03 & 0.08 & 11.98 & 0.02 & 300.46 & 0.00 & 322.16 & 0.00 & 280.10 & 0.09 & 237.48 & 0.02 \\
52 & berlin52 & 13.51 & 0.02 & 30.15 & 0.00 & 29.74 & 0.00 & 24.51 & 0.09 & 12.73 & 0.03 & 11.28 & 0.00 & 12.29 & 0.00 & \textbf{8.85} & 0.09 & 9.28 & 0.01 & 55.68 & 0.00 & 38.54 & 0.00 & 35.47 & 0.09 & 13.51 & 0.02 & 280.56 & 0.00 & 284.69 & 0.02 & 279.92 & 0.09 & 41.30 & 0.03 \\
70 & st70 & 12.60 & 0.05 & 21.64 & 0.01 & 25.38 & 0.01 & 48.45 & 0.17 & 11.51 & 0.06 & \textbf{0.00} & 0.01 & \textbf{0.00} & 0.01 & \textbf{0.00} & 0.17 & \textbf{0.00} & 0.02 & 18.71 & 0.01 & 9.94 & 0.01 & 37.72 & 0.17 & 12.60 & 0.06 & 361.54 & 0.01 & 349.36 & 0.01 & 351.87 & 0.19 & 284.40 & 0.05 \\
76 & eil76 & 14.01 & 0.07 & 17.18 & 0.01 & 17.36 & 0.01 & 23.84 & 0.21 & 12.24 & 0.06 & \textbf{0.18} & 0.01 & 0.18 & 0.01 & 0.18 & 0.22 & 0.18 & 0.03 & 22.66 & 0.01 & 13.36 & 0.01 & 25.21 & 0.21 & 14.01 & 0.08 & 345.64 & 0.01 & 354.90 & 0.01 & 359.25 & 0.21 & 289.15 & 0.05 \\
76 & pr76 & 7.88 & 0.03 & 8.08 & 0.01 & 30.51 & 0.01 & 31.58 & 0.23 & 8.88 & 0.04 & \textbf{1.91} & 0.01 & \textbf{1.91} & 0.01 & \textbf{1.91} & 0.21 &\textbf{ 1.91 }& 0.03 & 43.97 & 0.01 & 37.30 & 0.01 & 35.13 & 0.19 & 8.61 & 0.05 & 541.93 & 0.01 & 549.61 & 0.01 & 539.63 & 0.20 & 17.89 & 0.05 \\
99 & rat99 & \textbf{12.15} & 0.08 & 36.59 & 0.01 & 29.93 & 0.01 & 37.22 & 0.40 & 12.45 & 0.11 & 15.60 & 0.02 & 15.60 & 0.01 & 15.60 & 0.39 & 15.60 & 0.05 & 28.34 & 0.01 & 21.48 & 0.01 & 32.63 & 0.39 & \textbf{12.15} & 0.10 & 420.38 & 0.02 & 398.06 & 0.01 & 424.34 & 0.39 & 271.30 & 0.14 \\
100 & kroA100 & \textbf{9.43} & 0.10 & 38.42 & 0.01 & 39.60 & 0.02 & 28.82 & 0.42 & 14.49 & 0.11 & 11.35 & 0.02 & 11.35 & 0.01 & 11.35 & 0.40 & 11.35 & 0.05 & 26.17 & 0.01 & 13.68 & 0.01 & 47.75 & 0.40 & \textbf{9.43} & 0.12 & 645.32 & 0.02 & 675.39 & 0.02 & 678.76 & 0.40 & 360.86 & 0.24 \\
100 & kroB100 & 8.45 & 0.08 & 23.50 & 0.01 & 22.77 & 0.01 & 38.37 & 0.42 & 14.06 & 0.09 & 5.39 & 0.02 & 5.39 & 0.01 & \textbf{4.91} & 0.40 & 5.39 & 0.05 & 31.69 & 0.01 & 16.60 & 0.01 & 41.93 & 0.40 & 8.45 & 0.11 & 615.19 & 0.02 & 611.68 & 0.02 & 600.34 & 0.39 & 381.32 & 0.14 \\
100 & kroC100 & 9.66 & 0.10 & 26.24 & 0.01 & 15.00 & 0.01 & 32.50 & 0.39 & \textbf{9.43} & 0.10 & 11.16 & 0.02 & 11.16 & 0.03 & 11.16 & 0.39 & 11.16 & 0.05 & 26.87 & 0.01 & 12.93 & 0.01 & 36.22 & 0.41 & 9.66 & 0.11 & 727.50 & 0.02 & 722.83 & 0.02 & 712.18 & 0.40 & 27.35 & 0.10 \\
100 & kroD100 & 11.70 & 0.10 & 17.00 & 0.01 & 21.41 & 0.02 & 35.75 & 0.42 & 11.85 & 0.11 & 6.64 & 0.02 & 6.64 & 0.01 & \textbf{6.47} & 0.40 & 6.64 & 0.05 & 26.56 & 0.01 & 14.82 & 0.01 & 36.16 & 0.40 & 11.70 & 0.12 & 658.56 & 0.02 & 660.21 & 0.02 & 637.87 & 0.40 & 411.90 & 0.13 \\
100 & kroE100 & 7.93 & 0.13 & 21.87 & 0.01 & 28.48 & 0.01 & 26.79 & 0.42 & 9.77 & 0.14 &\textbf{ 5.32} & 0.02 & \textbf{5.32} & 0.01 & \textbf{5.32} & 0.40 & \textbf{5.32} & 0.05 & 25.01 & 0.01 & 12.59 & 0.01 & 26.92 & 0.40 & 7.93 & 0.14 & 655.35 & 0.02 & 684.36 & 0.02 & 651.07 & 0.40 & 383.05 & 0.12 \\
100 & rd100 & 12.58 & 0.15 & 31.04 & 0.01 & 26.12 & 0.01 & 35.54 & 0.43 & 12.67 & 0.14 & \textbf{0.09} & 0.02 & \textbf{0.09} & 0.01 & \textbf{0.09} & 0.40 & \textbf{0.09} & 0.05 & 25.67 & 0.01 & 16.99 & 0.01 & 33.23 & 0.41 & 12.58 & 0.17 & 542.36 & 0.02 & 547.80 & 0.02 & 520.11 & 0.40 & 24.98 & 0.17 \\
101 & eil101 & \textbf{12.85} & 0.13 & 17.40 & 0.01 & 19.70 & 0.02 & 24.82 & 0.44 & 13.07 & 0.15 & 66.60 & 0.02 & 77.56 & 0.02 & 92.47 & 0.41 & 12.86 & 0.15 & 27.92 & 0.01 & 26.58 & 0.02 & 38.68 & 0.42 & \textbf{12.85} & 0.15 & 420.75 & 0.02 & 425.03 & 0.02 & 408.26 & 0.41 & 22.93 & 0.15 \\
105 & lin105 & 14.65 & 0.10 & 21.12 & 0.02 & 32.36 & 0.02 & 42.02 & 0.46 & \textbf{13.96} & 0.09 & 64.87 & 0.02 & 35.16 & 0.02 & 74.50 & 0.44 & 14.70 & 0.11 & 41.58 & 0.02 & 16.61 & 0.02 & 25.51 & 0.44 & 14.65 & 0.11 & 1044.13 & 0.02 & 1043.03 & 0.02 & 1031.18 & 0.44 & 43.94 & 0.12 \\
107 & pr107 & 8.14 & 0.07 & 8.04 & 0.02 & 6.03 & 0.02 & 26.20 & 0.48 & \textbf{5.22} & 0.15 & 28.63 & 0.02 & 9.25 & 0.02 & 26.67 & 0.45 & 5.87 & 0.16 & 5.36 & 0.02 & 7.32 & 0.02 & 29.99 & 0.46 & 8.14 & 0.09 & 1662.95 & 0.02 & 1659.14 & 0.02 & 1658.01 & 0.45 & 14.30 & 0.10 \\
124 & pr124 & 9.55 & 0.07 & 17.49 & 0.02 & 18.80 & 0.02 & 19.44 & 0.68 & 10.05 & 0.10 & 79.26 & 0.03 & 39.23 & 0.02 & 101.43 & 0.64 & \textbf{7.64} & 0.09 & 46.34 & 0.02 & 57.28 & 0.02 & 58.95 & 0.53 & 10.57 & 0.10 & 1119.52 & 0.03 & 1118.40 & 0.02 & 1060.50 & 0.64 & 31.40 & 0.14 \\
127 & bier127 & 12.70 & 0.25 & 38.39 & 0.02 & 18.67 & 0.03 & 30.73 & 0.74 & \textbf{8.16} & 0.22 & 77.50 & 0.03 & 74.71 & 0.03 & 70.80 & 0.69 & 10.13 & 0.17 & 23.53 & 0.02 & 12.72 & 0.02 & 34.74 & 0.59 & 11.15 & 0.25 & 424.68 & 0.03 & 415.92 & 0.03 & 401.89 & 0.70 & 73.14 & 0.26 \\
130 & ch130 & 11.89 & 0.15 & 40.94 & 0.02 & 17.06 & 0.03 & 26.23 & 0.78 & 11.87 & 0.19 & 59.97 & 0.03 & 34.67 & 0.03 & 106.78 & 0.75 & \textbf{10.48} & 0.16 & 23.96 & 0.02 & 28.38 & 0.03 & 47.87 & 0.75 & 11.89 & 0.18 & 606.75 & 0.03 & 620.37 & 0.03 & 585.90 & 0.74 & 305.46 & 0.40 \\
136 & pr136 & \textbf{7.24} & 0.08 & 18.31 & 0.03 & 13.42 & 0.03 & 36.26 & 0.84 & 12.63 & 0.16 & 32.76 & 0.03 & 33.61 & 0.03 & 81.42 & 0.81 & 11.76 & 0.16 & 75.64 & 0.03 & 63.62 & 0.03 & 136.57 & 0.66 & 12.97 & 0.13 & 618.32 & 0.03 & 654.82 & 0.03 & 639.29 & 0.81 & 307.58 & 0.22 \\
144 & pr144 & 20.63 & 0.08 & 7.74 & 0.03 & 11.18 & 0.03 & \textbf{6.67} & 0.97 & 20.63 & 0.11 & 52.49 & 0.04 & 41.44 & 0.03 & 77.29 & 0.90 & 23.63 & 0.15 & 67.82 & 0.03 & 110.49 & 0.03 & 61.45 & 0.76 & 20.63 & 0.11 & 1543.76 & 0.04 & 1574.92 & 0.03 & 1508.91 & 0.91 & 262.50 & 0.18 \\
150 & ch150 & 9.68 & 0.17 & 24.64 & 0.04 & 18.11 & 0.04 & 34.25 & 1.07 & 10.16 & 0.21 & 29.66 & 0.06 & 34.71 & 0.04 & 86.40 & 1.01 & \textbf{9.55} & 0.23 & 25.47 & 0.04 & 18.56 & 0.03 & 51.74 & 1.06 & 9.68 & 0.21 & 639.24 & 0.04 & 647.45 & 0.04 & 657.04 & 1.04 & 26.41 & 0.32 \\
150 & kroA150 & \textbf{10.68} & 0.35 & 24.99 & 0.04 & 21.06 & 0.03 & 33.90 & 1.11 & 10.76 & 0.29 & 45.03 & 0.04 & 52.13 & 0.04 & 78.72 & 1.02 & 11.06 & 0.31 & 26.71 & 0.03 & 20.23 & 0.03 & 49.33 & 1.04 & \textbf{10.68} & 0.38 & 784.48 & 0.04 & 762.16 & 0.04 & 799.74 & 1.02 & 25.30 & 0.45 \\
150 & kroB150 & 14.18 & 0.41 & 21.38 & 0.04 & 21.92 & 0.03 & 37.83 & 1.04 & \textbf{14.13} & 0.33 & 35.02 & 0.04 & 52.41 & 0.06 & 90.39 & 1.01 & 15.14 & 0.39 & 25.64 & 0.04 & 20.27 & 0.03 & 59.92 & 1.06 & 14.18 & 0.41 & 879.62 & 0.04 & 874.16 & 0.04 & 895.01 & 1.04 & 29.72 & 0.33 \\
152 & pr152 & \textbf{7.52} & 0.08 & 17.53 & 0.04 & 15.96 & 0.03 & 21.27 & 1.09 & 7.52 & 0.12 & 33.37 & 0.04 & 20.50 & 0.04 & 68.38 & 1.04 & 8.83 & 0.18 & 69.66 & 0.04 & 145.14 & 0.03 & 95.88 & 0.87 & \textbf{7.52} & 0.12 & 1483.79 & 0.04 & 1481.62 & 0.04 & 1501.13 & 1.03 & 172.88 & 0.25 \\
159 & u159 & 11.45 & 0.22 & 21.74 & 0.04 & 18.39 & 0.04 & 42.93 & 1.19 & \textbf{9.01} & 0.25 & 36.91 & 0.04 & 34.34 & 0.04 & 57.93 & 1.16 & 11.53 & 0.26 & 20.46 & 0.04 & 16.52 & 0.04 & 17.66 & 0.98 & 11.30 & 0.24 & 1111.82 & 0.05 & 1114.74 & 0.04 & 1119.59 & 1.15 & 274.01 & 0.35 \\
195 & rat195 & 13.93 & 0.41 &\textbf{ 11.85} & 0.07 & 17.15 & 0.05 & 23.20 & 1.93 & 12.19 & 0.49 & 44.84 & 0.07 & 39.43 & 0.06 & 94.11 & 1.83 & 22.33 & 0.48 & 55.39 & 0.07 & 80.84 & 0.06 & 73.00 & 1.50 & 13.93 & 0.50 & 700.93 & 0.07 & 737.57 & 0.06 & 719.40 & 1.88 & 111.34 & 0.65 \\
198 & d198 & \textbf{9.54} & 0.47 & 21.42 & 0.07 & 17.82 & 0.06 & 18.55 & 2.03 & 9.73 & 0.46 & 40.03 & 0.08 & 34.77 & 0.07 & 74.74 & 1.92 & 10.50 & 0.34 & 17.78 & 0.07 & 22.97 & 0.06 & 39.66 & 1.93 & \textbf{9.54} & 0.53 & 1062.26 & 0.08 & 1012.50 & 0.07 & 991.25 & 1.92 & 101.73 & 0.67 \\
200 & kroA200 & \textbf{12.34} & 0.73 & 30.32 & 0.07 & 25.40 & 0.06 & 32.04 & 2.10 & 14.50 & 0.80 & 40.62 & 0.08 & 34.23 & 0.09 & 90.29 & 1.99 & 14.40 & 0.73 & 21.89 & 0.07 & 17.82 & 0.05 & 46.60 & 2.07 & \textbf{12.34} & 0.79 & 944.17 & 0.08 & 960.59 & 0.07 & 926.32 & 2.02 & 228.31 & 1.05 \\
200 & kroB200 & 11.42 & 0.57 & 23.53 & 0.07 & 19.65 & 0.06 & 53.19 & 2.09 & \textbf{11.36} & 0.67 & 59.25 & 0.08 & 41.62 & 0.07 & 88.05 & 1.97 & 14.76 & 0.70 & 25.62 & 0.07 & 22.23 & 0.06 & 47.38 & 2.07 & 11.42 & 0.63 & 914.76 & 0.08 & 886.21 & 0.06 & 945.79 & 2.01 & 324.47 & 1.26 \\
225 & ts225 & 6.03 & 0.20 & 20.02 & 0.10 & 11.23 & 0.07 & 21.29 & 2.77 &\textbf{ 4.85} & 0.29 & 24.29 & 0.11 & 16.05 & 0.09 & 46.92 & 2.62 & 6.13 & 0.31 & 50.77 & 0.10 & 31.85 & 0.07 & 62.14 & 2.22 & 5.35 & 0.29 & 1122.68 & 0.11 & 1126.39 & 0.09 & 1137.42 & 2.61 & 5.86 & 0.30 \\
225 & tsp225 & 13.01 & 0.56 & 27.91 & 0.10 & 21.09 & 0.07 & 41.73 & 2.73 & \textbf{12.68} & 0.59 & 39.26 & 0.11 & 31.41 & 0.09 & 84.15 & 2.69 & 16.78 & 0.68 & 25.14 & 0.10 & 19.88 & 0.10 & 40.44 & 2.68 & 13.01 & 0.65 & 1234.84 & 0.11 & 1215.86 & 0.09 & 1218.04 & 2.62 & 19.50 & 0.74 \\
226 & pr226 & 15.00 & 0.45 & 17.27 & 0.10 & \textbf{12.51} & 0.07 & 26.30 & 2.75 & 15.62 & 0.38 & 97.31 & 0.11 & 93.23 & 0.12 & 156.85 & 2.64 & 13.34 & 0.51 & 90.27 & 0.10 & 119.94 & 0.07 & 43.51 & 2.17 & 14.72 & 0.45 & 1375.20 & 0.11 & 1383.61 & 0.09 & 1502.31 & 2.62 & 135.34 & 0.89 \\
262 & gil262 & 12.48 & 1.32 & 29.26 & 0.15 & 20.47 & 0.11 & 44.63 & 3.96 & 13.39 & 1.29 & 117.90 & 0.16 & 90.03 & 0.15 & 148.93 & 3.88 & 21.67 & 1.31 & 35.64 & 0.14 & \textbf{12.42} & 0.09 & 48.68 & 3.99 & 12.48 & 1.31 & 934.63 & 0.16 & 887.60 & 0.13 & 955.17 & 3.92 & 207.60 & 1.85 \\
264 & pr264 & \textbf{10.97} & 0.60 & 22.01 & 0.15 & 12.33 & 0.11 & 17.83 & 3.93 & 11.40 & 0.62 & 40.88 & 0.16 & 20.36 & 0.12 & 76.84 & 3.85 & 11.15 & 0.63 & 16.07 & 0.17 & 15.19 & 0.10 & 52.77 & 3.64 & \textbf{10.97} & 0.60 & 2148.82 & 0.16 & 2176.48 & 0.14 & 2156.01 & 3.81 & 34.06 & 0.79 \\
280 & a280 & 15.02 & 1.07 & 18.93 & 0.17 & 31.51 & 0.14 & 28.59 & 4.57 & \textbf{12.24} & 0.82 & 39.25 & 0.19 & 26.04 & 0.15 & 74.53 & 4.73 & 17.96 & 0.97 & 21.70 & 0.20 & 14.45 & 0.11 & 39.98 & 4.56 & 15.02 & 1.10 & 1318.69 & 0.19 & 1317.90 & 0.14 & 1300.96 & 4.61 & 20.01 & 1.11 \\
299 & pr299 & 9.97 & 1.42 & 29.17 & 0.21 & 17.60 & 0.14 & 36.33 & 5.59 & \textbf{9.53} & 1.39 & 40.69 & 0.24 & 27.64 & 0.16 & 72.14 & 5.62 & 19.24 & 1.68 & 31.36 & 0.24 & 32.03 & 0.18 & 58.76 & 5.45 & 10.58 & 1.50 & 1282.27 & 0.24 & 1258.50 & 0.19 & 1264.83 & 5.57 & 19.29 & 1.74 \\
318 & lin318 & \textbf{12.90} & 1.79 & 21.56 & 0.24 & 21.19 & 0.16 & 30.34 & 6.50 & 13.99 & 1.47 & 50.45 & 0.28 & 33.03 & 0.18 & 94.91 & 6.36 & 18.22 & 2.09 & 28.52 & 0.24 & 18.71 & 0.17 & 46.33 & 6.59 & \textbf{12.90} & 1.81 & 1090.49 & 0.30 & 1103.34 & 0.25 & 1073.13 & 6.85 & 28.49 & 2.12 \\
400 & rd400 & \textbf{13.58} & 4.81 & 24.76 & 0.63 & 25.10 & 0.33 & 32.60 & 12.92 & 13.82 & 4.98 & 43.77 & 0.49 & 35.71 & 0.38 & 85.75 & 12.71 & 17.96 & 5.19 & 25.47 & 0.45 & 17.13 & 0.30 & 38.96 & 12.78 & 13.58 & 5.07 & 1046.74 & 0.48 & 1043.00 & 0.34 & 1104.71 & 12.44 & 27.65 & 5.66 \\
417 & fl417 & 12.55 & 2.12 & 32.21 & 0.71 & 20.11 & 0.30 & 37.48 & 13.08 & \textbf{9.25} & 1.79 & 107.62 & 0.54 & 86.25 & 0.36 & 213.09 & 12.95 & 16.26 & 4.19 & 26.85 & 0.49 & 12.13 & 0.29 & 68.04 & 13.24 & 12.55 & 2.45 & 4279.19 & 0.60 & 4259.07 & 0.39 & 4175.90 & 12.92 & 108.90 & 2.86 \\
439 & pr439 & 12.26 & 2.53 & 24.84 & 0.57 & 15.39 & 0.41 & 42.99 & 15.09 & \textbf{11.84} & 2.54 & 57.27 & 0.61 & 46.33 & 0.42 & 89.73 & 15.27 & 18.04 & 4.10 & 79.79 & 0.57 & 81.02 & 0.36 & 163.88 & 12.11 & 12.75 & 2.85 & 1598.09 & 0.62 & 1546.99 & 0.45 & 1533.21 & 15.94 & 12.33 & 2.79 \\
442 & pcb442 & \textbf{8.05} & 3.82 & 23.85 & 0.86 & 15.22 & 0.35 & 30.38 & 16.38 & 11.31 & 3.37 & 36.09 & 0.65 & 30.38 & 0.43 & 72.93 & 16.82 & 15.50 & 5.05 & 22.06 & 0.59 & 20.26 & 0.37 & 33.68 & 16.55 & 8.05 & 4.22 & 1648.96 & 0.64 & 1627.67 & 0.49 & 1619.45 & 16.35 & 21.96 & 4.84 \\
493 & d493 & \textbf{9.41} & 6.47 & 16.79 & 0.79 & 19.98 & 0.55 & 26.41 & 22.10 & 10.66 & 6.70 & 57.15 & 0.84 & 49.10 & 0.55 & 87.03 & 22.20 & 14.57 & 7.11 & 24.62 & 0.78 & 16.61 & 0.54 & 43.57 & 22.70 & \textbf{9.41} & 6.71 & 1401.60 & 0.91 & 1371.19 & 0.53 & 1392.21 & 22.35 & 106.68 & 8.85 \\
574 & u574 & \textbf{11.40} & 10.24 & 31.50 & 1.16 & 23.97 & 0.72 & 36.50 & 34.20 & 11.56 & 9.89 & 68.57 & 1.34 & 49.86 & 0.70 & 79.24 & 34.52 & 21.57 & 16.57 & 26.93 & 1.16 & 21.21 & 0.71 & 44.94 & 34.86 & \textbf{11.40} & 10.62 & 2459.83 & 1.40 & 2438.10 & 0.78 & 2442.22 & 35.29 & 130.47 & 13.57 \\
575 & rat575 & 15.22 & 11.72 & 20.79 & 1.16 & 16.86 & 0.64 & 26.87 & 34.89 & \textbf{12.86} & 10.79 & 61.42 & 1.27 & 52.30 & 0.70 & 96.90 & 35.15 & 22.03 & 13.44 & 24.29 & 1.16 & 13.76 & 0.63 & 40.85 & 34.86 & 15.22 & 11.99 & 1164.88 & 1.30 & 1181.68 & 0.83 & 1188.18 & 37.23 & 18.89 & 12.91 \\
654 & p654 & 13.46 & 3.12 & 16.05 & 1.66 & \textbf{10.00} & 0.79 & 31.47 & 49.20 & 12.60 & 9.96 & 90.59 & 1.78 & 75.16 & 1.06 & 111.95 & 49.20 & 22.64 & 13.16 & 244.10 & 1.63 & 424.19 & 0.81 & 237.82 & 39.53 & 13.46 & 3.58 & 6836.26 & 1.83 & 6712.33 & 1.06 & 6747.82 & 52.05 & 122.28 & 4.89 \\
657 & d657 & 12.11 & 15.35 & 21.63 & 1.73 & 20.41 & 0.92 & 32.85 & 54.37 & \textbf{10.85} & 14.35 & 58.42 & 1.85 & 51.88 & 1.06 & 82.44 & 52.15 & 21.43 & 22.07 & 27.11 & 1.71 & 15.75 & 0.91 & 45.88 & 52.17 & 12.11 & 15.71 & 1910.61 & 1.81 & 1897.91 & 1.09 & 1924.22 & 54.00 & 56.64 & 17.19 \\
724 & u724 & 15.14 & 19.62 & 30.32 & 2.24 & 16.58 & 1.09 & 30.50 & 73.13 & \textbf{13.19} & 19.94 & 50.12 & 2.36 & 43.28 & 1.12 & 70.49 & 72.34 & 21.07 & 25.95 & - & - & - & - & - & - & - & - & 2412.48 & 2.60 & 2392.98 & 1.32 & 2392.41 & 70.86 & 21.99 & 20.82 \\
783 & rat783 & \textbf{13.87} & 28.71 & 22.36 & 2.68 & 18.65 & 1.38 & 40.41 & 85.75 & 15.02 & 27.28 & 90.93 & 2.90 & 70.52 & 1.28 & 112.66 & 87.15 & 24.83 & 36.14 & - & - & - & - & - & - & - & - & 1555.43 & 3.15 & 1544.81 & 1.61 & 1549.19 & 88.35 & 18.31 & 29.50 \\
1002 & pr1002 & 10.56 & 44.67 & 23.79 & 5.42 & 18.55 & 2.21 & 32.97 & 185.11 & \textbf{9.58} & 33.65 & 91.32 & 5.67 & 87.77 & 2.68 & 123.49 & 181.32 & 30.12 & 66.86 & - & - & - & - & - & - & - & - & 2822.56 & 6.43 & 2825.17 & 2.63 & 2834.86 & 183.66 & 47.70 & 45.27 \\
1060 & u1060 & 11.07 & 60.23 & 20.78 & 6.40 & 18.57 & 2.49 & 34.08 & 214.20 & \textbf{10.61} & 58.68 & 85.03 & 6.82 & 74.33 & 3.16 & 118.33 & 213.65 & 28.08 & 84.87 & - & - & - & - & - & - & - & - & 3206.45 & 6.91 & 3221.65 & 2.99 & 3206.46 & 219.45 & 12.03 & 61.78 \\
1084 & vm1084 & \textbf{9.39} & 32.08 & 31.39 & 7.19 & 21.93 & 2.42 & 36.17 & 242.63 & 10.61 & 31.28 & 103.33 & 7.74 & 82.17 & 2.70 & 134.55 & 240.49 & 31.00 & 70.82 & - & - & - & - & - & - & - & - & 2789.94 & 7.90 & 2758.01 & 3.27 & 2835.95 & 236.90 & 25.96 & 35.87 \\
1173 & pcb1173 & 11.35 & 56.17 & 27.81 & 8.45 & 18.30 & 3.15 & 32.11 & 298.96 & \textbf{10.85} & 65.10 & 111.17 & 8.72 & 100.40 & 3.22 & 129.08 & 284.55 & 38.16 & 105.72 & - & - & - & - & - & - & - & - & 1919.67 & 9.25 & 1890.49 & 3.50 & 1900.43 & 298.40 & 21.49 & 60.24 \\
1291 & d1291 & \textbf{11.76} & 18.70 & 18.89 & 10.92 & 20.06 & 4.36 & 23.01 & 370.90 & 12.99 & 20.40 & 109.97 & 11.47 & 103.71 & 4.83 & 130.73 & 372.91 & 26.69 & 43.62 & - & - & - & - & - & - & - & - & 2957.12 & 11.85 & 2981.71 & 4.56 & 3005.66 & 394.95 & 58.32 & 29.14 \\
\hline
    \end{tabular}%
    } % End of resizebox
    \caption{ Comparison of the performances of the four decoding strategies with different predictors on dataset $\mathcal{T}$: results reported are optimality gap (Percentage). As the instance size increased, GNN-GLS became computationally intractable and was unable to produce valid heatmaps.}
    \label{tab:tsplib}
\end{sidewaystable}

\begin{table}[t]
\resizebox{\textwidth}{!}{
\begin{tabular}{llrrrrrrrrrrrrrrrrr}
\hline
 & $n$          &                   & 14          & 16            & 22            & 29         & 48        & 96       & 137       & 175       & 202       & 229       & 431       & 532        & 535       & 535        & 666       \\
Algorithm $\downarrow$ & Decoder $\downarrow$ & Instance $\rightarrow$                       & burma14 & ulysses16 & ulysses22 & bayg29 & att48 & gr96 & gr137 & si175 & gr202 & gr229 & gr431 & att532 & si535 & ali535 & gr666 & \shortstack{Avg. Vals\\ $\downarrow$} \\
\hline
\multirow{2}{4em}{Concorde} & & Opt.                           & 3323        & 6859          & 7013          & 1610       & 10628     & 55209    & 69853     & 21407     & 40160     & 134602    & 171414    & 27686      & 48450     & 202339     & 294358 & -   \\
& & Runtime (s)                     & 0           & 0.02          & 0.04          & 0.01       & 0.06      & 0.23     & 0.24      & 0.89      & 0.35      & 3.91      & 8.91      & 9.61       & 2.44      & 2.28       & 7.45    & 2.43  \\
\hline
\multirow{3}{4em}{Christofides}&  & Tour length & 3606        & 6983          & 7411          & 1716       & 12613     & 61951    & 76589     & 21954     & 43377     & 147311    & 186132    & 31219      & 50255     & 225134     & 326784   &  - \\
& &  Gap (\%)      & 8.52        & 1.81          & 5.68          & \textbf{6.58}       & 18.68     & 12.21    & \textbf{9.64}      & 2.56      & \textbf{8.01}      & \textbf{9.44}      & \textbf{8.59}      & 12.76      & 3.73      & \textbf{11.27}      & 11.02   & 8.70  \\
&  & Runtime (s)   & 0.00        & 0.00          & 0.00          & 0.00       & 0.01      & 0.02     & 0.04      & 0.08      & 0.25      & 0.34      & 2.01      & 2.29       & 2.56      & 3.00       & 5.64   & 1.08   \\
\hline
\multirow{12}{4em}{SoftDist} & \multirow{3}{4em}{G1} & Tour length  & 4122    & 7179      & 7654      & 1874   & 12140 & 67063& 89865 & 22306 & 48259 & 167203& 214454& 34227  & 49959 & 249587 & 352839 & -\\
& & Gap (\%)            & 24.04       & 4.67          & 9.14          & 16.40      & \textbf{14.23}     & 21.47    & 28.65     & 4.20      & 20.17     & 24.22     & 25.11     & 23.63      & 3.11      & 23.35      & 19.87  & 17.48 \\
& & Runtime (s)          & 0.00        & 0.00          & 0.00          & 0.00       & 0.00      & 0.00     & 0.00      & 0.01      & 0.01      & 0.02      & 0.11      & 0.20       & 0.21      & 0.21       & 0.39 & 0.08     \\
\cline{2-19}
&  \multirow{3}{4em}{G2} & Tour length   & 3876    & 7771      & 7965      & 1831   & 13093 & 63278& 80274 & 21925 & 46018 & 152241& 194880& 32554  & 49159 & 232568 & 335261 & -\\
&& Gap (\%)            & 16.64       & 13.30         & 13.57         & 13.73      & 23.19     & 14.62    & 14.92     & \textbf{2.42}      & 14.59     & 13.10     & 13.69     & 17.58      & \textbf{1.46}      & 14.94      & 13.90  & 13.44   \\
&& Runtime (s)           & 0.00        & 0.00          & 0.00          & 0.00       & 0.00      & 0.00     & 0.00      & 0.00      & 0.01      & 0.00      & 0.02      & 0.04       & 0.03      & 0.03       & 0.05   &  0.01   \\
\cline{2-19}
&  \multirow{3}{4em}{BS} & Tour length    & 3953    & 6971      & 7278      & 2179   & 12377 & 65501& 81968 & 22258 & 50205 & 176573& 237226& 38930  & 50332 & 276774 & 395286 &  -\\
&& Gap (\%)           & 18.96       & 1.63         &\textbf{3.78}          & 35.34      & 16.46     & 18.64    & 17.34     & 3.98      & 25.01     & 31.18     & 38.39     & 40.61      & 3.88      & 36.79      & 34.29  &  21.75  \\
&& Runtime (s) & 0.00        & 0.00          & 0.00          & 0.00       & 0.00      & 0.02     & 0.04      & 0.07      & 0.10      & 0.14      & 0.77      & 1.44       & 1.42      & 1.45       & 2.68   &   1.04  \\
\cline{2-19}
&  \multirow{3}{4em}{CHR$^+$} & Tour length & 3750     & 7098      & 7411      & 1716   & 12486 & 61951& 76589 & 21972 & 43377 & 150731& 190024& 31084  & 50050 & 227139 & 325623 & -\\
&& Gap(\%)         & 12.85       & 3.48          & 5.68          & \textbf{6.58}       & 17.48     & 12.21    & \textbf{9.64}      & 2.64      & \textbf{8.01}      & 11.98     & 10.86     & \textbf{12.27}      & 3.30      & 12.26      & \textbf{10.62}  &  9.32  \\
&& Runtime (s)         & 0.00        & 0.00          & 0.00          & 0.00       & 0.01      & 0.02     & 0.05      & 0.09      & 0.27      & 0.26      & 2.08      & 2.45       & 2.13      & 2.62       & 5.66  & 1.04    \\
\hline
\multirow{12}{4em}{GNNGLS} & \multirow{3}{4em}{G1} & Tour length            & 3998    & 6943      & 7408      & 2005   & 13698 & 76385& 100791& 22263 & 63113       & 63113     &63113      & 35516       & 35516      & 733650       &-     & - \\
&& Gap(\%)                    & 20.31       & 1.22          & 5.63          & 24.53      & 28.89     & 38.36    & 44.29     & 4.00      & 57.15      & 35.10      & 47.40      & 28.28       & 3.50      & 262.58      &-  &   43.19   \\
&& Runtime(s)               & 0.00        & 0.00          & 0.00          & 0.00       & 0.00      & 0.00     & 0.00      & 0.01      & 0.01    & 0.02    & 0.11     & 0.20      & 0.22& 0.21      & -   & 0.06  \\
\cline{2-19}
&  \multirow{3}{4em}{G2} & Tour length           & 3381    & 7989      & 7958      & 1974   & 13884 & 61482& 84868 & 21957 & 57008     & 164788      & 235945      & 34002       & 49413      & 792445     & -    & -   \\
&& Gap (\%)                 & \textbf{1.75}        & 16.47         & 13.47         & 22.61      & 30.64     & \textbf{11.36}    & 21.50     & 2.57      & 41.95     & 22.43     & 37.65      & 22.81       & 1.99      & 291.64       &-    & 38.84  \\
&& Runtime (s)          & 0.00       & 0.00          & 0.00          & 0.00       & 0.00      & 0.00     & 0.00      & 0.00      & 0.01     & 0.01     & 0.03     & 0.07      & 0.07     & 0.05      & -   & 0.02  \\
\cline{2-19}
&  \multirow{3}{4em}{BS} & Tour length          & 3506    & 6900      & 7728      & 2186   & 12223 & 85666& 110524& 22447 & 67095      & 192674      & 237335      & 41957       & 50266      & 2877561       &- & -     \\
&& Gap (\%)                  & 5.51        & \textbf{0.60}         & 10.20         & 35.78      & 15.01     & 55.17    & 58.22     & 4.86      & 67.07      & 43.14      & 38.47      & 51.55       & 3.75      & 1322.15       &-   & 122.07   \\
&& Runtime (s)                  & 0.00        & 0.00          & 0.00          & 0.00       & 0.01      & 0.02     & 0.04      & 0.07      & 0.10     & 0.14     & 0.75     & 1.37      & 1.43     & 1.31    & -  & 0.38 \\
\cline{2-19}
&  \multirow{3}{4em}{CHR$^{+}$} & Tour length        & 3606    & 6983      & 7411      & 1716   & 12520 & 61951& 76589 & 21954 & 43377      & 147311   &    186132      & 31219       & 50255      & 225636       &- & -      \\
&& Gap (\%)                 & 8.52        & 1.81          & 5.68          & \textbf{6.58}       & 17.80     & 12.21    & \textbf{9.64}      & 2.56      & \textbf{8.01}      & \textbf{9.44 }     & \textbf{8.59}      & 12.76       & 3.73      & 11.51       &-   & \textbf{8.49}   \\
&& Runtime (s)                & 0.00       & 0.00          & 0.00          & 0.00       & 0.01      & 0.03     & 0.05      & 0.19      & 0.27     & 0.27       & 2.19      & 2.63     & 2.69      & 2.98 & - & 0.80     \\
\hline
\end{tabular}
}
\caption{Results on non-Euclidean (metric) TSP instances for two predictors: GNNGLS and SoftDist.}\label{tab:T_M}
\end{table}

\subsection{2-opt improvement}\label{app:sec:2opt}
We evaluate the impact of 2-opt in $\chrp$, G2, and BS. G1 is omitted due to its documented sub-optimality relative to G2; similarly, we restrict our predictors to SoftDist and DIFUSCO. The results for dataset $\mathcal{U}$ are summarized in Table~\ref{tab:2opt}. $\chrp$ demonstrates superior performance across all dimensions except $n = 50$, where BS achieves a lower gap—primarily due to a superior initialization tour. Crucially, the number of 2-opt swaps required to reach local optimality is comparable between $\chrp$ and the plain Christofides baseline. When considering the runtimes reported in Table~\ref{tab:decoders_and_predictors}, it is evident that $\chrp$ has, on average, the same running time as Christofides. Consequently, the $\chrp$ + 2-opt pipeline yields significantly higher solution quality than Christofides at an equivalent computational cost.

\begin{table}[t]
\centering
\resizebox{\textwidth}{!}{%
\begin{tabular}{cccccccccccccccccc}
\hline
&  $n \; \rightarrow$ & \multicolumn{2}{c}{50} & \multicolumn{2}{c}{100} & \multicolumn{2}{c}{500} & \multicolumn{2}{c}{1000} \\
& & Gap (\%) & Swaps & Gap (\%) & Swaps & Gap (\%) & Swaps & Gap (\%) & Swaps \\
\hline
& Christofides &  3.93 $\pm$ 2.18 & 10 $\pm$ 4 & 4.52 $\pm$ 1.70 & 20 $\pm$ 5 & 5.01 $\pm$ 0.78 & 104 $\pm$ 11 & 5.00 $\pm$ 0.56 & 207 $\pm$ 17 &  \\
\hline
\multirow{2}{*}{$\chrp$} & SoftDist & 3.80 $\pm$ 2.19 & 10 $\pm$ 4 & 4.44 $\pm$ 1.68 & 20 $\pm$ 5 & \textbf{4.99 $\pm$ 0.77} & 104 $\pm$ 11 & \textbf{4.98 $\pm$ 0.56} & 206 $\pm$ 17
 \\
 & DIFUSCO & 0.10 $\pm$ 0.27 & 0 $\pm$ 1 & \textbf{0.23 $\pm$ 0.33} & 1 $\pm$ 1 & 5.65 $\pm$ 0.88 & 149 $\pm$ 15 & 7.26 $\pm$ 0.82 & 434 $\pm$ 33
 \\
\hline
\multirow{2}{*}{G2} & SoftDist & 3.95 $\pm$ 2.60 & 13 $\pm$ 6 & 4.74 $\pm$ 2.13 & 25 $\pm$ 8 & 5.19 $\pm$ 0.98 & 117 $\pm$ 20 & 5.36 $\pm$ 0.68 & 244 $\pm$ 30
 \\
 & DIFUSCO & 0.15 $\pm$ 0.48 & 1 $\pm$ 2 & 0.43 $\pm$ 0.93 & 3 $\pm$ 5 & 6.92 $\pm$ 1.30 & 238 $\pm$ 28 & 9.20 $\pm$ 0.95 & 1115 $\pm$ 52
 \\
\hline
\multirow{2}{*}{BS} & SoftDist & 5.73 $\pm$ 3.35 & 15 $\pm$ 7 & 7.25 $\pm$ 2.63 & 36 $\pm$ 11 & 8.55 $\pm$ 1.27 & 219 $\pm$ 25 & 7.99 $\pm$ 0.89 & 396 $\pm$ 36
 \\
 & DIFUSCO & \textbf{0.09 $\pm$ 0.25} & 0 $\pm$ 1 & 0.25 $\pm$ 0.45 & 1 $\pm$ 2 & 8.68 $\pm$ 1.40 & 461 $\pm$ 43 & 9.18 $\pm$ 0.89 & 1448 $\pm$ 72
 \\
\hline
\end{tabular}
}
\caption{Comparison of the performances of G2, BS, and $\chrp$ enhanced with the 2-opt swaps, with different predictors on dataset $\mathcal{U}$: Integrality gap $(\%)$ and number of 2-opt swaps ($\pm$ standard deviation).}\label{tab:2opt}
\end{table}

\end{document}